\tikzset{
    tripledata/.style args={[#1] in [#2] in [#3]}{
        #1,preaction={preaction={draw,#3},draw,#2}
    }
}
\tikzset{triple/.style={tripledata={[line width=.4pt,black] in
      [line width=3pt,white] in
      [line width=3.8pt,black]}}}
\tikzset{squiggly/.style={decorate, decoration={snake, segment length=5pt, amplitude=1pt}}}
\tikzset{standard/.style={decorate, decoration={snake, segment length=4.6pt, amplitude=1pt}, -{>[length=3pt,width=6pt]}, line width=1.2pt, line cap=round}}
\tikzset{nonstandard/.style={decorate, decoration={snake, segment length=4.6pt, amplitude=1pt}, -{>[length=2pt,width=4pt]}, line width=0.4pt, line cap=round}}
\tikzset{point/.style={draw, circle, black, fill, inner sep=1.5pt, label={}}}
\newcommand\cat[1]{\ensuremath{\mathsf{#1}}\xspace}
\newcommand{\Hom}{\mathsf{Hom}}
\newcommand{\src}{\mathsf{src}}
\newcommand{\tgt}{\mathsf{tgt}}
\newcommand{\sub}[1]{[ \hspace{-2.1pt} [ #1 ] \hspace{-2.1pt} ]}
\newcommand{\ty}{\mathsf{ty}}
\renewcommand{\dim}{\mathsf{dim}}
\renewcommand{\sslash}{\mathord{/\mkern-5mu/}}
\newcommand\squig{\mathrel{\tikz{
  \draw [decorate, decoration={snake, segment length=4.75pt, amplitude=1pt}, -{>[length=2pt,width=4pt]}, line cap=round] (-.02,0) to +(.266,0);
}}}
\newcommand\xsquig[1]{\overset{\scriptstyle{\mathrm{#1}}}{\squig}}
\newcommand\bsquig{\mathrel{\begin{tikzpicture}
   \draw [decorate, decoration={snake, segment length=5pt, amplitude=1pt}, -{>[length=3pt,width=5pt]}, line width=1.1pt, line cap=round] (-.033,0) to +(.28,0);
\end{tikzpicture}}}
\newcommand\xbsquig[1]{\overset{\scriptstyle{\mathrm{#1}}}{\bsquig}}
\renewcommand\r{\ensuremath{\mathsf{r}}}
\renewcommand\t{\ensuremath{\mathsf{t}}}
\newcommand\rt{\ensuremath{\mathsf{rt}}}
\newcommand\s{\ensuremath{\mathsf{s}}}
\newcommand\rts{\ensuremath{\mathsf{rts}}}
\newcommand\comp{\ensuremath{\mathsf{comp}}}
\newcommand\uD{\ensuremath{\mathrm{D}}}
\newcommand\op{\mathrm{op}}
\newcommand\tm{\mathsf{tm}}
\newcommand\proofstep{

\smallskip
\noindent
\,\,$\bullet$\,\,}
\newcommand{\var}{\mathsf{var}}
\newcommand{\coh}{\ensuremath{\mathsf{coh}\,}}
\newcommand{\Dyck}{\mathsf{Dyck}}
\newcommand{\Peak}{\mathsf{Peak}}
\newcommand{\Type}{\mathsf{Type}}
\newcommand{\Term}{\mathsf{Term}}
\newcommand{\Sub}{\mathsf{Sub}}
\newcommand{\Ctx}{\mathsf{Ctx}}
\newcommand{\FV}{\mathsf{FV}}
\newcommand{\supp}{\mathsf{supp}}
\newcommand{\LM}{\mathsf{LM}}
\newcommand\N{\ensuremath{\mathbb{N}}}
\newcommand\id{\ensuremath{\mathrm{id}}}
\renewcommand\i{\ensuremath{\mathbb{1}}}
\renewcommand\N{\ensuremath{\mathbb{N}}}
\newcommand\Catt{\ensuremath{\mathsf{Catt}}\xspace}
\newcommand\cCat{\mathcal{C}\hspace{-.9pt}at}
\newcommand\su{\ensuremath{\mathsf{su}}}
\newcommand\pd{\ensuremath{\mathsf{pd}}}
\newcommand\Cattsupd{\ensuremath{\smash{\Catt_\su^\pd}}\xspace}
\mathchardef\mhyphen="2D
\setlist[itemize]{leftmargin=23pt}
\declaretheorem{theorem}
\declaretheorem[sibling=theorem]{lemma}
\declaretheorem[sibling=theorem,style=definition]{definition}
\declaretheorem[sibling=theorem,style=definition]{example}
\def\calign@preamble{%
   &\hfil\strut@
    \setboxz@h{\@lign$\m@th\displaystyle{##}$}%
    \ifmeasuring@\savefieldlength@\fi
    \set@field
    \hfil
    \tabskip\alignsep@
}
\let\cmeasure@\measure@
\patchcmd\cmeasure@{\divide\@tempcntb\tw@}{}{}{}
\patchcmd\cmeasure@{\divide\@tempcntb\tw@}{}{}{}
\patchcmd\cmeasure@{\ifodd\maxfields@
  \global\advance\maxfields@\@ne
  \fi}{}{}{}
\newenvironment{calign}
{%
  \let\align@preamble\calign@preamble
  \let\measure@\cmeasure@
  \align
}
{%
  \endalign
}
\begin{document}

\title{\bf A Type Theory for Strictly Unital \texorpdfstring{$\infty$-}{Infinity }Categories}

\author{Eric Finster\footnote{University of Birmingham, \texttt{e.l.finster@bham.ac.uk}}~,
David Reutter\footnote{University of Hamburg, \texttt{david.reutter@uni-hamburg.de}}~,
Alex Rice\footnote{University of Cambridge, \texttt{alex.rice@cl.cam.ac.uk}}~
and Jamie Vicary\footnote{University of Cambridge, \texttt{jamie.vicary@cl.cam.ac.uk}}}

\maketitle

\begin{abstract}
  We use type-theoretic techniques to present an
  algebraic theory of {$\infty$-categories with strict units}.
  Starting with a known type-theoretic presentation of fully weak $\infty$\-categories, in which terms denote valid operations, we extend the theory with a non-trivial definitional equality. This forces some operations to coincide strictly in any model, yielding the strict unit behaviour.

We make a
  detailed investigation of the meta-theoretic properties of this
  theory. We give a reduction relation that generates definitional equality, and prove that it is confluent and terminating, thus yielding the first decision procedure for equality in a strictly-unital setting.
Moreover, we show that our definitional equality relation identifies all terms in a disc context, providing a point
  comparison with a previously proposed definition of strictly unital
  $\infty$-category. We also prove a conservativity result, showing
  that every operation of the strictly unital theory indeed arises
  from a valid operation in the fully weak theory. From this, we infer  that strict unitality is a property of an $\infty$\-category rather than additional structure.
\end{abstract}

\section{Introduction}

\paragraph{Overview}
 One characteristic feature of logical systems
based on type theory, and one might say constructive mathematics more
generally, is \emph{proof-relevance}.  This is to say that these
systems manipulate mathematical proofs as first-class entities, in
contrast with classical systems which regard proofs as part of
meta-mathematics.  When applied to logical connectives like
implication, conjunction, disjunction, and universal and existential
quantification, proof-relevant principles lead to the connection
between constructive mathematics and programming languages via the
propositions-as-types paradigm.

When applied to \emph{equality}, however, the notion of proof
relevance is considerably more subtle.  For example, the identity
types proposed by Martin-L\"of in his eponymous type theory sometimes
behave in ways which can be surprising and unintuitive to newcomers.
When working in such theories, one quickly arrives at
situations where, given two proofs that two elements are equal, one is
faced with the task of showing that \emph{these two proofs are
  themselves equal}, a task which may seem unusual to a classically
trained mathematician.  Moreover, there are even models where such equality proofs can be unequal~\cite{hofmann1998groupoid}. We say in this case that the principle of \emph{uniqueness of identity
  proofs} fails. Other principles which are often taken for granted
classically, such as \emph{function extensionality} and the existence
of quotients, may also unavailable.

But the subtleties of proof-relevant equality are not confined to
computer science. Indeed, classical mathematicians have been
struggling with them for decades, if from a less foundational point of
view.  Topologists, for example, quickly realized that it was useful
to consider algebraic structures on topological spaces where the
axioms held not at the point-set level, but were rather represented as
paths in the underlying space.  The study of those structures and
properties of spaces which depend only on the algebra of paths is
called \emph{homotopy theory}~\cite{whitehead2012elements, boardman2006homotopy}.

In working with these structures, mathematicians came to discover a
major difficulty that arises when handling algebraic objects where
equality is given as data, rather than a mere property.  When the
axioms of an algebraic structure constitute additional structure, then
this structure must \emph{itself} be subject to ``higher'' axioms,
colloquially known as \emph{coherence data}.  In a proof-relevant
setting, this coherence data will itself generate further coherence
data, in a tower of increasing complexity, yielding a combinatorial
explosion.  This is difficult to handle, and it is a non-trivial
problem to even \textit{describe} well-behaved algebraic theories in
proof-relevant settings.

This difficulty is perhaps best exemplified by the search for a
well-behaved theory of higher dimensional category theory~\cite{leinster}.
By now this is an active area of research, with many
known techniques and proposed definitions.  Moreover, the theory has
applications in areas as diverse as topology, manifold theory,
representation theory, algebra and even fundamental
physics~\cite{baez2010physics}.

The realization that the techniques and ideas which apply in the study
of homotopy theory and higher category theory apply equally well to
the proof-relevant versions of equality found in computer science is
the central intuition of \emph{Homotopy Type Theory}~\cite{hottbook}.
And indeed, importing ideas from homotopy theory has proven fruitful
for understanding the proof-relevant equality of type theory, leading
to new ideas and resolutions of some of the problems described
above~\cite{cohen2016cubical}.

But conversely, ideas from logic and computer science may also be
applied to the description and manipulation of higher dimensional
structures. While classically such structures are often developed from
\emph{combinatorial} descriptions of their underlying data (for
example, using \emph{simplices}~\cite{lurie2009higher}), the
connections with logic suggest an alternative \emph{syntactic}
approach, closer to that of universal algebra.  The problem of
presenting higher-dimensional categories, for example, has recently
been given a syntactic description in the type theory
\Catt~\cite{Finster2017}. This yields an explicit machine-checkable syntax for higher-categorical operations, which applies in principle in arbitrary dimensions.

But beyond the initial problem of describing proof-relevant structures
lies a deeper one: actually using and the resulting definitions in
practice.  Fans of computer formalization can surely attest to the
fact that proof-relevant theories can sometimes be extremely verbose.  This can
be especially frustrating, because it is often intuitively clear that
much of the data ought to be \emph{redundant}, and hence recoverable
in some automatic way.  This is not lost on mathematicians, and many
theorems in the literature~\cite{GPS, mac2013categories} show that certain
higher-categorical structures can be ``strictified''.  The search for
presentations of higher-categorical structures in which certain
operations hold strictly, but which retain the same expressive power,
goes by the name \emph{semi-strict} higher category theory.

\paragraph{Our Contribution}  In this article, we apply
type-theoretic ideas to reduce the complexity of the tower of
coherence data in the definition of higher-dimensional category.  Our
approach is to equip the type theory $\Catt$ \cite{Finster2017} with a
non-trivial definitional equality, written ``$=$'', which strictifies
the part of the theory that handles composition with units, and its
associated higher coherence data.  We call this new theory
$\Catt_\su$, standing for ``\Catt with strict units''.

Since the terms of $\Catt$ describe
higher-categorical operations, this definitional equality yields a new theory where families of  operations coincide.
For example, in a fully weak higher category, the equation
\[ f \circ \id_x = f \]
would need to be witnessed by explicit coherence data. But in $\Catt_\su$ these terms are definitionally equal, and so this unit law holds on-the-nose, and the associated tower of higher-dimensional coherence data trivialises.

We give a detailed analysis of definitional equality in $\Catt_\su$, via a reduction strategy that reduces the complexity of terms. We show that this terminates after finitely many steps, and hence yields for every term $t$ a unique normal form $N(t)$. We then show that this generates definitional equality, in the sense that $t=t'$ if and only if $N(t) \equiv N(t')$, where ``$\equiv$'' represents syntactic equality up to $\alpha$\-equivalence. The existence of distinct normal forms  also shows our theory has non-trivial models.

Definitional equality is hence decidable, and the type theory can be implemented. Our OCaml implementation is made available online at the following address:
\[\text{\url{http://github.com/ericfinster/catt.io/tree/v0.1}}\]

Our results yield the first type-theoretic definition of strictly unital $\infty$\-categories. Previous work has suggested that in such a theory, any pair of valid terms over a disc context should be equal. We prove this property directly for our definition, giving a useful ``sanity check''.

Since our theory $\Catt_\su$ is obtained by adding a definitional equality relation to $\Catt$, there is a trivial mapping $K:\tm(\Catt) \to \tm(\Catt_\su)$.  That is, every $\Catt$ term is automatically a $\Catt_\su$ term. This raises the important question: for every $\Catt_\su$ term $t$, can we find a $\Catt$ term $t'$ such that  $t= K(t')$? We show that this holds for terms over pasting diagrams, thus verifying a form of \textit{conservativity}: every operation in the semi-strict theory corresponds to one in the original weak theory in this case. We apply this result to show that for a weak $\infty$\-category, having strict units is a \emph{property}, rather than extra structure.

Finally, we use our  implementation to construct two substantial examples of terms in our theory, inspired by important algebraic structures in homotopy theory.

\vspace{3pt}
\noindent
-- The \emph{Eckmann-Hilton move} or \emph{braiding}, is a 3-dimensional term that has been previously constructed in \Catt~\cite{Finster2017}. We apply our reduction relation to find its $\Catt_\su$ normal form, obtaining a proof object which is over 50 times smaller than the original.

\vspace{3pt}
\noindent
-- The second example, called the \emph{syllepsis}, is a 5-dimensional algebraic object that had not been explicitly constructed in a pure language of path types prior to this work being advertised.\footnote{One could say that it had been indirectly constructed in homotopy type theory, through a calculation of \(\pi_4(S^3)\)~\cite{brunerie2016homotopy}. Since our work has been completed and advertised, pure path models of the syllepsis have now been constructed by some members of the homotopy type theory community~\cite{Syllepsis}.} We give a formalization of the syllepsis in $\Catt_\su$. Due to its complexity, we have not been able to construct the syllepsis in $\Catt$.

\vspace{3pt}
\noindent
This demonstrates our main research goal: to simplify the syntax of higher category theory so that proofs become smaller, and proof construction becomes easier.

\subsection{Related work}
\label{sec:relatedwork}

\paragraph{Homotopy Type Theory}While our work is not directly concerned with Homotopy Type Theory (HoTT)
\cite{hottbook}, it is nonetheless heavily inspired by developments that these ideas have provoked in the type theory
community.  Indeed, the definition of $\infty$-category presented in
\cite{Finster2017}, and developed in the present  work, was based on a similar
definition of $\infty$\-groupoid~\cite{brunerie2016homotopy}, which in turn can be seen as a distillation of exactly
that part of Martin-L\"{o}f's identity elimination principle which
causes types to behave as higher-dimensional groupoids.

We emphasize that our theory is relevant for directed higher
categories, while HoTT is currently only able to reason about higher
groupoids. At the same time, HoTT is a rich logical system with sums,
products and other higher type formers, while our theory is more
restricted, providing only the type formers for building categorical
operations.

\paragraph{Higher Category Theory}
Our work extends the type theory $\Catt$ of Finster and Mimram, presented at LICS~2017~\cite{Finster2017}. Its models yield an algebraic notion of weak $\infty$-category which agrees with a definition due to Maltsiniotis~\cite{Maltsiniotis}, which itself is a close cousin of Batanin's~\cite{Batanin98} and Leinster's~\cite{leinster} definitions; see~\cite{Ara} for a direct comparison. 

A previous definition of strictly unital weak $\infty$-category has been given by Batanin, Cisinski and Weber~\cite{Batanin2013}, using the mathematical language of operads. That theory has two main axiom classes, \emph{disk reduction} and \emph{unit compatibility}, and we can compare these to our three main generators for definitional equality, \emph{prune}, \emph{disk} and \emph{endo}. While the precise relationship between the theories is not completely clear, we make the following observations.

\vspace{3pt}
\noindent
-- Their \textit{unit compatibility} axiom is close to our \emph{prune} generator, a key point of similarity between the theories.

\vspace{3pt}
\noindent
-- Their \emph{disk reduction} axiom simply requires that terms trivialize over disks, while  for us the corresponding statement is a theorem (see  Theorem~\ref{thm:disctrivialization}.) That is, an {axiom} of BCW  becomes an \textit{emergent property} in our setting.

\vspace{3pt} \noindent -- Our \emph{endo} axiom appears to have no parallel in BCW's theory, and we are able to exhibit a pair of terms which are definitionally equal here, but we believe would \textit{not} be identified in their theory (see page~\pageref{ex:bcw} here.) In this sense, we claim that our theory is likely to be stricter.

\vspace{3pt}

In general, our approach emphasizes an explicit syntax for the
operations of our theory which we are able to manipulate on a
computer, while the BCW approach might be described as more
``semantic'', characterizing the theory by universal properties, but
without an explicit description of the resulting operations.

\section{The Type Theories \texorpdfstring{$\Catt$}{Catt} and \texorpdfstring{$\Catt_{\su}$}{Cattsu}}
\label{sec:typetheory}

It will be convenient to construct our theory in three layers.  We
begin with the raw syntax and basic rules for contexts, types and
substitutions, leaving out the term forming rules.  Most of this
material is standard, and we make sure to point out any idiosyncracies
of notation as we proceed.  After introducing the notion of
\emph{pasting context} we then present the term forming structure of
$\Catt$.  Finally, we introduce some combinatorial material necessary
to describe our equality relation on terms, culminating in the
definition of the theory $\Catt_{\su}$.

\subsection{The Base Theory \Catt}

Here we present the type theory \Catt defined by Finster and Mimram~\cite{Finster2017}.

\paragraph{Raw Syntax}We fix an infinite set $V$ of variables, and use lowercase Roman ($x,y,\dots$) and Greek ($\alpha, \beta, \ldots$) letters to refer to its elements.  The raw syntax of $\Catt$ consists of four syntactic
classes: \emph{contexts}, \emph{types}, \emph{terms} and
\emph{substitutions} (denoted $\Ctx$, $\Type$, $\Term$ and $\Sub$,
respectively).  These classes are defined by the rules in Figure
\ref{fig:syntax}. The arrow constructor \(s \to_A t\) for types represents the type of directed paths from \(s\) to \(t\) where \(s\) and \(t\) have type \(A\). This is similar to the equality type \(s =_A t\) in Martin-L\"of Type Theory which can be thought of as the type of undirected paths between \(s\) and \(t\). Observe that both contexts and substitutions
appear in the raw syntax of terms.

We write $\equiv$ for syntactic equality of the
various syntactic classes up to $\alpha$\-equivalence.

\begin{figure}[b!]
  \vspace{-0.5cm}
  \begin{calign}
\nonumber
    \inferrule{ }{\emptyset : \Ctx}
    &
    \inferrule{\Gamma : \Ctx \\ A : \Type}{\Gamma, A : \Ctx}
\\[5pt]\nonumber
    \inferrule{ }{\star : \Type}
    &
    \inferrule{A : \Type \\ s : \Term \\ t : \Term}{s \to_A t : \Type}
\\[5pt]\nonumber
    \inferrule{v : V}{v : \Term}
    &
    \inferrule{\Gamma : \Ctx \\ A : \Type \\ \sigma : \Sub}{\coh(\Gamma : A)[\sigma] : \Term}
\\[5pt]\nonumber
    \inferrule{ }{\langle \rangle : \Sub}
    &
    \inferrule{\sigma : \Sub \\ t : \Term}{\langle \sigma , t \rangle : \Sub}
    \end{calign}
  \caption{Raw syntax}
  \label{fig:syntax}
\end{figure}

\paragraph{Free Variables} The \emph{free variables} of elements of
each syntactic class are defined by induction on the structure as follows, where $x \in V$:%
\begin{align*}
  \FV(\emptyset) &= \emptyset &
  \FV(\Gamma, x : A) &= \FV(\Gamma) \cup \{ x \}\\
  \FV(\star) &= \emptyset &
  \FV(s \to_A t) &= \FV(A) \cup \FV(s) \cup \FV(t) \\
  \FV(x) &= \{ x \} &
  \FV(\coh(\Gamma : A)[\sigma]) &= \FV(\sigma)\\
  \FV(\langle \rangle) &= \emptyset &
  \FV(\langle \sigma , x \mapsto t \rangle) &= \FV(\sigma) \cup \FV(t) \\
       \end{align*}

\vspace{-17pt}
\paragraph{Dimension}We define the \emph{dimension} of a type by
induction:
\begin{align*}
\dim \, \star &= -1 & \dim \, (s \to_A t) &= \dim \, A + 1
\intertext{We extend this notion to contexts by asserting that the dimension of a
context is one more than the maximum of the dimension of the types
occurring in that context.}
  \dim \, \emptyset &= -1 &\dim \, (\Gamma, x : A) &= \max(\dim \, \Gamma , \dim \, A + 1)
\end{align*}

\vspace{-4pt}
\paragraph{Term Substitutions and Compositions}Although the terms of $\Catt$ are always in normal form, we will need
to perform actual substitutions on terms during type-checking.  We
therefore also define a semantic form of substitution which calculates
by induction on the structure of terms.  We denote this operations by
$\sub{-}$ in order to distinguish if from the $[-]$ appearing in
coherence terms, which is part of the syntactic structure.  This
operation is defined as follows:
\begin{align*}
  \star\sub{\sigma} &= \star \\*
  (s \to_{A} t)\sub{\sigma} &=  s\sub{\sigma} \to_{A\sub{\sigma}} t\sub{\sigma} \\*
  \coh (\Gamma : s \to t)[\tau]\sub{\sigma} &= \coh (\Gamma : s \to t)[\tau \circ \sigma] \\*
  x\sub{\sigma} &= t \qquad \text{if $x \mapsto t \in \sigma$}
       \end{align*}
Composition of substitutions, written ``$\circ$'', is defined mutually recursively as follows:
\begin{align*}
  \langle \rangle \circ \sigma &= \langle \rangle
  &
  \langle \tau , t \rangle \circ \sigma &= \langle \tau \circ \sigma , t\sub{\sigma} \rangle
\end{align*}
For every context \(\Gamma\) we also have an identity substitution \(\id_\Gamma\), mapping each variable to itself. By simple induction we get the following proposition.

\begin{restatable}{proposition}{cattiscat}\label{prop:catt-is-cat}
  Contexts and substitutions form a category; that is, the following syntactic equations hold, for context \(\Gamma\), type \(A\) and term \(s\) of \(\Gamma\), and composable substitutions \(\mu, \sigma, \tau\):
  \begin{align*}
    A\sub{\sigma \circ \tau} &\equiv A\sub\sigma\sub\tau
    &A\sub{\id_\Gamma} &\equiv A\\
    s\sub{\sigma \circ \tau} &\equiv s\sub\sigma\sub\tau
    &s\sub{\id_\Gamma} &\equiv s\\
    \mu \circ (\sigma \circ \tau) &\equiv (\mu \circ \sigma) \circ \tau & \id_\Delta \circ \mu &\equiv \mu \equiv \mu \circ \id_\Gamma
  \end{align*}
\end{restatable}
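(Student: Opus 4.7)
The plan is to prove all six syntactic equations simultaneously by structural induction, with the induction organized on the total syntactic size (number of constructors) of the objects appearing in each equation. Types, terms and substitutions are defined by mutual induction in Figure~\ref{fig:syntax}, and the operations $A\sub\sigma$, $s\sub\sigma$ and $\sigma\circ\tau$ are defined by straightforward structural recursion, so each equation reduces either to a strictly smaller instance of itself or to one of the other five equations applied to strictly smaller arguments.

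For the substitution-composition laws, $A = \star$ is an immediate base case, and the arrow case $A = s\to_B t$ reduces componentwise to the inductive hypothesis applied to $B$, $s$ and $t$. For terms, the variable case $s = x$ is handled by observing that for $\sigma = \langle \sigma', x\mapsto u\rangle$ the definition of $\sigma\circ\tau$ places $u\sub\tau$ in the slot for $x$, so the lookup agrees with $x\sub\sigma\sub\tau = u\sub\tau$ by definition. The coherence case $s = \coh(\Delta:A)[\mu]$ unfolds on both sides to a coherence term whose substitution argument is either $\mu\circ(\sigma\circ\tau)$ or $(\mu\circ\sigma)\circ\tau$, and so reduces to the associativity law for composition applied to the strictly smaller substitution $\mu$. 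Associativity itself is then proved by induction on its first argument: the empty case is trivial, and $\langle \mu', x\mapsto u\rangle\circ(\sigma\circ\tau)$ reduces to the conjunction of associativity on $\mu'$ and the substitution-composition equation for the term $u$. The three identity laws are handled by exactly parallel inductions, using only the fact that $\id_\Gamma$ maps each variable to itself.

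The only real subtlety is the cross-dependence between the coherence-term case of $s\sub{\sigma\circ\tau}\equiv s\sub\sigma\sub\tau$ and the associativity law $\mu\circ(\sigma\circ\tau)\equiv(\mu\circ\sigma)\circ\tau$: the former calls the latter on a substitution strictly smaller than the term, while the latter calls the former on a term strictly smaller than the substitution. This is handled cleanly by a single simultaneous induction on total constructor count; with the well-founded measure fixed in advance, no subtle ordering argument is required, and each remaining case is a one-line unfolding followed by an application of the appropriate inductive hypothesis.
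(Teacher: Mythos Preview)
Your proposal is correct and is exactly the approach the paper takes: the paper's entire proof is the single sentence ``This is a simple mutual induction over the syntax,'' and you have spelled out that induction in detail. Your use of a total-size measure to handle the cross-dependence between the coherence case of $s\sub{\sigma\circ\tau}\equiv s\sub\sigma\sub\tau$ and associativity is a clean way to make the mutual induction well-founded, but no additional ideas are needed beyond what the paper's one-line proof indicates.
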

\begin{proof}
  This is a simple mutual induction over the syntax.
\end{proof}

The basic typing judgements for
contexts, types and substitutions are given in Figure~\ref{fig:basic-rules}.  The rules are standard for a dependent type
theory.  Note that our types consist of just a single base type
denoted $\star$, and a formation rule analogous to the formation rule
for identity types in Martin L\"{o}f Type Theory. The fact that this
rule captures faithfully the notion of globular set is at the heart of
the connection between type theory and higher category theory, and is
the basis of this syntactic description of $\infty$-categories.

\begin{figure}[b!]
  \vspace{-.5cm}
\begin{calign}
\nonumber
    \inferrule{ }{\emptyset \vdash}
    &
    \inferrule{\Gamma \vdash \\ \Gamma \vdash A : \Type}{\Gamma, x : A \vdash}
    \\[5pt]
\nonumber
    \inferrule{\Gamma \vdash}{\Gamma \vdash \star : \Type}
    &
    \inferrule{\Gamma \vdash A : \Type \\ \Gamma \vdash a : A \\ \Gamma \vdash b : A}{
      \Gamma \vdash a \to_A b : \Type
    }
\\[5pt]
\nonumber
    \inferrule{\Gamma \vdash}{\Gamma \vdash \langle \rangle : \emptyset}
    &
    \inferrule{
      \Gamma \vdash \sigma : \Delta \\
      \Delta \vdash A : \Type \\
      \Gamma \vdash t : A[\sigma]
    }{
      \Gamma \vdash \langle \sigma , x \mapsto t \rangle : \Delta, x : A
    }
\end{calign}

\vspace{-8pt}
  \caption{Basic typing rules}
  \label{fig:basic-rules}
\end{figure}

\paragraph{Support} Given a term \(t\) and context \(\Gamma\), we can define the \emph{support} of \(t\) in \(\Gamma\) to be the downwards closure of \(FV(t)\) in \(\Gamma\), where a set of variables \(S\) is downwards closed if for all \(x \in S\), \(x : A \in \Gamma\), and \(y \in \FV(A)\) we have \(y \in S\). If the context is obvious we simply write \(\supp(t)\) for the support of \(t\).

\paragraph{Pasting Contexts}The terms of $\Catt$ are derived from isolating a distinguished subset
of contexts which we call \emph{pasting contexts}.  A set of rules for
exhibiting evidence that a given context is a pasting context was the
key innovation of~\cite{Finster2017}. These rules are presented in
Figure~\ref{fig:ps-rules}.

\begin{figure}[b!]
  \vspace{-.5cm}
\begin{calign}
\nonumber
    \inferrule*[Right=$\star$]{ }{x : \star \vdash_{p} x : \star}
    &
    \inferrule*[Right=$\checkmark$]{\Gamma \vdash_{p} x : \star}{\Gamma \vdash_{p}}
\\[5pt]\nonumber
    \inferrule*[Right=$\Uparrow$]{\Gamma \vdash_{p} x : A}{
      \Gamma, y : A, f : x \to_A y \vdash_{p} f : x \to_A y
    }
    &
    \inferrule*[Right=$\Downarrow$]{\Gamma \vdash_{p} f : x \to_A y}{\Gamma \vdash_{p} y : A}
\end{calign}

\vspace{-8pt}
  \caption{Pasting contexts}
  \label{fig:ps-rules}
\end{figure}

\paragraph{Boundary Variables}For each pasting context $\Gamma \vdash_{p}$, we will define two
distinguished subsets of the variables, denoted $\partial^-(\Gamma)$
and $\partial^+(\Gamma)$. First, for a variable $x : A \in \Gamma$,
define its dimension to be $\dim \, A + 1$.  Furthermore,
let us say that a variable is \emph{target-free} if it does \emph{not}
occur as the target of any other variables in $\Gamma$.  Similarly,
we have the notion of \emph{source-free}.  We now define:
\begin{alignat*}{2}
  \partial^-(\Gamma) &:= \{ x \in \Gamma \, | \, &&\dim \, x < \dim \, \Gamma - 1,\\
  &&&\textit{or, } \dim \, x = \dim \, \Gamma - 1 \, \text{and $x$ is target-free} \}\\
  \partial^+(\Gamma) &:= \{ x \in \Gamma \, | \, &&\dim \, x < \dim \, \Gamma - 1,\\
  &&&\textit{or, } \dim \, x = \dim \, \Gamma - 1 \, \text{and $x$ is source-free} \}
\end{alignat*}

\paragraph{Terms}With these notions in place, the typing rules for
terms of $\Catt$ are shown in Figure~\ref{fig:terms}. Note that when
writing the substitution in a coherence term, we typically omit the
angled brackets, writing $\coh(\Gamma : s \to_A t)[a, b, c, \dots]$
instead of
$\coh(\Gamma : s \to_A t)[\langle a, b , c, \dots \rangle]$.

\begin{figure}[b!]
    \vspace{-.5cm}
  \begin{mathpar}
    \inferrule{\Gamma \vdash \\ (x : A) \in \Gamma}{\Gamma \vdash x : A} \and
    \inferrule{
      \Gamma \vdash_{p} \\
      \Gamma \vdash s \to_A t \\
      \Delta \vdash \sigma : \Gamma \\
      \supp(s) = \partial^-(\Gamma) \\
      \supp(t) = \partial^+(\Gamma)
    }{
      \Delta \vdash \coh(\Gamma : s \to_A t)[\sigma] : s\sub{\sigma} \to_{A\sub{\sigma}} t\sub{\sigma}
    } \and
    \inferrule{
      \Gamma \vdash_p \\
      \Gamma \vdash s \to_A t \\
      \Delta \vdash \sigma : \Gamma \\
      \supp(s) = \supp(t) = \FV(\Gamma)
    }{
      \Delta \vdash \coh(\Gamma : s \to_A t)[\sigma] : s\sub{\sigma} \to_{A\sub{\sigma}} t\sub{\sigma}
    }
  \end{mathpar}

\vspace{-8pt}
  \caption{Term constructors}
  \label{fig:terms}
\end{figure}
\paragraph{Examples}We record here some basic examples of well-typed
terms.

\begin{align*}
  \mathsf{comp_1} &:= \coh((x : \star)(y : \star)(f : x \to_\star y): x \to_\star y)[x,y,f]
\\[3pt]
\mathsf{comp_2} &:= \coh((x : \star)(y : \star)(f : x \to_\star y) (z : \star) (g : y \to_\star z)
\\[-2pt]
&\qquad : x \to_\star z)[x,y,f,z,g]
\\[3pt]
\mathsf{comp_3} &:= \coh((x : \star)(y : \star)(f : x \to_\star y) (z : \star) (g : y \to_\star z)
\\[-2pt]
&\qquad (w : \star) (h : z \to_\star w) : x \to_\star w)[x,y,f,z,g,w,h]
\\[3pt]
\i_0 &:= \coh((x : \star) : x \to_{\star} x)[x]
\\[3pt]
\mathsf{unit\mhyphen r} &:= \coh((x : \star)(y : \star)(f : x \to_\star y)\\[-2pt]
  &\qquad:\mathsf{comp_2}\sub{x,y,f,y,\i_0\sub{y}} \to_{x \to_\star y} f)[x,y,f]
  \\[5pt]
  \mathsf{comp_{2,0}} &:= \coh((x : \star)(y : \star)(f : x \to_\star y) (z : \star)(g : y \to_\star z)
\\[-2pt]
&\qquad(h : y \to_\star z)(\alpha : g \to_{y \to_\star z} h) (w : \star)(k : z \to_\star w)
\\[-2pt]
& \qquad : \mathsf{comp_3}\sub{x,y,f,z,g,w,k} \to_{(x \to_\star w)}
\\[-2pt]
&\qquad\hspace{7pt}\mathsf{comp_3}\sub{x,y,f,z,h,w,k})[x,y,f,z,g,h,\alpha,w,k]
\end{align*}
In the theory of $\infty$\-categories, these correspond to  \textit{operations}  from  low-dimensional higher category theory, such as the theory of bicategories. In that more familiar language, given objects $x,y,z,w$ and 1\-morphisms $f:x \to y$, $g :y \to z$, $h:z \to w$, we have the following, where we write $\cdot$ for forward composition of 1\-morphisms:
\begin{itemize}
\item $\comp_1 \sub{x,y,f}$ gives the unary composite $(f)$;
\item $\comp_2 \sub{ x, y, f, z, g}$ gives the binary composite $f \cdot g$;
\item $\comp_3 \sub{x, y, f, z, g, w, h}$ gives the unbiased ternary composite \mbox{$f \cdot g \cdot h$};
\item $\comp_2 \sub{x, z, \comp_2 \sub{x, y, f, z, g}, w, h}$ gives the iterated binary composite $(f \cdot g) \cdot h$;
\item $\i_0 \sub x$ corresponds to the 1\-morphism $\id_x$;
\item $\mathsf{unit\mhyphen r} \sub f$ gives to the unit 2\-morphism $f \cdot \id_y \Rightarrow f$.
\end{itemize}
The unary composite $(f)$ and the unbiased ternary composite $f \cdot g \cdot h$ are not directly defined in the traditional notion of bicategory. We could similarly write down operations for other familiar operations in the theory of weak $\infty$\-categories, such as associators, interchangers, and so on, in principle in all dimensions. It is in this sense that \Catt gives a formal language for weak $\infty$\-categories.

This last example $\comp_{2,0}$ is part of a family of coherences
$\mathsf{comp}_{d,k}$ for $k < d$, \label{page:comp}which will play a role in Section~\ref{sec:rehydration}.  While we will not give a formal definition,
these compositions can be described intuitively as follows: they
consist of the ``unbiased'' composite of a $d$-dimensional disc $D$
with two $(k+1)$-dimensional discs $S$ and $T$ glued to the
$k$-dimensional source and target of $D$, respectively.  In traditional notation we might write this as $S \circ_{k} D \circ_{k} T$.

\paragraph{Identity Terms} Generalizing the 0-dimensional case above,
we can define an identity on cells of arbitrary dimension.  To do
so, we assume that the set $V$ of variables contains elements
$d_i$ and $d_i'$ for $i \in \N$.  Now define the \emph{$k$-disc
  context} and the \emph{$(k-1)$-sphere type} by mutual induction on
$k$ as follows:
{\small\begin{align*}
  \uD^{0} &:= \emptyset, (d_0 : \star) & S^{-1} &:= \star \\
  \uD^{k+1} &:= \uD^{k}, (d_k' : S^{k-1}) , (d_{k+1} : S^k) & S^{k} &:= d_k \to_{S^{k-1}} d_k'
\end{align*}}%
Finally, for $k \in \N$, we define the \textit{identity on the $k$\-disc},
written $\i_k$ as follows, a valid term in the context $\uD^k$:
\[ \i_k := \coh ( \uD^k: d_k \to_{S^{k-1}} d_k)[\id _{D^k}] \]

\paragraph{Substitutions From a Disc} A substitution out of a \(k\)-disc context contains the same data as a type and a term. Given a type $A$ and term $t$, we define a substitution $\{A,t\}$ as follows:
\begin{align*}
\{\star,x\} &:= \langle x \rangle
&
\{ u \to_A v, x\} &:= \langle \{A, u\}, v, x \rangle
\end{align*}
Moreover, given a substitution \(\sigma\) from a disc \(D^k\), we have \(\sigma \equiv \{S^{k-1}\sub\sigma, d_k\sub\sigma\}\). We also have the following lemma describing the interaction between substitutions from a disc and composition.

\begin{restatable}{lemma}{typeunfoldingsub}
\label{lem:typeunfoldingsub}
Substitutions from a disc compose in the following way:
$$\{A \sub \sigma, t \sub \sigma \} \equiv \{A , t\} \circ \sigma$$
\end{restatable}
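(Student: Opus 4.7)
The plan is to proceed by structural induction on the type $A$, unfolding the definitions of $\{-,-\}$ and $\circ$ on both sides and comparing.

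For the base case $A \equiv \star$, since $\star\sub\sigma \equiv \star$, the left-hand side is $\{\star, t\sub\sigma\} \equiv \langle \langle\rangle, t\sub\sigma\rangle$. The right-hand side is $\langle \langle\rangle, t\rangle \circ \sigma$, which by the defining clause $\langle \tau, s\rangle \circ \sigma \equiv \langle \tau \circ \sigma, s\sub\sigma\rangle$ together with $\langle\rangle \circ \sigma \equiv \langle\rangle$ reduces to $\langle \langle\rangle, t\sub\sigma\rangle$, matching the left.

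For the inductive step, suppose $A \equiv u \to_B v$. Using Proposition~\ref{prop:catt-is-cat} to compute $(u \to_B v)\sub\sigma \equiv u\sub\sigma \to_{B\sub\sigma} v\sub\sigma$, the left-hand side unfolds to
\[
\{u\sub\sigma \to_{B\sub\sigma} v\sub\sigma,\, t\sub\sigma\} \equiv \langle \{B\sub\sigma, u\sub\sigma\},\, v\sub\sigma,\, t\sub\sigma\rangle.
\]
The right-hand side unfolds, by two applications of the composition clause for extended substitutions, to
\[
\langle \{B, u\},\, v,\, t\rangle \circ \sigma \equiv \langle \{B, u\} \circ \sigma,\, v\sub\sigma,\, t\sub\sigma\rangle.
\]
Applying the inductive hypothesis to the type $B$ and term $u$ gives $\{B, u\} \circ \sigma \equiv \{B\sub\sigma, u\sub\sigma\}$, so the two sides coincide.

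The only real obstacle is bookkeeping: one must be careful that the abbreviated notation $\langle a_1,\ldots,a_n\rangle$ is correctly unfolded into its left-nested form $\langle\cdots\langle\langle\rangle, a_1\rangle\cdots, a_n\rangle$ so that the composition clause fires the right number of times, and that the type-substitution identity $(u \to_B v)\sub\sigma \equiv u\sub\sigma \to_{B\sub\sigma} v\sub\sigma$ from Proposition~\ref{prop:catt-is-cat} is invoked before comparing. Once this is done, everything is forced by the defining equations and there is no genuine mathematical content beyond the induction itself.
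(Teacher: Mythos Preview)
Your proof is correct and follows essentially the same structural induction on $A$ as the paper's proof. One small remark: the identity $(u \to_B v)\sub\sigma \equiv u\sub\sigma \to_{B\sub\sigma} v\sub\sigma$ is the defining clause of substitution on arrow types rather than a consequence of Proposition~\ref{prop:catt-is-cat}, but this does not affect the argument.
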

\begin{proof}
If $A \equiv \star$ then the lemma reduces to \(\langle t \rangle \equiv \langle t \rangle\). If $A \equiv u \to_{A'} v$, then we reason by induction on subtypes as follows:
\begin{align*}
\{ A \sub \sigma, t \sub \sigma \} &\equiv \{u \sub \sigma \to_{A' \sub \sigma} v \sub \sigma, t \sub \sigma \}\\
                                   &\equiv \langle \{A' \sub \sigma, u \sub \sigma\}, v \sub \sigma, t \sub \sigma \rangle\\
                                   &\equiv \langle \{A',u\} \sub \sigma, v \sub \sigma, t \sub \sigma \rangle\\
                                   &\equiv \langle \{A',u\}, v, t \rangle \circ \sigma\\
                                   &\equiv \{u \to_{A'} v, t\} \circ \sigma\\
                                   &\equiv \{A,t\} \circ \sigma
\end{align*}
This completes the proof.
\end{proof}

\paragraph{Inferred Types} If a term $t$ is well-scoped in some context \(\Gamma\) (i.e. all variables it contains are either bound or in \(\Gamma\)), we can derive a canonical type for it $\ty(t)$, which we call the \emph{inferred type}. For a variable \(x\), by assumption there is \((x,A) \in \Gamma\) for some type \(A\), and we define $\ty(x) := A$. For a coherence term, we define the inferred type as follows:
\[\mathsf{ty}(\coh(\Gamma : U)[\sigma]) := U \sub\sigma\]
Furthermore, when the inferred type of a term \(u\) is of the form \(s \to_A t\), we define the \textit{inferred source}, \(\src(u)\), to be \(s\) and the \textit{inferred target} \(\tgt(u)\) to be \(t\). We note that a term of the form $\coh(\Gamma : \star)[\ldots]$ is never valid, so $\src,\tgt$ are defined for every valid coherence term.

We  write $\src^k, \tgt^k$ for the iterated $k$\-fold inferred source or target, and $\src_k$, $\tgt_k$ for the $k$\-dimensional inferred source or target; so for a coherence term $t$ of dimension $n$, we have $\src_k(t) := \src^{n-k}(t)$, and $\tgt_k(t) := \tgt^{n-k}(t)$.

Using inferred types we can extend the notion of dimension to terms, by defining $\dim \, t := \dim \,\ty(t) + 1$.

\vspace{-5pt}
\subsection{The type theory \texorpdfstring{$\Catt_\su$}{Cattsu}}
\label{sec:cattsu}

As described in the introduction, the type theory $\Catt$ of the
previous section contains no non-trivial definitional equalities:
while calculation happens during type-checking, all terms themselves
are in normal form.  In this section, we introduce our equality
relation.  Its definition will require some combinatorial preparation
which we turn to now.
\begin{figure}[b!]\vspace{-.5cm}
  \[
    \tikzset{point/.style={draw, circle, black, fill, inner sep=1.5pt, label={}}}
    \begin{aligned}
      \scalebox{0.8}{\begin{tikzpicture}[xscale=.6, scale=.8]
        \draw [use as bounding box, draw=none] (0,0) rectangle +(8,3);
        \draw (0,0) node [point] {} node [left] {$x$} to node [left] {$y$} node [right] {$f$} (1,1) node [point] {} to node [left] {$g$} node [right] {$\mu$} (2,2) node [point] {} to node [left] {$\nu$} node [right] {$\Omega$} (3,3) node [point] {} to (4,2) node [point] {} to (5,1) node [point] {} to node [left] {$h$} node [right] {$\delta$} (6,2) node [point] {} to (7,1) node [point] {};
      \end{tikzpicture}}
    \end{aligned}
  \]

\vspace{-8pt}
  \caption{An element of
  \\
  $\Dyck \, 1 \Downarrow (\Uparrow (\Downarrow (\Downarrow (\Uparrow (\Uparrow (\Uparrow (\star x) \, y \, f) \, g \, \mu) \, \nu \, \Omega))) \, h \, \delta)$}
  \label{fig:dyck-ex}
\end{figure}

\paragraph{Dyck Words}Observe that each of the rules for pasting contexts in Figure
\ref{fig:ps-rules} has at most one hypothesis, and consequently,
derivations made with these rules are necessarily linear.  In fact,
complete derivations of the fact that a context is a pasting context
can be identified with \emph{Dyck words}~\cite{kozen2012automata}.
Strictly speaking, a pasting context is defined as a pair of a
syntactic context and a derivation of the fact that it is well-formed,
but this representation contains redundancy and can
be awkward to manipulate in practice.  For
presentation, it is more convenient to work with \emph{Dyck words}, which are a somewhat
simplified representation of pasting diagrams.

Dyck words may be pictured as a list of up and down moves, with
each up move labeled by a pair of variable names.  Concretely, the set $\Dyck \, n$ of
\emph{Dyck words of excess~$n$} is defined by the rules given
in Figure~\ref{fig:dyck}.  The parameter~$n$ records
the difference between the number of up and down moves.  The definition ensures that the excess is always non-negative, so
we always have at least as many up moves as down moves, leading
to the ``mountain'' diagram of Figure~\ref{fig:dyck-ex}.

The rules
for Dyck words mirror exactly the derivation rules for pasting
contexts of Figure~\ref{fig:ps-rules}.  (The additional $\checkmark$
rule for pasting contexts forces a complete derivation
$\Gamma \vdash_p$ to be of excess 0).  As an example, consider this context:
\[ (x : *) (y : *) (f : x \to_\star y) (z : *) (g : y \to_\star z) \]
It is proven to be a pasting context via this derivation:
\vspace{4pt}
\begin{mathpar}
  \inferrule*[Right=$\Downarrow$]{
    \inferrule*[Right=$\Uparrow$]{
      \inferrule*[Right=$\Downarrow$]{
        \inferrule*[Right=$\Uparrow$]{
          \inferrule*[Right=$\star$]{ }{(x : \star) \vdash_p (x : \star)}
        }{(x : \star)(y : \star)(f : x \to_\star y) \vdash_p (f : x \to_\star y)}
      }{(x : \star)(y : \star)(f : x \to_\star y) \vdash_p (y : \star)}
    }{(x : \star)(y : \star)(f : x \to_\star y)(x : \star)(g : y \to_\star z) \vdash_p (g : y \to_\star z)}
  }{(x : \star)(y : \star)(f : x \to_\star y)(x : \star)(g : y \to_\star z) \vdash_p (z : \star)}
\end{mathpar}

Its Dyck word representation is the following:

\[ \Downarrow \, (\Uparrow \, (\Downarrow \, (\Uparrow \, (\star \, x)
  \, y \, f) ) \,z \, g) ) : \Dyck \, 0 \] The only
difference between these two representations is that the pasting
context, together with its derivation, remembers all the typing
information of the all variables, while the Dyck word representation
remembers just the variable names.  Since the types can be recovered
from the structure of the Dyck word itself, this represents no loss of
information. To fix notation, if $\Gamma \vdash_p (x : A)$ is a
pasting context, we write $\lfloor \Gamma \rfloor : \Dyck \, (\dim \, A)$ for
the corresponding Dyck word. Conversely, for a Dyck
word \mbox{$d : \Dyck \, n$}, we write $\lceil d \rceil$, \(\ty(d)\), and \(\tm(d)\) for the corresponding context, type and term such that \(\lceil d \rceil \vdash_p \tm(d) : \ty(d)\).

\begin{figure}[th!]
  \begin{mathpar}
    \inferrule{x \in V}{\star \, x : \Dyck \, 0} \and
    \inferrule{n : \N \\ d : \Dyck \, n \\ y , f : V}{
      \Uparrow \, d \, y \, f : \Dyck \, (S \, n)
    } \and
    \inferrule{n : \N \\ d : \Dyck \, (S \, n)}{
      \Downarrow \, d : \Dyck \, n
    }
  \end{mathpar}

\vspace{-8pt}
\caption{Dyck words}
  \label{fig:dyck}
\end{figure}
\begin{figure}[b!]
\def\littlegap{\hspace{10pt}}
  \begin{mathpar}
    \inferrule{n : \N \littlegap d : \Dyck \, n \littlegap y , f : V}{
      \Updownarrow_{\mathsf{pk}} \, d : \Peak \, (\Downarrow (\Uparrow \, d \, y \, f))
    } \,\,\,\,\,\,\,\,\,\,
    \inferrule{n : \N \littlegap d : \Dyck \, n \hspace{7pt} y , f : V \littlegap p : \Peak \, d}{
      \Uparrow_{\mathsf{pk}} \, d \, y \, f \, p : \Peak \, (\Uparrow \, d \, y \, f)
    } \and
    \inferrule{n : \N \littlegap d : \Dyck \, (S \, n) \littlegap p : \Peak \, d}{
      \Downarrow_{\mathsf{pk}} \, d \, p : \Peak \, (\Downarrow \, d)
    }
  \end{mathpar}

\vspace{-8pt}
  \caption{Peaks}
  \label{fig:peaks}
\end{figure}

\paragraph{Peaks}A special role will be played by the
positions in a Dyck word where we change direction from moving
up to moving down.  We call these the \emph{peaks}.  It is easy
to give an induction characterization of peaks, and we do so in
Figure~\ref{fig:peaks}. A nice advantage of this representation is that we can write programs
on pasting contexts using pattern matching style (for which we
use an Agda-style syntax).

If $\Gamma \vdash_p$ is a pasting context, we say that a variable which occurs as the label of a
peak in the Dyck word representation of $\Gamma$ is \emph{locally
  maximal}.  Intuitively speaking, such variables represent those
cells which are of highest dimension ``in their neighborhood.''  We
write $\LM(\Gamma)$ for the set of locally maximal variables and for
$\alpha \in \LM(\Gamma)$ we write $p_\alpha$ for the corresponding
peak.  Observe that if $\alpha \in \LM(\Gamma)$, we may speak
of the \emph{dimension} $\dim \, \alpha\,$ since $\alpha$ is assigned
a type by $\Gamma$.  Moreover, since all locally maximal variables are
positive dimensional, the inferred type of any locally maximal variable
$\alpha \in \LM(\Gamma)$ must be an arrow type
$\alpha : s \to_A t$ for some type $A$. Therefore the \(\src\) and \(\tgt\) of a locally maximal are always well defined. We note that any pasting context \(\Gamma\) contains no coherence terms, and so \(\src(\alpha)\) and \(\tgt(\alpha)\) are always variables.

We can use the definition of peaks to manipulate our Dyck word in
various ways.  We define three operations in Figure~\ref{fig:operations} which will be useful in what follows.

\begin{figure}\small
\begin{align*}
  &\mathsf{excise} : \{n : \N\} \{d : \Dyck \, n\} (p : \Peak \, d) \to \Dyck \, n\\[-1pt]
  &\mathsf{excise} \, (\Updownarrow_{pk} \, d \, y \, f) = d \\[-1pt]
  &\mathsf{excise} \, (\Uparrow_{pk} \, d \, y \, f \, p) = \, \Uparrow (\mathsf{excise} \, p) \, y \, f \\[-1pt]
  &\mathsf{excise} \, (\Downarrow_{pk} \, d \, p) = \,\Downarrow (\mathsf{excise} \, p)\\[5pt]
  &\mathsf{project} : \{n : \N\} \{d : \Dyck \, n\} (p : \Peak \, d) \to \Sub\\[-1pt]
  &\mathsf{project} \, (\Updownarrow_{pk} \, d \, y \, f) = \langle \id_{\lceil d \rceil}, \tm(d), \i(\ty(d),\tm(d)) \rangle\\[-1pt]
  &\mathsf{project} \, (\Uparrow_{pk} \, d \, y \, f \, p) = \langle \mathsf{project} \, p , y , f \rangle\\[-1pt]
  &\mathsf{project} \, (\Downarrow_{pk} \, d \, p) = \mathsf{project} \, p\\[5pt]
  &\mathsf{remove} : \{n : \N\} \{d : \Dyck \, n\} (p : \Peak \, d) (\sigma : \Sub) \to \Sub\\[-1pt]
  &\mathsf{remove} \, (\Updownarrow_{pk} \, d \, y \, f) \, \langle \sigma , s , t \rangle = \sigma\\[-1pt]
  &\mathsf{remove} \, (\Uparrow_{pk} \, d \, y \, f \, p) \, \langle \sigma , s , t \rangle = \langle \mathsf{remove} \, p \, \sigma , s , t \rangle\\[-1pt]
  &\mathsf{remove} \, (\Downarrow_{pk} \, d \, p) \, \sigma = \mathsf{remove} \, p \, \sigma
\end{align*}

\vspace{-8pt}
  \caption{Operations on Dyck words}
  \label{fig:operations}
\end{figure}

The first operation is \emph{excising} a peak of a Dyck word. Given a pasting context $\Gamma \vdash_p$ and a locally maximal
variable $\alpha : \LM(\Gamma)$, we write $\Gamma\sslash \alpha$ for
the pasting context obtained by excising the peak where $\alpha$
occurs in the Dyck word representation of~$\Gamma$:
\[\Gamma\sslash \alpha := \lceil \mathsf{excise} \, p_{\alpha} \rceil\]
This operation removes \(\alpha\) and \(\tgt(\alpha)\) from the original context. A pictorial representation is given in Figure~\ref{fig:excision}.

The second operation, \emph{project}, is used to create a natural substitution $\Gamma\sslash \alpha \vdash \pi_{\alpha} : \Gamma$, given by:
\[\pi_\alpha := \mathsf{project}\,p_\alpha\]
If \(\ty(\alpha) \equiv \src(\alpha) \to_A \tgt(\alpha)\) then this sends the variable \(\alpha\) to \(\i_{\dim \, A}\sub{\{A,\src(\alpha)\}}\), the variable \(\tgt(\alpha)\) to \(\src(\alpha)\), and every other variable to itself.

Lastly we also define an operation, \emph{remove}, that removes two terms from a given substitution. If $\Delta$ is another context, and we
are given a substitution $\Delta \vdash \sigma : \Gamma$, then we have
a substitution
$\Delta \vdash \sigma\sslash \alpha : \Gamma\sslash \alpha$ obtained by the remove function:
\[\sigma \sslash \alpha := \mathsf{remove}\, p_\alpha\, \sigma\]
which removes the terms in \(\sigma\) corresponding to \(\alpha\) and \(\tgt(\alpha)\).

We will show that these are actually well typed constructions later in Lemma~\ref{lem:pruning-construct}.

\begin{figure}[!h]\vspace{-.5cm}
  \[
    \tikzset{point/.style={draw, circle, black, fill, inner sep=1.5pt, label={}}}
    \begin{aligned}
      \scalebox{0.8}{\begin{tikzpicture}[xscale=.6, scale=.8]
        \draw [use as bounding box, draw=none] (0,0) rectangle +(8,3);
        \draw (0,0) node [point] {} node [left=1pt] {$x$} to node [left] {$y\vphantom{yk}$} node [right] {$f\vphantom{yk}$} (1,1) node [point] {} to node [left] {$g\vphantom{yk}$} node [right] {$\alpha\vphantom{yk}$} (2,2) node [point] {} to (3,1) node [point] {} to node [left] {$h\vphantom{yk}$} node [right] {$\beta\vphantom{yk}$} (4,2) node [point] {} to (5,1) node [point] {} to (6,0) node [point] {} to node [left] {$z\vphantom{yk}$} node[right] {$k\vphantom{yk}$} (7,1) node [point] {} to (8,0) node [point] {};
        \draw [blue] (0.5,1) to +(3,0);
      \end{tikzpicture}}
    \end{aligned}
    \quad\raisebox{-8pt}{$\squig$}\quad
    \begin{aligned}
      \scalebox{0.8}{\begin{tikzpicture}[xscale=.6, scale=.8]
        \draw [use as bounding box, draw=none] (0,0) rectangle +(6,3);
        \draw (0,0) node [point] {} node [left=1pt] {$x$} to node [left] {$y\vphantom{yk}$} node [right] {$f\vphantom{yk}$} (1,1) node [point] {} to node [left] {$h\vphantom{yk}$} node [right] {$\beta\vphantom{yk}$} (2,2) node [point] {} to (3,1) node [point] {} to (4,0) node [point] {} to node [left] {$z\vphantom{yk}$} node[right] {$k\vphantom{yk}$} (5,1) node [point] {} to (6,0) node [point] {};
      \end{tikzpicture}}
    \end{aligned}
  \]

\vspace{-8pt}
  \caption{Excising a peak}
  \label{fig:excision}
\end{figure}

\paragraph{Equality in $\Catt_{\su}$} With these definitions in place,
we can now define our equality relation on terms in Figure~\ref{fig:eq-rules}. Our equality is indexed by the context which we are working with, and we will write statements like \(\Gamma \vdash s = t\) to mean terms \(s\) and \(t\) are equal in context \(\Gamma\). Where the context is clear we may drop it and simply write $s = t$.
There are three generating equalities, which we summarize as follows.

\begin{figure}[b!]
\vspace{-.4cm}

  \begin{mathpar}
    \inferrule*[Right=prune]{
      \Gamma \vdash \coh(\Delta : A)[\sigma] : B \\ \alpha : \LM(\Delta) \\ \alpha\sub{\sigma} \equiv \i_{\dim \, \alpha - 1}\sub{\tau}
    }{
      \Gamma \vdash \coh(\Delta : A)[\sigma] = \coh(\Delta\sslash \alpha : A\sub{\pi_{\alpha}})[\sigma\sslash x]
    } \and
    \inferrule*[Right=disc]{
      n : \N \\ \Gamma \vdash \coh(\uD^{n+1} : S^n)[\sigma] : B
    }{
      \Gamma \vdash \coh(\uD^{n+1} : S^n)[\sigma] = d_{n+1}\sub{\sigma}
    } \and
    \inferrule*[Right=endo]{
      \Gamma \vdash \coh (\Delta : t \to_A t)[\sigma] : B
    }{
      \Gamma \vdash \coh (\Delta : t \to_A t)[\sigma] = \i_{\dim \, A}\sub{\{A,t\} \circ \sigma}
    }
  \end{mathpar}

\vspace{-8pt}
  \caption{Generating  equality judgements on terms}
  \label{fig:eq-rules}
\end{figure}

\begin{itemize}
\item \emph{Pruning} scans the locally maximal arguments of a substitution looking for
identity terms.  When such an argument appears, it may be removed,
while at the same time removing the corresponding Dyck peak
from the pasting diagram defining the coherence.
\item \emph{Disc Removal} asserts that unary composites may be removed from
the head of a term.
\item \emph{Endomorphism Coherence Removal}
asserts that coherences associated to a repeated term may be replaced with
identities on that term.
\end{itemize}

\noindent
We extend this equality relation to types and
substitutions by structural induction on the formation rules. It is sufficient to add rules for reflexivity, transitivity, and symmetry to our term equality only; these can then be proven for the other equalities by a simple induction. The remaining rules for our equality are given in  Figure~\ref{fig:structural-eq}.

\begin{figure}[th!]
  \begin{mathpar}
    \inferrule{\Gamma : \Ctx \\ (x,A) \in \Gamma}{\Gamma \vdash x = x}\and
    \inferrule{\Delta \vdash A = B \\ \Gamma \vdash \sigma = \tau}{\Gamma \vdash \coh(\Delta : A)[\sigma] = \coh(\Delta : B)[\tau]}\and
    \inferrule{\Gamma \vdash s = t} {\Gamma \vdash t = s}\and
    \inferrule{\Gamma \vdash s = t \\ \Gamma \vdash t = u}{\Gamma \vdash s = u}\and
    \inferrule{\Gamma : \Ctx}{\Gamma \vdash \star = \star} \and
    \inferrule{\Gamma \vdash A = A' \\ \Gamma \vdash s = s' \\ \Gamma \vdash t = t'}{
      \Gamma \vdash s \to_A t = s' \to_{A'} t'
    }\and
    \inferrule{\Gamma : \Ctx}{\Gamma \vdash \langle  \rangle = \langle  \rangle} \and
    \inferrule{\Gamma \vdash \sigma = \tau\\ \Gamma \vdash s = t}{\Gamma \vdash \langle \sigma , x \mapsto s \rangle = \langle \tau , x \mapsto t \rangle}
  \end{mathpar}

\vspace{-8pt}
  \caption{Structural equality rules}
  \label{fig:structural-eq}
\end{figure}

Finally, to integrate our new equality with the type system, we
require our term judgement to be equipped with a conversion rule, which allows typing to interact with our equality.
This rule is listed in Figure~\ref{fig:conversion}.

\begin{figure}[bh!]
  \begin{mathpar}
    \inferrule{\Gamma \vdash t : A \\ \Gamma \vdash A = B
    }{\Gamma \vdash t : B} \and
  \end{mathpar}

\vspace{-8pt}
  \caption{Conversion rule}
  \label{fig:conversion}
\end{figure}

\paragraph{Example Reductions}We record here examples of our three generating reductions, in order
to give the reader a flavour of how they operate.

\paragraph{$\bullet$ Pruning}If a coherence term has a substitution which sends a locally-maximal variable $\alpha$ to an identity, then the pruning relation allows $\alpha$ to be removed from the corresponding pasting context. Consider the following term:
  \begin{align*}
    &\mathsf{comp_2}\sub{x,y,f,y,\i_0\sub{y}}\\
                   &\hspace{0.5cm}:= \coh((x : \star)(y : \star)(f : x \to_\star y) \\
    &\hspace{1.7cm} (z : \star)(g : y \to_\star z) : x \to_\star z)[x,y,f,y,\i_0\sub{y}]
  \end{align*}
The variable $g$ occurs in a locally maximal position in the head. Moreover, the argument supplied in this position is $\i_0\sub{y}$.  Hence
  the pruning relation applies.  We have:
  \begin{align*}
    \Gamma\sslash g &:= (x : \star)(y : \star)(f : x \to_\star y) \\*
    \pi_g &:= [x,y,f,y,\i_0\sub{y}] \\*
    [x,y,f,y,\i_0\sub{y}]\sslash g &:= [x,y,f]
  \end{align*}
so that this term is definitionally equal to the following:
\[
\coh((x : \star)(y : \star)(f : x \to_\star y) : x \to_\star y)[x,y,f]
\] \label{ex:eq-exs}%
In more standard notation, we have shown $f \cdot \id_x = (f)$, where $(f)$ represents the unary composite of $f$.

\paragraph{$\bullet$ Disc Removal} This relation says that when we are composing over a disc context, with the sphere type, the entire term is definitionally equal to its last argument. This case applies for the unary composite $(f)$ obtained just above. In this case, application of the \textsc{Disc} rule now yields:
  \[ \coh((x : \star)(y : \star)(f : x \to_\star y) : x \to_\star
    y)[x,y,f] = f \] Note that this replaces the coherence term with the final argument, $f$, of its substitution $[x, y, f]$.  As $f$ is now a variable, the term is now in normal form (see Section 3).

\paragraph{$\bullet$ Endomorphism Coherence Removal}A curious
  redundancy of the fully weak definition of $\infty$-category is the
  existence of ``fake identities'': cells which are ``morally'' the
  identity on some composite cell, but do not have an identity
  coherence at their head.  As an example, consider the term:
  \begin{align*}
    &\coh((x : \star)(y : \star)(f : x \to_\star y)(z : \star)(g : y \to_\star z) \\
    &\hspace{.5cm} : \mathsf{comp_2}\sub{x,y,f,z,g} \to_{x \to_\star z} \mathsf{comp_2}\sub{x,y,f,z,g})\\
    &\hspace{.5cm} [x,y,f,z,g]
  \end{align*}
  This term is ``trying'' to be the identity on
  $\mathsf{comp_2}\sub{x,y,f,z,g}$ (indeed, it is provably
  equivalent to it in $\Catt$), but is not a \textit{syntactic}   identity.  Such terms are recognizable as
  coherences for which the type expression has a source and target
  which are equal.  We refer to them as
  \emph{endomorphism coherences}, and our third rule \textsc{Endo} sets
  them equal to the identities they duplicate. In this case, the above term will therefore be definitionally equal to the following:
\[ \i_1\sub{x,z,\mathsf{comp_2}\sub{x,y,f,z,g}} \]
This \textsc{endo} reduction has no apparent analog in the theory of Batanin, Cisinski and Weber~\cite{Batanin2013}, and we believe in that theory the two terms above would not be identified. \label{ex:bcw}

\paragraph{Properties of Definitional Equality} In this section, we record two theorems. The first states that equality interacts nicely with composition of substitutions, which will be crucial when we build the syntactic category in order to define models in Section~\ref{sec:models}.

\begin{restatable}{theorem}{subthm}
  \label{subthm}
  If \(\sigma = \sigma'\) and \(\tau = \tau'\), then \(\sigma \circ \tau = \sigma' \circ \tau'\).
\end{restatable}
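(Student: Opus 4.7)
The natural plan is to split the theorem into a ``left'' and ``right'' compatibility statement:
\textbf{(L)} if $\sigma = \sigma'$ then $\sigma \circ \tau = \sigma' \circ \tau$, and \textbf{(R)} if $\tau = \tau'$ then $\sigma \circ \tau = \sigma \circ \tau'$. Combining these by transitivity then yields the theorem. For the remainder of the sketch I focus on (L) and (R) separately.

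For (L), I would do a mutual structural induction on equality derivations, simultaneously proving:
\textbf{(La)} $s = s' \Rightarrow s\sub{\tau} = s'\sub{\tau}$;
\textbf{(Lb)} $A = A' \Rightarrow A\sub{\tau} = A'\sub{\tau}$;
\textbf{(Lc)} $\sigma = \sigma' \Rightarrow \sigma \circ \tau = \sigma' \circ \tau$. The structural rules (reflexivity, symmetry, transitivity, and the congruence rules for arrows, coherence heads, and extended substitutions of Figure~\ref{fig:structural-eq}) are routine, reducing each case to the inductive hypotheses. The real content is checking that each generating equality is stable under post-composition by $\tau$. For \textsc{endo}, this is immediate: $\coh(\Delta:t\to_A t)[\sigma]\sub{\tau} \equiv \coh(\Delta:t\to_A t)[\sigma\circ\tau]$, which is still an endomorphism coherence, and the right-hand side rewrites to $\i_{\dim A}\sub{\{A,t\}\circ\sigma\circ\tau}$ using Proposition~\ref{prop:catt-is-cat} (associativity of composition). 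The \textsc{disc} case is similar, using $d_{n+1}\sub{\sigma}\sub{\tau} \equiv d_{n+1}\sub{\sigma\circ\tau}$.

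The \textsc{prune} case is the main obstacle, and I expect it to require an auxiliary syntactic lemma stating that pruning commutes with post-composition, namely $(\sigma \sslash \alpha) \circ \tau \equiv (\sigma \circ \tau) \sslash \alpha$. This should be proved by induction on the peak $p_\alpha$ using the definition of \textsf{remove}, noting that substitution composition acts componentwise. One also needs that $\alpha\sub{\sigma}\equiv \i_{\dim\alpha -1}\sub{\rho}$ implies $\alpha\sub{\sigma\circ\tau} \equiv \i_{\dim\alpha -1}\sub{\rho\circ\tau}$, again using Proposition~\ref{prop:catt-is-cat}, so that the hypothesis of \textsc{prune} is preserved. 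Once these hold, both $\coh(\Delta:A)[\sigma]\sub\tau$ and $\coh(\Delta\sslash\alpha:A\sub{\pi_\alpha})[\sigma\sslash\alpha]\sub\tau$ unfold to another valid instance of \textsc{prune}, giving the desired equality.

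For (R), I would proceed by structural induction on the underlying term/type/substitution (not on an equality derivation), simultaneously establishing that $\tau = \tau'$ implies $s\sub\tau = s\sub{\tau'}$, $A\sub\tau = A\sub{\tau'}$, and $\sigma \circ \tau = \sigma \circ \tau'$. The variable case is exactly the componentwise equality extracted from the derivation of $\tau = \tau'$ (together with the congruence rule for extended substitutions). The coherence case rewrites $\coh(\Delta:U)[\rho]\sub\tau \equiv \coh(\Delta:U)[\rho\circ\tau]$ and then applies the congruence rule on coherence heads (Figure~\ref{fig:structural-eq}) together with the inductive hypothesis applied componentwise to each term of $\rho$, which yields $\rho \circ \tau = \rho \circ \tau'$. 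The extended substitution case is immediate from the term case. No generating equality needs to be revisited here, so this part is comparatively straightforward; the bulk of the technical work sits in the \textsc{prune} case of (L).
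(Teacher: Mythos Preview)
Your proposal is correct and matches the paper's approach almost exactly: the paper proves your (R) as Lemma~\ref{lem:eq-sub-r-inv} (structural induction on the syntax) and your (L) as part of Lemma~\ref{lem:eq-sub-l-inv} (induction on equality derivations), invoking precisely your auxiliary lemma $(\sigma\sslash\alpha)\circ\tau \equiv (\sigma\circ\tau)\sslash\alpha$ (their Lemma~\ref{lem:quotientsub}) for the \textsc{prune} case. The one detail you gloss over is that the generators \textsc{prune}, \textsc{disc}, \textsc{endo} each carry a typing premise, and typing derivations themselves use equality via the conversion rule, so the paper runs (La)--(Lc) in a single mutual induction together with the three ``substitution preserves typing'' statements; your phrase ``another valid instance of \textsc{prune}'' is exactly where this is needed.
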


\noindent
The second is the following major structural theorem, showing that our definitional equality behaves well with typing.

\begin{restatable}{theorem}{conversionthm}
  \label{thm:conversion-thm}
  Typing is fully compatible with definitional equality, such that the following rules are derivable:
  {\small\begin{mathpar}
    \inferrule{\Gamma \vdash s : A \\ \Gamma \vdash s = t \\ \Gamma \vdash A = B}{\Gamma \vdash t : B} \and
    \inferrule{\Gamma \vdash A \\ \Gamma \vdash A = B}{\Gamma \vdash B}\and

    \inferrule{\Gamma \vdash \sigma : \Delta\\ \Gamma \vdash \sigma = \tau}{\Gamma \vdash \tau : \Delta}
  \end{mathpar}}
\end{restatable}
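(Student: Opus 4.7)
The plan is to prove all three statements of Theorem~\ref{thm:conversion-thm} simultaneously by induction on the derivation of the equality judgement. The three statements are genuinely intertwined: the type and substitution statements are required in the congruence cases of the term statement (e.g.\ for the $\coh$ constructor, where a type equality must be turned into equality of typing derivations), while the term statement feeds back into the type and substitution cases via the structural rules of Figure~\ref{fig:structural-eq}. For the induction I would strengthen the term statement to the following form: if $\Gamma \vdash s : A$ and $\Gamma \vdash s = t$, then there exists $A'$ with $\Gamma \vdash t : A'$ and $\Gamma \vdash A = A'$; combined with the conversion rule of Figure~\ref{fig:conversion}, this yields the form stated in the theorem.

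The reflexivity, symmetry, transitivity and congruence cases are routine, using Theorem~\ref{subthm} whenever equality needs to be commuted past the semantic substitution $\sub{-}$ (in particular when turning $A = A'$ into $A\sub\sigma = A'\sub\sigma$). Two of the three generating rules are handled directly: for \textbf{disc}, the right-hand side $d_{n+1}\sub\sigma$ inherits the type $S^n\sub\sigma$ via the variable rule applied to $d_{n+1} \in \uD^{n+1}$, matching the type of the original coherence; for \textbf{endo}, Lemma~\ref{lem:typeunfoldingsub} rewrites $\{A,t\} \circ \sigma$ as $\{A\sub\sigma, t\sub\sigma\}$, and combining this with the typing of $\i_{\dim A}$ in $\uD^{\dim A}$ gives $\i_{\dim A}\sub{\{A,t\}\circ\sigma}$ the expected type $t\sub\sigma \to_{A\sub\sigma} t\sub\sigma$.

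The main obstacle is the \textbf{prune} case, which simultaneously modifies the pasting context, the type and the substitution. Here I would invoke the forthcoming Lemma~\ref{lem:pruning-construct} to establish that $\Delta\sslash\alpha$ is a pasting context, that $\pi_\alpha$ is a valid substitution $\Delta\sslash\alpha \to \Delta$ and that $\sigma\sslash\alpha$ is a valid substitution $\Gamma \to \Delta\sslash\alpha$, together with the preservation of the source and target support conditions required by the coherence formation rule. The remaining delicate point is to identify the type of the pruned term, namely $(A\sub{\pi_\alpha})\sub{\sigma\sslash\alpha} \equiv A\sub{\pi_\alpha \circ (\sigma\sslash\alpha)}$ by Proposition~\ref{prop:catt-is-cat}, with the original type $A\sub\sigma$. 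I expect this to reduce to a direct verification that $\pi_\alpha \circ (\sigma\sslash\alpha)$ agrees with $\sigma$ on every variable other than $\alpha$ and $\tgt(\alpha)$, while at those two positions the hypothesis $\alpha\sub\sigma \equiv \i_{\dim\alpha - 1}\sub\tau$ together with a single \textbf{disc} reduction on the inserted identity identifies the two sides up to our definitional equality, yielding $\Gamma \vdash A\sub\sigma = A\sub{\pi_\alpha \circ (\sigma\sslash\alpha)}$ rather than a syntactic identity. Once this lemma is in hand, the substitution statement follows from the term statement by a straightforward induction on the structure of $\sigma$, again leaning on Theorem~\ref{subthm} to handle the dependent types that appear in the extension rule for substitutions.
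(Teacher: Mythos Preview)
Your overall plan---mutual induction on the derivation of the equality, with a strengthened induction hypothesis for terms---is the same as the paper's. However, there are two genuine gaps.

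First, your strengthening is one-directional: from $\Gamma \vdash s : A$ and $\Gamma \vdash s = t$ you conclude $\Gamma \vdash t : A'$ with $A = A'$. This does not survive the symmetry case. If the derivation of $\Gamma \vdash s = t$ ends in the symmetry rule applied to a subderivation of $\Gamma \vdash t = s$, the induction hypothesis on that subderivation tells you only that validity of $t$ implies validity of $s$, which is the wrong direction. The paper explicitly flags this and strengthens the hypothesis to an ``if and only if'': $\Gamma \vdash s : A$ iff $\Gamma \vdash t : B$. Correspondingly, for each generating rule (\textsc{prune}, \textsc{disc}, \textsc{endo}) one must check both directions, not just left-to-right.

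Second, the congruence case for coherences is not routine. Given $\coh(\Delta : A)[\sigma]$ valid and $\Delta \vdash A = B$, $\Gamma \vdash \sigma = \tau$, you must show $\coh(\Delta : B)[\tau]$ is valid, and in particular that $B$ satisfies the support side-condition of the coherence formation rule. Writing $A \equiv s \to_{A'} t$ and $B \equiv s' \to_{B'} t'$, you need $\supp(s') = \supp(s)$ and $\supp(t') = \supp(t)$, i.e.\ that definitional equality preserves support. This is Lemma~\ref{lem:support} in the paper, and its proof is not immediate: it goes via an auxiliary ``support-preserving'' judgement $\vdash_s$ and shows it coincides with $\vdash$. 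You do not mention this ingredient, and Theorem~\ref{subthm} alone does not supply it. The same support issue also arises in the \textsc{prune} case: Lemma~\ref{lem:pruning-construct} gives you the well-formedness of $\pi_\alpha$ and $\sigma\sslash\alpha$ and the equality $\sigma = \pi_\alpha \circ (\sigma\sslash\alpha)$ (so your type identification is essentially correct, though no \textsc{disc} reduction is involved), but it does not give the support condition for $A\sub{\pi_\alpha}$; the paper handles that by a separate case analysis on whether $\dim(\Delta\sslash\alpha) = \dim(\Delta)$ or drops by one.
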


The rest of this section builds the theorem required to prove these theorems. We start by showing that identities and inferred types are valid:

\begin{restatable}{lemma}{identities}\label{lem:identities}
  Suppose \(\Gamma \vdash \{A,t\} : D^k\). Then \(\Gamma \vdash t : A\). Conversely, if \(\Gamma \vdash t : A\) then \(\Gamma \vdash \{A,t\} : D^k\). Further if \(\Gamma \vdash t : A\) then \(\Gamma \vdash \i_{\dim \, A}\sub{\{A,t\}} : t \to_A t\).
\end{restatable}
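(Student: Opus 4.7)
The plan is to prove the three statements by mutual induction on the structure of the type $A$, after first isolating one small computational lemma. The key auxiliary fact is that for any type $A$ with $\dim A = k$ we have $S^{k-1}\sub{\{A,t\}} \equiv A$, and moreover $d_k\sub{\{A,t\}} \equiv t$. The first of these follows by induction on $A$: when $A \equiv \star$ both sides are $\star$; when $A \equiv u \to_{A'} v$, unfolding the definition of $\{A,t\}$ gives $\langle \{A',u\}, v, t\rangle$, so by definition of semantic substitution $d_k\sub{\{A,t\}} \equiv v$ and $d_k'\sub{\{A,t\}} \equiv u$, while the inductive hypothesis supplies $S^{k-2}\sub{\{A',u\}} \equiv A'$, so assembling these (using Proposition~\ref{prop:catt-is-cat} to commute the smaller substitution past the larger one, exactly as in Lemma~\ref{lem:typeunfoldingsub}) yields $S^{k-1}\sub{\{A,t\}} \equiv u \to_{A'} v \equiv A$. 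The claim about $d_k\sub{\{A,t\}}$ is immediate from the definition since $t$ is always placed as the last component.

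With this in hand, part~1 goes by induction on $A$. For $A \equiv \star$ the substitution $\{A,t\} \equiv \langle t\rangle$ is valid as a substitution into $D^0 = (d_0:\star)$ only if $\Gamma \vdash t : \star$, by inversion of the substitution typing rule. For $A \equiv u \to_{A'} v$, the substitution unfolds to $\langle \{A',u\}, v, t\rangle$ into $D^{k+1} = D^k, (d_k':S^{k-1}),(d_{k+1}:S^k)$, and inverting three successive applications of the substitution extension rule yields: (i) $\{A',u\} : D^k$, from which the inductive hypothesis gives $u : A'$, (ii) $v : S^{k-1}\sub{\{A',u\}}$, which by the auxiliary lemma is $A'$, and (iii) $t : S^k\sub{\langle\{A',u\},v\rangle}$, which unfolds to $u \to_{A'} v \equiv A$ using the same auxiliary lemma together with the already-computed values of $d_k$ and $d_k'$ under the substitution.

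Part~2 is the converse direction and proceeds by the same induction on $A$. In the base case, $\Gamma \vdash t : \star$ together with the substitution extension rule directly yields $\Gamma \vdash \langle t\rangle : D^0$. In the inductive step, $\Gamma \vdash t : u \to_{A'} v$ forces $\Gamma \vdash u : A'$ and $\Gamma \vdash v : A'$ by inversion on the arrow type-formation rule. The inductive hypothesis then gives $\Gamma \vdash \{A',u\} : D^k$, and we build $\{A,t\}$ by two applications of substitution extension, using at each step the auxiliary lemma to identify the required type $S^{k-1}\sub{\{A',u\}}$ with $A'$ and $S^k$ with the correct arrow type.

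Finally, part~3 is a direct consequence. The term $\i_{\dim A}\sub{\{A,t\}}$ is by definition $\coh(D^k : d_k \to_{S^{k-1}} d_k)[\{A,t\}]$. Since $D^k \vdash_p$ and the coherence body is well-formed with source and target equal to $\FV(D^k)$, we may apply the coherence typing rule using the substitution $\{A,t\} : D^k$ from part~2, obtaining type $(d_k \to_{S^{k-1}} d_k)\sub{\{A,t\}}$. This reduces to $t \to_A t$ by the auxiliary lemma and the computation of $d_k\sub{\{A,t\}}$. The only real bookkeeping hurdle is the auxiliary lemma on $S^{k-1}\sub{\{A,t\}}$, but once it is proved the rest is a routine walk through the substitution and coherence typing rules.
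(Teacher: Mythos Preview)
Your approach is correct and is exactly what the paper's one-sentence proof (``simply a rearrangement of some typing data'') would expand into: isolate the computation that the sphere type pulls back to $A$ and the top disc variable to $t$, then induct on the structure of $A$ for the two directions, deriving part~3 by instantiating the coherence rule.

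Two minor bookkeeping slips to clean up. First, in the inductive step you swap the roles of $d_k$ and $d_k'$: with $\{u \to_{A'} v, t\} = \langle \{A',u\}, v, t\rangle$ targeting $D^{k+1} = D^k,(d_k':S^{k-1}),(d_{k+1}:S^k)$, the last entry $t$ fills $d_{k+1}$, the penultimate $v$ fills $d_k'$, and $\{A',u\}$ fills $D^k$, so $d_k \mapsto u$ and $d_k' \mapsto v$, not the other way round. Second, your index $k$ drifts between $\dim A$ and $\dim A + 1$; note that $\{A,t\}$ targets $D^{\dim A + 1}$, so the auxiliary identity you want is $S^{\dim A}\sub{\{A,t\}} \equiv A$, not $S^{\dim A - 1}$. (The paper's own statement is itself inconsistent on this subscript between the lemma and the reduction rules, which use $\i_{\dim A + 1}$.) These are cosmetic; the argument structure is sound.
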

\begin{proof}
  The proof is simply a rearrangement of some typing data.
\end{proof}

\begin{restatable}{proposition}{uniquecohtype}
  \label{uniquecohtype}
  If $\Delta \vdash t : A$, then \(A = \ty(t)\) and \(\Delta \vdash t : \ty(t)\).
\end{restatable}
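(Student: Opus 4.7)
I would proceed by induction on the derivation of $\Delta \vdash t : A$. The point is that $A$ need not be syntactically identical to $\ty(t)$ if the derivation ends with the conversion rule of Figure~\ref{fig:conversion}, but in each of the base cases the two coincide on the nose, and a conversion step can only replace $A$ by a definitionally equal type. So the induction should cleanly separate the two base cases (variable and coherence, of either flavour) from the conversion step.

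In the variable case, $t \equiv x$ with $(x : A) \in \Delta$, so by the definition of the inferred type $\ty(t) \equiv A$. Reflexivity of type equality, derivable from the structural rules of Figure~\ref{fig:structural-eq} by a routine mutual induction on the syntax, then gives $A = \ty(t)$, and $\Delta \vdash t : \ty(t)$ is just the hypothesis. In either of the two coherence rules of Figure~\ref{fig:terms}, we have $t \equiv \coh(\Gamma : s \to_{A'} t')[\sigma]$ and the rule's conclusion ascribes to $t$ the type $s\sub{\sigma} \to_{A'\sub{\sigma}} t'\sub{\sigma}$; this is literally $(s \to_{A'} t')\sub{\sigma}$, which is $\ty(t)$ by the definition of the inferred type on coherences, so again the conclusion follows from reflexivity.

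In the conversion case, the derivation has the form
\[
  \inferrule{\Delta \vdash t : A' \\ \Delta \vdash A' = A}{\Delta \vdash t : A}
\]
for some type $A'$. The inductive hypothesis applied to the left premise yields both $A' = \ty(t)$ and $\Delta \vdash t : \ty(t)$; the latter is exactly the second conjunct of our conclusion, and for the first conjunct we chain $A = A'$ (obtained from the right premise by symmetry) with $A' = \ty(t)$ via transitivity, both of which are primitive rules of Figure~\ref{fig:structural-eq} at the term level and lift to types through the arrow-congruence rule by induction on the type. I do not anticipate a genuine obstacle: the only piece of background that is not itself a primitive rule is reflexivity of equality on arbitrary types, terms and substitutions, but the excerpt already notes that this is derivable from the explicit rules by a simple mutual induction on the syntax.
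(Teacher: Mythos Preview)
Your proof is correct and follows essentially the same approach as the paper: induction on the typing derivation, with the variable and coherence introduction rules as base cases where $A \equiv \ty(t)$ on the nose, and the conversion case handled by the induction hypothesis together with symmetry and transitivity. The only cosmetic difference is that the paper derives the second conjunct $\Delta \vdash t : \ty(t)$ globally from the first via a single application of the conversion rule, whereas you carry it through the induction; both are fine.
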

\begin{proof}
  We induct on the structure of the derivation of
  $\Delta \vdash t : A$. There are three cases. If the
  derivation is simply the introduction rule for coherences, then the
  result is immediate, as $\ty(t)$ is the assigned type. Similarly, if the derivation is the introduction rule for variables, then $\ty(t)$ is again the assigned type.

  Otherwise, the proof must be by the use of the conversion rule.
  In this case, we have that \(\Delta \vdash t : B\) and \(B = A\).
  By induction hypothesis we have that \(B = \ty(t)\) and so by transitivity and symmetry of equality we get that \(A = B = \ty(t)\) as required.

  The second part follows from the first by an application of the conversion rule.
\end{proof}

To progress towards a proof of Theorems~\ref{subthm} and~\ref{thm:conversion-thm}, we must show that our equality is well behaved. In particular we need to show that it is well behaved with respect to context extension and substitution.

\begin{restatable}{lemma}{weakening}\label{lem:weakening}
  Suppose \(\Gamma \vdash A\). If \(\Gamma \vdash s = t\) then \(\Gamma , A \vdash s = t\). Similar results hold for equality of types and substitutions. Further, if \(\Gamma \vdash s : B\) then \(\Gamma , A \vdash s : B\). Again similar results hold for typing of types and substitutions.
\end{restatable}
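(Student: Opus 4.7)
The plan is to prove all four statements (weakening for typing and for equality, each at the level of terms, types, and substitutions, together six judgments in total) by simultaneous induction on the derivation tree of the input judgment. Simultaneity is forced by the mutual dependency between typing and equality: the conversion rule of Figure~\ref{fig:conversion} has an equality premise, while the generating equalities \textsc{prune}, \textsc{disc}, and \textsc{endo} each carry a typing premise of the form $\Gamma \vdash \coh(\Delta : A)[\sigma] : B$. To make the induction well-founded, I would measure derivations by their total height, so that every appeal to the hypothesis in either direction (typing $\to$ equality or equality $\to$ typing) strictly decreases the measure.

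For the typing half, I would proceed by case analysis on the last rule applied. The variable rule is immediate: membership $x : A' \in \Gamma$ is preserved under context extension, and $\Gamma, A \vdash$ follows from the ambient assumption $\Gamma \vdash A$. For the coherence rules of Figure~\ref{fig:terms}, the pasting context $\Delta$, the type $s \to_A t$, and the support conditions all live purely on the $\Delta$ side, so the only premise referencing $\Gamma$ is $\Gamma \vdash \sigma : \Delta$, which is weakened by the substitution part of the hypothesis. The substitution rules of Figure~\ref{fig:basic-rules} weaken componentwise. For the conversion rule, weaken both the typing premise and the equality premise by the hypothesis and re-apply conversion.

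For the equality half, I would again case-split on the last rule. The structural rules of Figure~\ref{fig:structural-eq} (reflexivity of variables, symmetry, transitivity, and congruence for arrows, coherences, and extended substitutions) are immediate by applying the hypothesis to each premise. For the three generating equalities, the only premise that mentions $\Gamma$ is the typing premise $\Gamma \vdash \coh(\Delta : A)[\sigma] : B$, which is weakened by the typing half. The pasting-diagram operations appearing in each conclusion, namely $\Delta\sslash\alpha$, $A\sub{\pi_\alpha}$, $\sigma\sslash\alpha$, and $\{A,t\} \circ \sigma$, are defined purely in terms of $\Delta$ and the action of $\sigma$ on variables of $\Delta$, so they are literally unchanged when we replace $\Gamma$ by $\Gamma, A$. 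Hence the same rule instance produces the required conclusion $\Gamma, A \vdash s = t$.

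The main obstacle is not any single case but the bookkeeping: organizing the mutual induction so that each inductive appeal is to a strictly smaller derivation, and verifying that no rule hides a dependency of a pasting-diagram operation on $\Gamma$. No subtle difficulties arise, because weakening leaves every syntactic object on the $\Delta$ side untouched, substitutions retain both their source (now $\Gamma, A$) and target (still $\Delta$), and the free variables $\FV(-)$ appearing in support conditions are insensitive to the enlargement of the ambient context.
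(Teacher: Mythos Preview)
Your proposal is correct and matches the paper's approach exactly: the paper's proof reads, in its entirety, ``All the results are proved by a simple mutual induction.'' You have spelled out that mutual induction in full detail, correctly identifying the only nontrivial point---that the typing premises in the generating equalities and the equality premise in the conversion rule force the induction to be simultaneous---and correctly observing that none of the side conditions or pasting-diagram operations depend on the ambient context $\Gamma$.
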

\begin{proof}
  All the results are proved by a simple mutual induction.
\end{proof}

To show that the equality interacts well with substitution, we want that substituting a term or type by equal substitutions produces equal terms (or types) and that the action of substitution preserves equality. The first is a simple induction.

\begin{restatable}{lemma}{eqsubrinv}
  \label{lem:eq-sub-r-inv}
  Let $\sigma, \sigma' : \Sub$ such that $\Gamma \vdash \sigma = \sigma'$.  For $A : \Type$, $t : \Term$,
  and $\tau : \Sub$, we have:
  \begin{equation*}
    \Gamma \vdash A\sub{\sigma} = A\sub{\sigma'}\quad
    \Gamma \vdash t\sub{\sigma} = t\sub{\sigma'}\quad
    \Gamma \vdash \tau \circ \sigma = \tau \circ \sigma'
  \end{equation*}
\end{restatable}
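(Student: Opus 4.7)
The plan is to prove all three equalities simultaneously by mutual structural induction on the syntactic form of $A$, $t$, and $\tau$, taking the hypothesis $\Gamma \vdash \sigma = \sigma'$ as fixed throughout. The induction is on the depth of the thing being substituted into, not on the equality derivation $\sigma = \sigma'$.

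For types, the $\star$ case is immediate by reflexivity since $\star\sub\sigma \equiv \star \equiv \star\sub{\sigma'}$. For a dependent arrow $s \to_A t$, we unfold $\sub\sigma$ and $\sub{\sigma'}$ using the defining equations, apply the induction hypotheses on $A$, $s$, $t$, and then invoke the congruence rule for the arrow former from Figure~\ref{fig:structural-eq}. For substitutions, $\langle \rangle$ is immediate, and the case $\langle \tau , x \mapsto u \rangle \circ \sigma$ unfolds via the definition of composition to $\langle \tau \circ \sigma , u\sub\sigma \rangle$, after which the induction hypotheses on $\tau$ (giving the left component equality) and on $u$ (giving the right component equality) combine via the congruence rule for substitution extension.

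For terms the interesting case is a variable $x$; we need $x\sub\sigma = x\sub{\sigma'}$. For this I would prove a small auxiliary lemma: if $\Gamma \vdash \sigma = \sigma'$ and $x \mapsto u \in \sigma$, then there exists $u'$ with $x \mapsto u' \in \sigma'$ and $\Gamma \vdash u = u'$. This is established by structural induction on $\sigma$, together with the observation that the structural rule for substitution equality builds $\langle \sigma , x \mapsto u \rangle = \langle \tau , x \mapsto u' \rangle$ from componentwise equalities $\sigma = \tau$ and $u = u'$; the reflexivity of substitution equality follows from applying this rule pointwise, so every derivation of $\sigma = \sigma'$ can be traced to componentwise equalities of the entries. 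The coherence term case $\coh(\Delta : U)[\tau]$ unfolds via the semantic substitution rule to $\coh(\Delta : U)[\tau \circ \sigma]$ versus $\coh(\Delta : U)[\tau \circ \sigma']$; applying the induction hypothesis on the substitution $\tau$ yields $\tau \circ \sigma = \tau \circ \sigma'$, after which the congruence rule for $\coh$ (with the trivial equality $U = U$ from reflexivity) gives the result.

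The main obstacle is really just the variable case: everything else is a congruence argument, but for variables we must unpack how substitution equality is generated in order to read off the pointwise equality of entries. Once the auxiliary lemma above is in hand, the rest of the proof is entirely routine pattern matching against the defining equations of $\sub{-}$ and $\circ$, and the congruence rules of Figure~\ref{fig:structural-eq}.
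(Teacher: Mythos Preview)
Your proposal is correct and follows essentially the same approach as the paper: mutual structural induction on the syntactic entity being substituted into, with the type and substitution cases reducing to the term case, the coherence case reducing to the substitution-composition case, and the variable case handled by reading off componentwise equality from the definition of substitution equality.

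One minor remark: you flag the variable case as ``the main obstacle'' and set up an auxiliary lemma, but in this paper substitution equality is generated \emph{only} by the two congruence rules in Figure~\ref{fig:structural-eq} (reflexivity, symmetry and transitivity are added only for term equality, and derived for substitutions). Hence any derivation of $\Gamma \vdash \sigma = \sigma'$ is already, by inversion, a zip of componentwise term equalities, so your auxiliary lemma is immediate rather than something requiring its own induction. The paper accordingly dispatches this case in one line.
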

\begin{proof}
  Since substitution on types is given by structural induction (and the
  base case $A = \star$ is trivial), the first equation follows from
  the second. Similarly, the third equation also follows from the second.

  Now, for the second, if $t$ is a variable, then the result is clear by the
  definition of equality on substitutions, which is just equality of
  the comprising terms. On the other hand,
  $t \equiv \coh(\Gamma : A)[\tau]$, then we are reduced to showing
  that $\tau \circ \sigma = \tau \circ \sigma'$ which follows by the induction hypothesis (using the third equation).
\end{proof}

To prove that the action of substitution preserves equality we need the following lemma:

\begin{lemma}
\label{lem:quotientsub}
Composition of substitutions is compatible with taking quotients:
$$(\mu \circ \sigma) \sslash \alpha \equiv (\mu \sslash \alpha) \circ \sigma$$
\end{lemma}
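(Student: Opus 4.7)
The plan is to prove this by straightforward structural induction on the peak $p_\alpha$ corresponding to the locally maximal variable $\alpha$ in the Dyck word representation of $\Delta$. The key observation is that both $\mathsf{remove}$ and composition of substitutions are defined by pattern matching on the substitution's trailing entries, and these patterns align: $\mathsf{remove}$ with a $\Uparrow_{\mathsf{pk}}$- or $\Updownarrow_{\mathsf{pk}}$-peak inspects the last two entries $\langle \sigma', s, t \rangle$ of its substitution argument, and composition also peels off the tail of the substitution being composed on the left. So every syntactic manipulation performed by one side will match the other step for step, yielding a genuine syntactic equality $\equiv$ rather than only a definitional one.

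Concretely, I would fix $\mu \equiv \langle \mu', s, t \rangle$ when needed, and case on $p_\alpha$. In the base case $p_\alpha \equiv \Updownarrow_{\mathsf{pk}}\,d\,y\,f$, both sides reduce to $\mu' \circ \sigma$: on the left, $(\langle \mu', s, t \rangle \circ \sigma) \sslash \alpha = \langle \mu' \circ \sigma,\, s\sub\sigma,\, t\sub\sigma \rangle \sslash \alpha \equiv \mu' \circ \sigma$, and on the right, $(\langle \mu',s,t \rangle \sslash \alpha) \circ \sigma \equiv \mu' \circ \sigma$. In the case $p_\alpha \equiv \Uparrow_{\mathsf{pk}}\,d\,y\,f\,p$, with $\mu \equiv \langle \mu', s, t \rangle$, both sides unfold to $\langle (\mu' \circ \sigma) \sslash \alpha',\, s\sub\sigma,\, t\sub\sigma \rangle$ and $\langle (\mu' \sslash \alpha') \circ \sigma,\, s\sub\sigma,\, t\sub\sigma \rangle$ respectively, where $\alpha'$ is the peak named by the inner $p$; these agree by the induction hypothesis. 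Finally, the case $p_\alpha \equiv \Downarrow_{\mathsf{pk}}\,d\,p$ leaves the substitution unchanged by $\mathsf{remove}$, so both sides simplify immediately to the inductive instance on $p$.

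I do not expect a real obstacle here: the two operations commute almost by inspection, because $\mathsf{remove}$ only ever discards the two trailing entries at a peak, and $-\circ \sigma$ acts homomorphically on the spine $\langle -, s, t\rangle$ while applying $\sub\sigma$ to the terms. The only bookkeeping is to be careful that at each step the substitution has the correct number of trailing entries for the pattern match of $\mathsf{remove}$ to succeed; this is guaranteed by the typing that $\Delta \vdash \mu$ and the typing preservation properties we will appeal to in Lemma~\ref{lem:pruning-construct}, and is not needed at the raw-syntactic level since the pattern match is unambiguous once the peak has the given shape.
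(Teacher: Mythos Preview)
Your proposal is correct and matches the paper's approach: the paper's proof is a one-sentence remark that pruning $\alpha$ merely removes two terms, so it makes no difference whether $\sigma$ is applied before or after removal; your structural induction on $p_\alpha$ is simply the explicit unfolding of that observation.
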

\begin{proof}
  Pruning \(\alpha\) from the substitution simply removes two terms. It does not matter if we apply \(\sigma\) to these terms and then remove them, or simply remove them first.
\end{proof}

\begin{restatable}{lemma}{eqsublinv}
  \label{lem:eq-sub-l-inv}
  Let $A, A' : \Type$, $t, t' : \Term$, and $\sigma, \sigma' : \Sub$. Now suppose we have \(\tau : \Sub\) with \(\Gamma \vdash \tau : \Delta\). Then the following implications hold.
  \begin{alignat*}{3}
    &\Delta \vdash A = A' &\quad&\implies&\quad&\Gamma \vdash A\sub\tau = A'\sub\tau\\
    &\Delta \vdash t = t' &\quad&\implies&\quad&\Gamma \vdash t\sub\tau = t'\sub\tau\\
    &\Delta \vdash \sigma = \sigma' &\quad&\implies&\quad&\Gamma \vdash \sigma\circ\tau = \sigma'\circ\tau\\
    &\Delta \vdash A &\quad&\implies&\quad&\Gamma \vdash A \sub\tau\\
    &\Delta \vdash t : A &\quad&\implies&\quad&\Gamma \vdash t\sub\tau : A \sub\tau\\
    &\Delta \vdash \sigma : \Theta &\quad&\implies&\quad&\Gamma \vdash \sigma \circ \tau : \Theta
  \end{alignat*}
\end{restatable}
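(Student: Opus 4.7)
The plan is to prove all six statements simultaneously by a single mutual induction on the derivations of the hypotheses. The induction is genuinely entangled: the coherence case of statement 5 will need statement 6 (to know that $\sigma \circ \tau$ still types as a substitution into the pasting context), the conversion case of statement 5 will need statement 1 (to transport the type-equality $B = A$ under $\tau$), and statement 6 feeds back into statement 5 in its cons-case. I would group the six goals together and process the generators/typing-rules one at a time.

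The typing-preservation goals 4--6 are mostly routine. For statement 5, the variable case uses only the definition of well-typed substitution into $\Delta$; the coherence case observes that under $\tau$ we have $\coh(\Gamma' : s \to_A t)[\sigma]\sub\tau \equiv \coh(\Gamma' : s \to_A t)[\sigma \circ \tau]$, so the pasting context, type and support conditions are unchanged, and the only new obligation is $\Gamma \vdash \sigma \circ \tau : \Gamma'$, supplied by statement 6; the conversion case appeals to IH 5 followed by IH 1 and another use of the conversion rule. Statement 6 is proved by induction on $\Theta$, using the defining clauses of composition and Proposition~\ref{prop:catt-is-cat} to rewrite $A\sub{\sigma''}\sub\tau$ as $A\sub{\sigma'' \circ \tau}$ when checking the cons-case; statement 4 falls out of statement 5 by structural induction on $A$.

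The equality-preservation goals 1--3 are proved jointly by induction on the equality derivation. The reflexivity, symmetry, transitivity and congruence rules from Figure~\ref{fig:structural-eq} unfold by the pointwise definition of substitution and composition, and statements 1 and 3 for these cases reduce to statement 2 on component terms. The \emph{disc} generator is immediate: both sides collapse to $d_{n+1}\sub{\sigma \circ \tau}$ after pushing $\tau$ inside via Proposition~\ref{prop:catt-is-cat}. The \emph{endo} generator is similar, using associativity of composition to identify $\{A,t\} \circ \sigma \circ \tau$. The main obstacle is the \emph{prune} generator. Here I would re-apply the \textsc{prune} rule to $\coh(\Delta : A)[\sigma \circ \tau]$ with the same locally-maximal variable $\alpha$: its typing premise is provided by statement 5, and its identity premise follows from $\alpha\sub\sigma \equiv \i_{\dim \alpha - 1}\sub{\tau'}$ and Proposition~\ref{prop:catt-is-cat}, giving $\alpha\sub{\sigma \circ \tau} \equiv \i_{\dim \alpha - 1}\sub{\tau' \circ \tau}$. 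To finish, the two resulting pruned substitutions must be shown syntactically equal, and this is exactly $(\sigma \circ \tau) \sslash \alpha \equiv (\sigma \sslash \alpha) \circ \tau$, the content of Lemma~\ref{lem:quotientsub}. Once the prune case is discharged, the rest of the induction is bookkeeping, and the six statements close simultaneously.
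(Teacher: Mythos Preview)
Your proposal is correct and follows essentially the same approach as the paper: a mutual induction on all six statements, with the only non-routine cases being the three equality generators for statement~2, and the \textsc{prune} case being discharged via Lemma~\ref{lem:quotientsub}. You spell out more of the routine interactions (e.g.\ how statements~5 and~6 feed into each other, and how the conversion case of~5 uses~1), whereas the paper simply asserts these are easy and writes out only the generator cases, but the structure and key step are the same.
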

\begin{proof}
  All parts of the proof follow by mutual induction on all 6 statements. The only difficult cases are those for the second statement, where the equation \(t = t'\) is one of the generators of the equality.

  Hence, suppose we have
  \[ \Delta \vdash t \equiv \coh(\Gamma : A)[\sigma] = \coh(\Gamma\sslash \alpha :
    A\sub{\pi_{\alpha}})[\sigma\sslash x] \equiv t' \] Then we are
  reduced to showing that
  $ \Gamma \vdash (\sigma \circ \tau)\sslash \alpha = (\sigma\sslash \alpha) \circ
  \tau$ which follows from Lemma~\ref{lem:quotientsub}.

  Next, if we have
  \[ \Delta \vdash \t \equiv \coh(\uD^{n+1} : S^n)[\sigma] = d_{n+1}\sub{\sigma} \equiv t' \]
  we argue as follows:
  \begin{align*}
    \coh(\uD^{n+1} : S^n)[\sigma]\sub{\tau} &\equiv \coh(\uD^{n+1} : S^n)[\sigma \circ \tau] \\
                                            &= d_{n+1}\sub{\sigma \circ \tau} \\
                                            &\equiv d_{n+1}\sub{\sigma}\sub{\tau}
  \end{align*}
  Finally, in the case that
  \[ t \equiv \coh (\Gamma : t \to_A t)[\sigma] = \i_{\dim \, A + 1}\sub{\{A,t\} \circ \sigma} \equiv t' \]
  we obtain:
  \begin{align*}
    \coh (\Gamma : t \to_A t)[\sigma]\sub{\tau} &\equiv \coh (\Gamma : t \to_A t)[\sigma \circ \tau] \\
                                                &= \i_{\dim \, A+1}\sub{\{A,t\} \circ \sigma \circ \tau} \\
                                                &\equiv \i_{\dim \, A+1}\sub{\{A,t\} \circ \sigma}\sub{\tau}
  \end{align*}
  where all syntactic steps follow from Proposition~\ref{prop:catt-is-cat}.
\end{proof}

We now have enough to prove the first theorem:

\subthm*
\begin{proof}
  This result is a corollary of Lemmas~\ref{lem:eq-sub-r-inv}~and~\ref{lem:eq-sub-l-inv}.
\end{proof}

Next we need to show that the constructions used in pruning are well formed.

\begin{restatable}{lemma}{pruningconstruct}
  \label{lem:pruning-construct}
  Suppose \(\Delta \vdash_p\), \(\Gamma \vdash \sigma : \Delta\), and \(\alpha \in \LM(\Delta)\), with \(\alpha\sub\sigma \equiv \i_k\sub\tau\). Then we have \(\Delta \sslash \alpha \vdash \pi_\alpha : \Delta\), \(\Gamma \vdash \sigma \sslash \alpha : \Delta \sslash \alpha\), and \(\Gamma \vdash \sigma = \pi_\alpha \circ \sigma \sslash \alpha\).
\end{restatable}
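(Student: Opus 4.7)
The plan is to prove all three claims simultaneously by induction on the structure of the peak $p_\alpha$, silently generalising to peaks inside arbitrary Dyck words (not just pasting contexts), so that the inductive hypothesis applies after stripping an outer $\Uparrow$ or $\Downarrow$ layer. A prerequisite auxiliary fact, provable by a routine side-induction on $p$, is that $\ty(d)\sub{\mathsf{project}\,p} \equiv \ty(\mathsf{excise}\,p)$ and $\tm(d)\sub{\mathsf{project}\,p} \equiv \tm(\mathsf{excise}\,p)$; this is what makes the typing of $\pi_\alpha$ go through in the inductive step.

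In the base case $p_\alpha = \Updownarrow_{\mathsf{pk}}\,d\,y\,f$, we have $\alpha = f$, $\tgt(\alpha) = y$, $\src(\alpha) = \tm(d)$, $\Delta = \lceil d\rceil, y:\ty(d), f:\tm(d)\to_{\ty(d)} y$, $\Delta\sslash\alpha = \lceil d\rceil$, and $\sigma \equiv \langle \sigma', t_y, t_\alpha\rangle$ with $\sigma\sslash\alpha \equiv \sigma'$. Well-typedness of $\pi_\alpha \equiv \langle \id_{\lceil d\rceil}, \tm(d), \i(\ty(d),\tm(d))\rangle$ follows from Lemma~\ref{lem:identities} applied to $\tm(d) : \ty(d)$, and well-typedness of $\sigma'$ is immediate from that of $\sigma$. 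For the equality, the hypothesis $t_\alpha \equiv \i_k\sub{\tau}$ combined with uniqueness of inferred types (Proposition~\ref{uniquecohtype}) forces the definitional equalities $\tm(d)\sub{\sigma'} = d_k\sub\tau$, $\ty(d)\sub{\sigma'} = S^{k-1}\sub\tau$, and $t_y = d_k\sub\tau$. Using Lemma~\ref{lem:typeunfoldingsub} and the syntactic identity $\{S^{k-1}, d_k\} \equiv \id_{\uD^k}$ (an easy induction on $k$), one then computes
\[
\pi_\alpha \circ \sigma' \equiv \langle \sigma', \tm(d)\sub{\sigma'},\, \i_k\sub{\{\ty(d)\sub{\sigma'},\, \tm(d)\sub{\sigma'}\}}\rangle,
\]
whose final component is definitionally equal to $\i_k\sub\tau \equiv t_\alpha$ by a second application of Lemma~\ref{lem:typeunfoldingsub} and Proposition~\ref{prop:catt-is-cat}. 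Combining via the structural rules of Figure~\ref{fig:structural-eq} then yields $\pi_\alpha \circ \sigma' = \sigma$.

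In the inductive cases the hypothesis $\alpha\sub\sigma \equiv \i_k\sub\tau$ passes unchanged to the restriction of $\sigma$ to the inner Dyck word, since $\alpha$ lives strictly inside. For $\Uparrow_{\mathsf{pk}}\,d\,y\,f\,p$, write $\sigma \equiv \langle \sigma'', t_y, t_f\rangle$ and apply the IH to $p$ and $\sigma''$; the outer well-typedness of $\pi_\alpha \equiv \langle \mathsf{project}\,p, y, f\rangle$ and of $\sigma\sslash\alpha \equiv \langle \sigma''\sslash\alpha, t_y, t_f\rangle$ follows by extending the inner conclusions using the auxiliary fact above, and the outer equality reduces, via the defining recursion for $\circ$, to $\langle \mathsf{project}\,p \circ (\sigma''\sslash\alpha), t_y, t_f\rangle = \langle \sigma'', t_y, t_f\rangle \equiv \sigma$, using the IH together with the observation that $\mathsf{project}\,p$ only mentions variables of $\lceil d\rceil$, so the extra components $t_y, t_f$ do not affect the composition. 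The case $\Downarrow_{\mathsf{pk}}\,d\,p$ is simpler still, since no new data is introduced at this layer.

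The main technical obstacle is the base-case equality: one must carefully extract the right definitional equalities from $t_\alpha \equiv \i_k\sub\tau$ via uniqueness of inferred types, and then thread Lemma~\ref{lem:typeunfoldingsub} through the composition while recognising $\{S^{k-1}, d_k\}$ as $\id_{\uD^k}$. The inductive cases are essentially bookkeeping, but the auxiliary compatibility of $\mathsf{project}$ and $\mathsf{excise}$ with $\ty$ and $\tm$ is essential and is best proven in advance.
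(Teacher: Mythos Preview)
Your proposal is correct and follows essentially the same approach as the paper: induction on the peak $p_\alpha$, generalised to arbitrary Dyck words, with the same auxiliary fact $\ty(d)\sub{\mathsf{project}\,p}\equiv\ty(\mathsf{excise}\,p)$ and the same use of Proposition~\ref{uniquecohtype} and Lemma~\ref{lem:typeunfoldingsub} in the base case. The only organisational difference is that the paper treats well-typedness of $\pi_\alpha$ in a separate induction before doing $\sigma\sslash\alpha$ and the equality together, whereas you fold all three into one pass; and in the $\Uparrow_{\mathsf{pk}}$ case the paper is explicit that well-typedness of $\langle\sigma''\sslash\alpha,t_y,t_f\rangle$ needs not just the auxiliary fact but also the \emph{equality} part of the inductive hypothesis (to pass from $\ty(d)\sub{\sigma''}$ to $\ty(\mathsf{excise}\,p)\sub{\sigma''\sslash\alpha}$ via Lemma~\ref{lem:eq-sub-r-inv} and the conversion rule), which you have available but do not flag.
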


\begin{proof}
  The proof proceeds by induction on the peak \(p_\alpha\). We begin with \(\pi_\alpha\). To make notation simpler we will prove for any \(d \in \Dyck\, n\) and \(p \in \Peak\, d\) that \(\lceil \mathsf{excise}\,p \rceil \vdash \mathsf{project}\,p : \lceil d \rceil\), and proceed by induction on \(p\):

In the case that the peak is \(\Updownarrow_{\mathsf{pk}} d\, y\, f\), then the validity of the substitution
\[\langle \id_{\lceil d \rceil}, \tm(d), \i_{\dim \, \ty(d)}\sub{\{\ty(d), \tm(d)\}} \rangle\]
follows by the validity of identity substitutions, validity of \(\tm(d)\) and \(\ty(d)\) (for which we omit the proof), validity of identities (Lemma~\ref{lem:identities}), and that weakening preserves validity (Lemma~\ref{lem:weakening}).

If the peak is of the form \(\Uparrow_{\mathsf{pk}}d\,y\,f\,p\), then we need to show that
\[\langle \mathsf{project}\,p, y , f \rangle \]
is valid. This follows from weakening, inductive hypothesis and that \(\ty(\mathsf{excise}\,p) \equiv \ty(d)\sub{\mathsf{project}\,p}\) holds for all \(d\) and \(p\) (proof of this is again omitted).

If the peak is of the form \(\Downarrow_{\mathsf{pk}}d\,p\) then we are done by inductive hypothesis.

We now move onto showing that for all \(d \in \Dyck\, n\), \(p \in \Peak\, d\), and valid \(\sigma : \lceil d \rceil \to \Gamma\) that \(\Gamma \vdash \mathsf{remove}\,p\,\sigma : \lceil \mathsf{excise}\,p \rceil\) and \(\Gamma \vdash \mathsf{project}\,p \circ \mathsf{remove}\,p\,\sigma = \sigma\). We prove these statements by induction on the peak \(p\):

\proofstep When the peak is \(\Updownarrow_{\mathsf{pk}}d\,y\,f\) then the first statement is clear. Suppose we have \(\sigma \equiv \langle \tau, s , t \rangle\). For the second we need to show that:
\[ \langle \id_{\lceil d \rceil} , \tm(d), \i_{\dim \, \ty(d)}\sub{\{\ty(d),\tm(d)\}} \rangle \circ \tau = \langle \tau, s, t \rangle \]
By Proposition~\ref{prop:catt-is-cat}, \(\id_{\lceil d \rceil} \circ \tau \equiv \tau\) so we are left to prove that \(\tm(d)\sub\tau = s\) and \(\i_{\dim \, \ty(d)}\sub{\{\ty(d),\tm(d)\}}\sub\tau = t\). We now use that \(t \equiv f \sub \sigma \equiv \alpha\sub\sigma \equiv \i_{\dim \, B}\sub{\{B,u\}}\) for some term \(u\) and type \(B\).

By Lemma~\ref{lem:identities} and Proposition~\ref{uniquecohtype}, we know that \(\i_{\dim(B)}\sub{\{B,u\}}\) can be given type \(u \to_B u\). Further we know that \(t\) has type \((\tm(d) \to_{\ty(d)} y) \sub\sigma\). As \(t\) and \(\i_{\dim(B)}\sub{\{B,u\}}\) are the same term we must have that \(s \equiv y\sub\sigma = u\), \(\tm(d)\sub\sigma = u\), and \(\ty(d)\sub\sigma = B\). Therefore we immediately get that \(\tm(d)\sub\tau \equiv \tm(d)\sub\sigma = u = s\). Further we have that:
\begin{align*}
  t &\equiv \i_{\dim(B)}\sub{\{B,u\}}\\
    &= \i_{\dim(\ty(d))}\sub{\{\ty(d)\sub\sigma,\tm(d)\sub\sigma\}}\\
    &\equiv \i_{\dim(\ty(d))}\sub{\{\ty(d),\tm(d)\}}\sub\sigma\\
    &\equiv \i_{\dim(\ty(d))}\sub{\{\ty(d),\tm(d)\}}\sub\tau
\end{align*}
as required. The second syntactic equality above comes from Lemma~\ref{lem:typeunfoldingsub}.

\proofstep In the case that the peak is \(\Uparrow_{\mathsf{pk}}d\,y\,f\,p\), the second statement follows easily from weakening and inductive hypothesis. For this proof we introduce the shorthand \(d \sslash p := \mathsf{excise}\,p\). Then, for the first we again assume that \(\sigma \equiv \langle \tau, s, t \rangle\), which means we need to show that (unwrapping some definitions):
\begin{align*}
  \Gamma \vdash \langle \mathsf{remove}\,p\,\tau, s, t \rangle : \lceil d \sslash p \rceil, &(y : \ty(d \sslash p)),\\[-2pt]
                                                                                            &(f : \tm(d \sslash p) \to_{\ty(d \sslash p)} y)
\end{align*}
As we have \(\Gamma \vdash \langle \tau , s , t \rangle : \lceil d \rceil , (y : \ty(d)), (f : \tm(d)  \to_{\ty(d)} y)\) we obtain:
\begin{itemize}[label=-]
\item Firstly that \(\Gamma \vdash \tau : \lceil d \rceil\), which by inductive hypothesis implies that \(\Gamma \vdash \mathsf{remove}\,p\,\tau : \lceil  d \sslash p \rceil\).
\item Secondly that \(\Gamma \vdash s : \ty(d) \sub\tau\). From this we prove using the inductive hypothesis and Lemma~\ref{lem:eq-sub-r-inv}:
  \begin{align*}
    \ty(d) \sub\tau &= \ty(d) \sub {\mathsf{project}\,p \circ \mathsf{remove}\,p\,\tau}\\
                    &\equiv \ty(d) \sub{\mathsf{project}\,p} \sub{\mathsf{remove}\,p\,\tau}\\
                    &\equiv \ty(d\sslash p) \sub{\mathsf{remove}\,p\,\tau}
  \end{align*}
      and so by the conversion rule we have that \(\Gamma \vdash s : \ty(d\sslash p) \sub{\mathsf{remove}\,p\,\tau}\).
    \item Lastly that \(\Gamma \vdash t : (\tm(d) \to_{\ty(d)} y) \sub {\langle \tau , s \rangle}\). By simplification,
      \[(\tm(d) \to_{\ty(d)} y) \sub {\langle \tau , s \rangle} \equiv \tm(d)\sub\tau \to_{\ty(d)\sub\tau} s\]
      As in the last case, \(\ty(d)\sub\tau = \ty(d\sslash p) \sub{\mathsf{remove}\,p\,\tau}\). Similarly \(\tm(d)\sub\tau = \tm(d\sslash p) \sub{\mathsf{remove}\,p\,\tau}\). Lastly we have \(s \equiv y \sub{\langle \mathsf{remove}\,p\,\tau , s \rangle}\) and so with all these parts and the conversion rule we get that:
      \[\Gamma \vdash t : (\tm(d \sslash p) \to_{\ty(d \sslash p)} y) \sub{\langle \mathsf{remove}\,p\,\tau , s \rangle}\]
    \end{itemize}
    By combining all these parts, we have proven what we needed to prove.

    \proofstep If the peak is of the form \(\Downarrow_{\mathsf{pk}}d\,p\) then both statements follow from inductive hypothesis.

  This covers all of the cases and so all three statements in the lemma hold.
\end{proof}

The last component needed for the proof of Theorem~\ref{thm:conversion-thm} is that support conditions are preserved by definitional equality. For this we need the following lemma.

\begin{restatable}{lemma}{support}\label{lem:support}
  If \(\Gamma \vdash s = t\), then \(\supp(s) = \supp(t)\).
\end{restatable}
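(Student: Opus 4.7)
The plan is to proceed by induction on the derivation of $\Gamma \vdash s = t$, carried out mutually with the analogous statements that $\Gamma \vdash A = B$ implies $\supp(A) = \supp(B)$ and $\Gamma \vdash \sigma = \tau$ implies $\supp(\sigma) = \supp(\tau)$, where the support of a type or substitution is defined as the downward closure of its free variables. The reflexivity, symmetry, and transitivity rules are immediate from the inductive hypothesis, and the congruence rules reduce directly to their premises; for the coherence congruence in particular, since $\FV(\coh(\Delta : A)[\sigma]) = \FV(\sigma)$, the conclusion follows from the substitution-equality premise alone.

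Before addressing the three generating rules, I would first record an auxiliary observation: for any well-typed $\Gamma \vdash t : A$, $\FV(\ty(t)) \subseteq \supp(t)$. For a variable this holds by downward-closure of $\supp$; for a coherence term $\coh(\Delta : B)[\tau]$ with inferred type $B\sub{\tau}$, every free variable of $B\sub{\tau}$ arises from applying $\tau$ to some $y \in \FV(B) \subseteq \Delta$, so it lies in $\FV(\tau) = \FV(t) \subseteq \supp(t)$.

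The \emph{pruning} case reduces to showing $\FV(\sigma) = \FV(\sigma \sslash \alpha)$ in $\Gamma$, since only the entries at $\alpha$ and $\tgt(\alpha)$ are removed by the quotient. Using the hypothesis $\alpha\sub{\sigma} \equiv \i_{\dim \, \alpha - 1}\sub{\tau}$, inspection of inferred types via Proposition~\ref{uniquecohtype} forces $\src(\alpha)\sub{\sigma} \equiv \tgt(\alpha)\sub{\sigma}$, and Lemma~\ref{lem:typeunfoldingsub} identifies $\tau$ with $\{\ty(\alpha), \src(\alpha)\} \circ \sigma$. All variables of $\ty(\alpha)$ other than $\tgt(\alpha)$ still lie in $\Delta \sslash \alpha$, and the contribution of $\tgt(\alpha)$ coincides with that of $\src(\alpha)$, so the free variables of the two removed entries are already present in $\FV(\sigma \sslash \alpha)$. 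For the \emph{disc} rule, the inclusion $\FV(d_{n+1}\sub{\sigma}) \subseteq \FV(\sigma)$ is automatic, while conversely the auxiliary fact above together with $\ty(d_{n+1}\sub{\sigma}) \equiv S^n\sub{\sigma}$ absorbs the free variables of every other entry of $\sigma$ into $\supp(d_{n+1}\sub{\sigma})$, so taking downward closures yields the required equality.

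The \emph{endo} case is the main obstacle and requires a further sub-lemma: if $\Gamma \vdash \sigma : \Delta$ and $\Delta \vdash u$, then for every $y \in \supp_{\Delta}(u)$ we have $\supp_{\Gamma}(y\sub{\sigma}) \subseteq \supp_{\Gamma}(u\sub{\sigma})$. This is proved by induction on the depth of $y$ in the downward closure of $\FV(u)$: the base case $y \in \FV(u)$ gives $\FV(y\sub{\sigma}) \subseteq \FV(u\sub{\sigma})$ directly, while the inductive step uses the auxiliary observation to conclude $\FV(y\sub{\sigma}) \subseteq \FV(\ty(z)\sub{\sigma}) \subseteq \supp_{\Gamma}(z\sub{\sigma})$ for a lower-depth predecessor $z$. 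Combined with $\supp_{\Delta}(t) = \FV(\Delta)$, as forced by the equal-source-and-target clause of the coherence typing rule, this yields $\FV(\sigma) \subseteq \supp(t\sub{\sigma}) \subseteq \supp(\i_{\dim \, A}\sub{\{A, t\} \circ \sigma})$, and the reverse inclusion follows because the free variables of $\{A, t\} \circ \sigma$ are contained in $\FV(\sigma)$.
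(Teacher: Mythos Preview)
Your direct induction on the equality derivation has a genuine circularity that the paper's proof is specifically designed to break. In each generator case you silently replace a definitional equality of types by a syntactic one. In the disc case you write $\ty(d_{n+1}\sub{\sigma}) \equiv S^n\sub{\sigma}$, but $\ty$ is the \emph{inferred} type of the term $d_{n+1}\sub\sigma$ as it sits in $\Gamma$; the substitution typing rule only tells you this term can be \emph{assigned} the type $S^n\sub{\sigma}$, possibly via the conversion rule, so the two are only related by $=$, not $\equiv$. (Inferred type commutes with substitution for coherence terms, but not for variables.) The same slip occurs in the pruning case, where Proposition~\ref{uniquecohtype} gives $\src(\alpha)\sub{\sigma} = \tgt(\alpha)\sub{\sigma}$, not $\equiv$; and in your endo sub-lemma, where the step $\FV(\ty_\Delta(z)\sub{\sigma}) \subseteq \supp_{\Gamma}(z\sub{\sigma})$ would need $\ty_\Gamma(z\sub\sigma) \equiv \ty_\Delta(z)\sub\sigma$. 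Converting any of these definitional equalities into a statement about supports is precisely the lemma you are proving.

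The paper resolves this by introducing a parallel judgement $\vdash_s$, identical to $\vdash$ except that each equality \emph{generator} carries the extra hypothesis that the two sides have equal support. One then proves $\vdash$ implies $\vdash_s$ by mutual induction on \emph{all} judgement forms---typing and equality together, not just equality. The payoff is that when handling a generator, the inductive hypothesis applies to its typing premise, yielding e.g.\ $\Gamma \vdash_s \sigma : \Delta$; and inside $\vdash_s$ one already knows that definitional equality preserves support, so the auxiliary facts you need (such as $\supp(\sigma) = \supp(d_{n+1}\sub\sigma)$ for a disc substitution, or Lemma~\ref{lem:pruning-construct} giving $\sigma =_s \pi_\alpha \circ (\sigma\sslash\alpha)$) become available without circularity. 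Your argument could likely be repaired by folding the typing judgements into the mutual induction and adding a companion statement such as ``$\Gamma \vdash t : A$ implies $\supp(A) \subseteq \supp(t)$'', but as written the induction is over equality derivations alone and cannot access the typing premises of the generators in the way you need.
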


\begin{proof}
  Define a new set of support-preserving typing judgements \(\vdash_s\), which have the same structural and conversion rules (Figures~\ref{fig:structural-eq}~and~\ref{fig:conversion}) as \(\Catt_{\su}\), but for each generator (Figure~\ref{fig:eq-rules}) we add the condition that the terms on the left and right hand side of the equality have the same support. For example, we will have this modified rule for disc removal:
  \begin{mathpar}
    \inferrule*[Right=disc']{
      n : \N \\ \Gamma \vdash_s \coh(\uD^{n+1} : S^n)[\sigma] : B \\\\ \supp(\coh (D^{n+1} : S^n)[\sigma]) = \supp(d_{n+1}\sub\sigma)
    }{
      \Gamma \vdash_s \coh(\uD^{n+1} : S^n)[\sigma] = d_{n+1}\sub{\sigma}
    }
  \end{mathpar}

  By a simple induction, if \(\Gamma \vdash_s s = t\) then \(\supp(s) = \supp(t)\). In the rest of the proof we prove that the typing judgements \(\vdash\) and \(\vdash_s\) are actually the same.

  It is clear that if a statement is valid under the support-preserving judgement, then it is also valid with the original judgement. Hence we want to show that validity under the original judgements implies validity under the support-preserving judgements. We proceed by a mutual induction on all (original) judgements. Every case is trivial apart from \(\Gamma \vdash s = t \implies \Gamma \vdash_s s = t\) where the equality arises from some generator. We proceed through these in turn.

  \proofstep Suppose we have \(\Gamma \vdash \coh (\Delta : A)[\sigma] = \coh (\Delta \sslash \alpha : A\sub{\pi_\alpha})[\sigma \sslash \alpha]\) from pruning. By assumption we must have \(\Gamma \vdash \coh (\Delta : A) [\sigma] : B\) for some \(B\), and so by inductive hypothesis \(\Gamma \vdash_s \coh (\Delta : A) [\sigma] : B\) holds and \(\Gamma \vdash_s \sigma : \Delta\). Now note that the proof of Lemma~\ref{lem:pruning-construct} only used Lemmas~\ref{lem:weakening}~and~\ref{lem:eq-sub-r-inv}, for which the only property we required of the generators is that they were preserved by weakening. As support is preserved by weakening, we can deduce that this lemma holds for the support-preserving judgements. This means that \(\Gamma \vdash_s \sigma = \pi_\alpha \circ \sigma \sslash \alpha\) and so \(\supp(\sigma) = \supp(\pi_\alpha \circ \sigma \sslash \alpha)\). As \(\pi_\alpha\) is full we obtain:
    \begin{align*}
      \supp(\coh (\Delta : A)[\sigma]) &= \supp(\sigma)\\
                                       &= \supp(\pi_\alpha \circ \sigma \sslash \alpha)\\
                                       &= \supp(\sigma \sslash \alpha)\\
                                       &= \supp(\coh (\Delta \sslash \alpha : A \sub{\pi_\alpha}) [\sigma \sslash \alpha])
    \end{align*}
    and so \(\Gamma \vdash_s \coh(\Delta : A)[\sigma] = \coh(\Delta \sslash \alpha : A \sub{\pi_\alpha})[\sigma \sslash \alpha]\) as required.
  \proofstep Let \(\Gamma \vdash \coh(D^{n+1} : S^n)[\sigma] = d_{n+1}\sub\sigma\) be an instance of disc removal. As in the last case, \(\Gamma \vdash_s \sigma : D^{n+1}\). As the notion of support behaves nicely in terms valid in the support-preserving judgements, we get that the support of \(\sigma\) is the same as the support of its top dimensional term \(d_{n+1}\sub\sigma\), as we would expect. Therefore \(\Gamma \vdash_s \coh(D^{n+1} : S^n)[\sigma] = d_{n+1}\sub\sigma\).
  \proofstep Now if \(\Gamma \vdash \coh(\Delta : s \to_A s)[\sigma] = \i_{\dim(A)}\sub{\{A,t\} \circ \sigma}\) is an instance of endo-coherence removal, we once again have that \(\Gamma \vdash_s \coh(\Delta : s \to_A s)[\sigma] : B\) for some type \(B\). By the typing rules for coherences, we must either have that \(\supp(s) = \partial^-(\Gamma)\) and \(\supp(s) = \partial^+(\Gamma)\), or that \(\supp(s) = \FV(\Gamma)\). As \(\partial^-(\Gamma)\) and \(\partial^+(\Gamma)\) are not equal, we deduce that the second condition must hold, and so \(\{A,s\}\) is full. Therefore:
    \begin{align*}
      \supp(\coh(\Delta : s \to_A s)[\sigma]) &= \supp(\sigma)\\
                                              &= \supp(\{A,s\} \circ \sigma)\\
                                              &= \supp(\i_{\dim(A)}\sub{\{A,s\} \circ \sigma})
    \end{align*}
    and so \(\Gamma \vdash_s \coh(\Delta : s \to_A s)[\sigma] = \i_{\dim(A)}\sub{\{A,s\} \circ \sigma}\).

  Finally, since we have proven that \(\Gamma \vdash s = t\) implies \(\Gamma \vdash_s s = t\), and that \(\Gamma \vdash_s s = t\) implies that \(\supp(s) = \supp(t)\), we have completed the proof.
\end{proof}

We can then finally prove Theorem~\ref{thm:conversion-thm}.

\conversionthm*
\begin{proof}[Proof of Theorem~\ref{thm:conversion-thm}]
  We proceed by mutual induction on the derivation of the equalities (\(\Gamma \vdash s = t\), \(\Gamma \vdash A = B\), and \(\Gamma \vdash \sigma = \tau\)). In fact, it is necessary for our inductive hypothesis to be of the form ``if \(\Gamma \vdash s = t\) and \(\Gamma \vdash A = B\), then \(\Gamma \vdash s : A\) \emph{if and only if} \(\Gamma \vdash t : B\)'' and similar for terms and substitutions. The cases for types and substitutions are easy, and so we run through the cases for terms. We note that it is sufficient to prove the statement ``if \(\Gamma \vdash s = t\) then \(s\) is valid if and only if \(t\) is valid and \(\ty(s) = \ty(t)\)'', from which we can recover the original statement using the conversion rule and Proposition~\ref{uniquecohtype}.

  \proofstep If the equality is from reflexivity on variables, we have nothing to prove.
  \proofstep If the equality is from the transitivity constructor, we use that ``if and only if'' is a transitive relation.
  \proofstep Similarly if the equality arises from the symmetry constructor, we again use that ``if and only if'' is a symmetric relation. Note that this is why we couldn't have used the weaker induction hypothesis of the form ``if \(\Gamma \vdash s = t\) and \(s\) is valid then \(t\) is valid''.
  \proofstep For the structural rule for coherences, we have \(\Gamma \vdash \coh(\Delta : A)[\sigma] = \coh(\Delta : B)[\tau]\) from \(A = B\) and \(\sigma = \tau\). Suppose \(\coh(\Delta : A)[\sigma]\) is valid. Firstly by inductive hypothesis we have \(\Delta \vdash B\) and \(\Gamma \vdash \tau : \Delta\). By Lemma~\ref{lem:support}, we have that the relevant support condition for \(\coh(\Delta : B)[\tau]\) is satisfied and so this term is valid. The other direction follows symmetrically. Finally we also have:
    \[\ty(\coh(\Delta : A)[\sigma]) \equiv A\sub\sigma = B \sub \tau\equiv \ty(\coh(\Delta : B)[\tau])\]
    where the equality uses Lemmas~\ref{lem:eq-sub-r-inv}~and~\ref{lem:eq-sub-l-inv}.
  \proofstep Suppose we have \(\Gamma \vdash \coh(\Delta : A)[\sigma] = \coh(\Delta\sslash \alpha : A\sub{\pi_\alpha})[\sigma\sslash \alpha]\) from pruning. The backwards direction is clear so we prove the forwards direction. Suppose \(\Gamma \vdash \coh(\Delta : A)[\sigma]\) is valid, which implies \(\Gamma \vdash \sigma : \Delta\) and \(\Delta \vdash A\). Therefore by Lemma~\ref{lem:pruning-construct}, \(\Delta \sslash \alpha \vdash \pi_\alpha : \Delta\) and \(\Gamma \vdash \sigma \sslash \alpha : \Delta \sslash \alpha\). By Lemma~\ref{lem:eq-sub-l-inv}, \(\Delta \vdash A \sub{\pi_\alpha}\).

    All that remains to show that the right hand side of the equation is valid is that the appropriate support condition holds. Assume that \(A \equiv s \to_{A'} t\). First, assume that \(\supp(s) = \supp(t) = \FV(\Delta)\). Then,
    \[\supp(s\sub{\pi_\alpha}) = \supp(t\sub{\pi_\alpha}) = \supp(\pi_\alpha) = \FV(\Delta \sslash \alpha)\]

    Otherwise \(\supp(s) = \partial^-(\Delta)\) and \(\supp(t) = \partial^+(\Delta)\). We first suppose that the dimension of \(\Delta \sslash \alpha\) is one less than the dimension of \(\Delta\). In this case, \(\alpha\) was the only variable of maximum dimension and so \(\supp(s)\) is the whole of \(\Delta\) except \(\alpha\) and \(\tgt(\alpha)\) and \(\supp(t)\) is the whole of \(\Delta\) except \(\alpha\) and \(\src(\alpha)\). By the definition of \(\pi_\alpha\) it is immediate that \(\supp(s\sub{\pi_\alpha}) = \supp(t\sub{\pi_\alpha}) = \FV(\Delta\sslash\alpha)\), which is sufficient for validity.

    The last case is as above but assuming that the dimension of \(\Delta\) and \(\Delta\sslash\alpha\) are the same. Then \(\partial^-(\Delta\sslash\alpha) = \partial^-(\Delta)\) as they both have all lower dimensional variables, no maximal dimensional variables and all target-free variables, which are the same in both. Since \(\pi_\alpha\) is the identity of \(\partial^-(\Delta)\) we have that \(\supp(s\sub{\pi_\alpha}) = \partial^-(\Delta\sslash\alpha)\). Similarly we have that \(\partial^+(\Delta\sslash\alpha) = \partial^+(\Delta)\setminus \{\tgt(\alpha)\} \cup \{\src(\alpha)\}\) by inspecting which variables are source free. Then as \(\pi_\alpha\) is the identity on \(\partial^+(\Delta) \setminus\{\tgt(\alpha)\}\) and sends \(\tgt(\alpha)\) to \(\src(\alpha)\) we have that \(\supp(t) = \partial^+(\Delta\sslash\alpha)\) as required.

    Finally we have:
    \begin{align*}
      \ty(\coh(\Delta : A)[\sigma]) &\equiv A \sub \sigma\\
                                    &= A \sub{\pi_\alpha \circ \sigma \sslash \alpha}\\
                                    &\equiv A \sub{\pi_\alpha}\sub{\sigma\sslash \alpha}\\
                                    &\equiv \ty(\coh(\Delta \sslash \alpha : A\sub{\pi_\alpha})[\sigma\sslash \alpha])
    \end{align*}
    from Lemmas~\ref{lem:eq-sub-r-inv}~and~\ref{lem:pruning-construct}.
  \proofstep For disc removal, we again only need to show the forward direction. If \(\coh(\uD^{k+1} : S^{k})[\sigma]\) is valid, then \(\sigma\) is too. Since \(\sigma\) is a substitution from a disc, \(\sigma \equiv \{S^k\sub\sigma,d_{k+1}\sub\sigma\}\), and by Lemma~\ref{lem:identities}, we have \(\Gamma \vdash d_{k+1}\sub\sigma : S^k \sub\sigma\). Further, by Proposition~\ref{uniquecohtype}, we must have \(\ty(d_{k+1}\sub\sigma) = S^k\sub\sigma \equiv \ty(\coh(\uD^{k+1} : S^k)[\sigma])\).
  \proofstep For endo-coherence removal, we once again only consider the forwards direction. If \(\coh(\Delta : s \to_A s)[\sigma]\) is valid in \(\Gamma\), then \(\Gamma \vdash \sigma : \Delta\) and \(\Delta \vdash s : A\). Therefore by Lemma~\ref{lem:identities} \(\Delta \vdash \{A,s\} : \uD^{\dim(A) + 1}\) and so \(\Gamma \vdash \{A,s\} \circ \sigma : \uD^{\dim(A) + 1} \). This implies that \(\i_{\dim(A)}\sub{\{A,s\} \circ \sigma}\) is valid and by Lemma~\ref{lem:typeunfoldingsub} we have \(\i_{\dim(A)}\sub{\{A,s\} \circ \sigma} \equiv \i_{\dim(A)}\sub{\{A\sub\sigma,s\sub\sigma\}}\) and so
    \begin{align*}
      \ty(\coh(\Delta : s \to_A s)[\sigma]) &\equiv s\sub\sigma \to_{A\sub\sigma} s\sub\sigma\\
                                            &\equiv \ty(\i_{\dim(A)}\sub{\{A\sub\sigma,s\sub\sigma\}})\\
                                            &\equiv \ty(\i_{\dim(A)}\sub{\{A,s\} \circ \sigma})
    \end{align*}
    which is what we needed to prove.

  The induction hypothesis is a strengthening of the statement of the theorem, and so we are done.
\end{proof}

\subsection{Models of \texorpdfstring{$\Catt$}{Catt} and \texorpdfstring{$\Catt_{\su}$}{Cattsu}}
\label{sec:models}

The type theories $\Catt$ and $\Catt_{\su}$ generate syntactic
categories via a standard construction: objects are contexts, and
morphisms are substitutions, up to definitional equality. Composition is given by composition
of substitutions, which is well-defined (for $\Catt_{\su}$) by Theorem~\ref{subthm}, and associative and unital (for both theories) by Proposition~\ref{prop:catt-is-cat}.  We abuse notation slightly and write $\Catt$ and
$\Catt_{\su}$ for the corresponding categories.  We also
write $\Catt^{\pd}$ and $\Cattsupd$ for the full subcategories
consisting of just the pasting contexts.

As we have seen, this category contains a collection of objects
$\uD^k$ corresponding to the $k$-dimensional disc contexts.  Together, these contexts and their source and target substitutions
constitute a \emph{globular object} in $\Catt$ and $\Catt_{\su}$, respectively, i.e. a diagram \[
\begin{tikzpicture}[xscale=2, yscale=.8]
\node (0) at (0,0) {$\cdots\vphantom{D^k}$};
\node (1) at (1,0) {$\uD^{k+1}$};
\node (2) at (2,0) {$\uD^{k}$};
\node (3) at (3,0) {$\cdots\vphantom{D^k}$};
\node (4) at (4,0) {$\uD^0$};
\draw [bend left, ->] (0.north east) to node[above] {$\sigma$} (1.north west);
\draw [bend left, ->] (1.north east) to node[above] {$\sigma$} (2.north west);
\draw [bend left, ->] (2.north east) to node[above] {$\sigma$} (3.north west);
\draw [bend left, ->] (3.north east) to node[above] {$\sigma$} (4.north west);
\draw [bend right, ->] (0.south east) to node[below] {$\tau$} (1.south west);
\draw [bend right, ->] (1.south east) to node[below] {$\tau$} (2.south west);
\draw [bend right, ->] (2.south east) to node[below] {$\tau$} (3.south west);
\draw [bend right, ->] (3.south east) to node[below] {$\tau$} (4.south west);
\end{tikzpicture}
\] such that $\sigma \circ \sigma = \sigma \circ \tau$ and $\tau \circ \sigma = \tau \circ \tau$.
\begin{definition}
  A category $\mathcal{C}$ containing a globular object is said to
  admit \emph{globular limits} if every diagram of the following form admits a limit, where $j_k \leq i_{k-1}, i_k $ for all $1\leq k \leq n$:
  \[\begin{tikzpicture}[xscale=2, scale=0.9]
  \node (0) at (0,1) {$D^{i_0}$};
  \node (1) at (2,1) {$D^{i_1}$};
  \node (11) at (1,0) {$D^{j_1}$};
  \node (12) at (3,0) {$D^{j_n}$};
  \node (2) at (4,1) {$D^{i_n}$};
  \draw[->] (0) to node[below left, scale=0.7] {$\underbrace{\tau \circ \cdots \circ \tau}_{i_0 - j_1}$}(11);
  \draw[->] (1) to node[below right, scale=0.7] {$\underbrace{\sigma \circ \cdots \circ \sigma}_{i_1 - j_1}$}(11);
  \draw[->] (2) to node[below right, scale=0.7] {$\underbrace{\sigma \circ \cdots \circ \sigma}_{i_n - j_n}$}(12);
  \path (1) to node {$\cdots$} (12);
  \end{tikzpicture}
  \]
%
  Dually, a category $\mathcal{C}$ containing a
  co-globular object is said to admit \emph{globular sums} if the
  category $\mathcal{C}^{\op}$ admits globular limits.
\end{definition}

\begin{theorem}
\mbox{$\Catt$ and $\Catt_{\su}$ admit globular limits.}
\end{theorem}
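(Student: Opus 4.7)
The plan is to construct the limit explicitly as a pasting context obtained by gluing the discs $D^{i_k}$ along their shared boundary discs $D^{j_k}$. For each $k$, the iterated target of $D^{i_{k-1}}$ at dimension $j_k$ and the iterated source of $D^{i_k}$ at dimension $j_k$ both coincide with $D^{j_k}$; the limit $L$ will be the pasting context obtained by identifying these two copies. Concretely, one can build $L$ by induction on $n$: start with $L_0 = D^{i_0}$, and at stage $k$ adjoin the variables of $D^{i_k}$ strictly above dimension $j_k$, reusing the $D^{j_k}$-boundary already present in $L_{k-1}$ as the source data. This produces a valid pasting context by straightforward application of the derivation rules of Figure~\ref{fig:ps-rules}, since at each stage we are simply extending a valid derivation by further $\Uparrow/\Downarrow$ moves.

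Next, one defines projection substitutions $\pi_k : L \to D^{i_k}$ which send each variable of $D^{i_k}$ to its image in the $k$-th stage of the gluing. By construction, both composites feeding into $D^{j_k}$, namely $\pi_{k-1}$ post-composed with the iterated target out of $D^{i_{k-1}}$, and $\pi_k$ post-composed with the iterated source out of $D^{i_k}$, pick out exactly the same variables of $L$, namely those constituting the shared $D^{j_k}$-boundary. Hence the cone condition holds on the nose.

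For universality, suppose $(f_k : \Delta \to D^{i_k})_{k=0}^{n}$ is a cone over the diagram. Using the characterisation of substitutions out of a disc, each $f_k$ is determined by the pair $\{S^{i_k - 1}\sub{f_k}, d_{i_k}\sub{f_k}\}$, and Lemma~\ref{lem:typeunfoldingsub} describes how these pairs behave under precomposition. We assemble a mediating substitution $\langle f_0, \ldots, f_n \rangle : \Delta \to L$ by sending each variable of $L$ inherited from $D^{i_k}$ to the corresponding term of $f_k$. The cone condition guarantees that on the shared boundary variables living inside some $D^{j_k}$, the two natural candidate values agree, so the assignment is well-defined; by inspection it is the unique substitution making all the $\pi_k$-triangles commute. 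The same argument goes through verbatim for $\Cattsupd$, where the cone and uniqueness conditions are only required to hold up to the equality of Section~\ref{sec:cattsu}; compatibility of the construction with this equality is ensured by Theorem~\ref{subthm}.

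The main technical obstacle is purely the bookkeeping of the inductive gluing construction: verifying that $L$ really is a pasting context (and not just a context), and that the projections $\pi_k$ and the mediating substitution are well-typed substitutions in the sense of Figure~\ref{fig:basic-rules}. Neither step uses any new ideas beyond careful tracking of variable names and dimensions, but both require a somewhat intricate induction on $n$ and on the data $(i_k, j_k)$.
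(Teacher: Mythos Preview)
Your sketch is essentially the standard construction, and it is what the cited reference~\cite{Benjamin} carries out in detail: build the limit as the pasting context obtained by iteratively gluing the $D^{i_k}$ along the prescribed $D^{j_k}$-boundaries, exhibit the variable-to-variable inclusions as the limiting cone, and read off the mediating substitution variable-by-variable. The paper itself does not prove the theorem; it simply points to folklore and to~\cite{Benjamin}, and then observes that the argument is identical for $\Catt_\su$ because it depends only on the variable structure of pasting contexts, which is unchanged. So there is nothing to compare your proof against beyond that remark.

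One small point of care in the $\Catt_\su$ case: you invoke Theorem~\ref{subthm} for compatibility with the equality, but that theorem (composition respects $=$) is what you need for \emph{existence} of the mediating map, to show the assembled substitution is well-typed via the conversion rule when the cone condition holds only up to $=$. For \emph{uniqueness} you need something slightly different, namely that the projections $\pi_k$ jointly hit every variable of $L$ on the nose, so that $\pi_k \circ g = \pi_k \circ g'$ for all $k$ forces $g = g'$ termwise via the structural equality rules of Figure~\ref{fig:structural-eq}. This is immediate from your construction of the $\pi_k$ as variable-to-variable inclusions, but it is worth being explicit that Theorem~\ref{subthm} alone does not give it.
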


\noindent
The proof that \Catt admits globular limits is to some extent folklore, and has recently been written out~\cite{Benjamin}. The proof for $\Catt_\su$ is the same, as it depends only on the variable structure of pasting contexts, which is the same in both theories.

With this in hand, we can give our notion of model.
\begin{definition}
  An \emph{$\infty$-category} is a presheaf on the category
  $\Catt^{\pd}$ which sends globular limits to
  globular sums. A \emph{strictly unital $\infty$-category}
  is a presheaf on $\smash{\Catt^{\pd}_{\su}}$ which sends globular limits to
  globular sums.
\end{definition}

\noindent
We write $\cCat_{\infty}$ for the resulting category of
$\infty$-categories and $\cCat^{\mathrm{su}}_{\infty}$ for the
strictly unital $\infty$\-categories, each being full subcategories of
the category of presheaves on $\Catt^\pd$ and $\Cattsupd$
respectively.

\smallskip

Note that since every valid $\Catt$ term will also be a valid
$\Catt_\su$ term, there is an evident functor
\mbox{$K: \Catt^\pd \to \Cattsupd$} which is the identity on objects (pasting
diagrams are the same in both theories), and which sends substitutions to
their equivalence classes under definitional equality.  Since $K$
preserves globular limits, precomposition with
\mbox{$K$} determines a  functor \mbox{$K^* : \cCat^{\mathrm{su}}_{\infty} \to \cCat_{\infty}$}, sending a strictly
unital $\infty$-category to its underlying $\infty$-category. We will
see in Section~\ref{sec:rehydration} that this functor is
fully-faithful; that is, that every $\infty$-category can arise from
at most one strictly unital $\infty$\-category in this way. In other
words, being strictly unital is a \emph{property} of an
$\infty$\-category.

\section{Reduction}
\paragraph{Overview}In this section we introduce a reduction relation~$\squig$ on types, terms and substitutions, and show that its reflexive, transitive and symmetric closure agrees with definitional equality. We then define a subrelation called \textit{standard reduction}, written $\bsquig$, and show it is a partial function which terminates after finitely many steps, giving a notion of normal form. Finally, we show that two terms have the same normal form just when they are definitionally equal, meaning that standard reduction gives an algorithm for deciding definitional equality.

\paragraph{Convention on Contexts} We say ``$U$ is valid in $\Gamma$'' to mean $\Gamma \vdash U$. We say ``$t$ is valid in $\Gamma$'' to mean there is some type $U$ such that $\Gamma \vdash t:U$. We say ``$t$ is valid'' to mean there exists some context $\Gamma$ and type $U$ such that $\Gamma \vdash t:U$.

When a fixed context $\Gamma$ is under consideration, we will often write $s = t$ to mean $\Gamma \vdash s = t$, and will do similar for types and substitutions.

\paragraph{Reflexive, Transitive, Symmetric Closures}Given a relation $\squig$ between terms, types, or substitutions, we write $\squig_\r$ for its reflexive closure, $\squig_\t$ for its transitive closure, and $\squig_\s$ for its symmetric closure,  in the set of valid syntactic entities. When we use multiple such subscripts, we mean this simultaneously; for instance, we write  $\squig_\rts$ for its simultaneous reflexive, transitive and symmetric closure of valid entities.

\subsection{General reduction}

We define a reduction relation on types, terms and substitutions, and show that the equivalence relation generated by these relations agrees with definitional equality.

We first define a simple syntactic property on terms, that of being an identity.

\begin{definition}\label{def:identityterm}
A term is an \emph{identity} if it is of the form $\i_n \sub{\sigma}$ for some $n \in \N$; that is, when its head is an identity coherence.
\end{definition}

\noindent
We emphasise that as a syntactic property, this is \textit{not} compatible with definitional equality. For example, if \mbox{$t \equiv \i_n \sub{\sigma}$}, then $t$ is an identity; but if we merely have $t = \i_n\sub{\sigma}$, then~$t$ is not necessarily an identity.

We now give the reduction relation. The following definitions are given by simultaneous induction. Determining whether a given type, term or substitution is a redex is a purely \textit{syntactic} condition, which can be mechanically verified, and does not refer to definitional equality.
\begin{definition}[Reduction of types]
The basic type $\star$ does not reduce. An arrow type \mbox{$U \equiv (u \to_T v)$} reduces as follows:
\begin{enumerate}
\item[(T1)] if $u \squig u'$, then:
$(u \to _T v) \xsquig {T1} (u' \to_T v)$
\item[(T2)] if $v \squig v'$, then:
\((u \to _T v) \xsquig {T2} (v \to _T v')\)
\item[(T3)] if $T \squig T'$, then:
\((u \to _T v) \xsquig {T3} (u \to _{T'} v)\)
\end{enumerate}
\end{definition}

\begin{definition}[Reduction of substitutions] A substitution $\sigma \equiv [ s_1, \ldots, s_n ]$ reduces as follows, given a reduction $s_i^{} \squig s_i'$ of some argument:
$$[ s_1, \ldots, s_i, \ldots, s_n ] \xsquig S [ s_1, \ldots, s_i', \ldots, s_n ]$$
\end{definition}

\begin{definition}[Reduction of terms]
\label{def:redterms}
Variable terms do not reduce. A coherence \mbox{$t \equiv \coh (\Gamma:T)[\sigma]$} reduces as follows:
\begin{enumerate}[label=(\Alph*)]
\item if $\sigma \squig \sigma'$, then:
\begin{align*}
\coh (\Gamma : T)[\sigma] &\xsquig A \coh(\Gamma : T)[\sigma']
\intertext{\item if $t$ is not an identity, and $x \in \LM(\Gamma)$ for which $x[\sigma]$ is an identity, then we define:}
\coh(\Gamma : T)[\sigma]
&\xsquig B \coh(\Gamma \sslash x : T\sub{\pi_x})[\sigma \sslash x]
\intertext{\item[(C)] if $T \squig T'$, then:}
\coh (\Gamma : T)[\sigma]
&\xsquig C
\coh(\Gamma: T')[\sigma]
\intertext{\item[(D)] the disc removal relation:}
\coh(\uD^{n+1} : S^n)[\ldots, t]
&\xsquig D
t
\intertext{\item[(E)] if $t$ is not an identity, the endomorphism coherence removal relation:}
\coh(\Gamma : u \to_T u)[\sigma] &\xsquig E \i_{\dim \, T+1} \sub{\{T,u\} \circ \sigma }
\end{align*}
\end{enumerate}
\end{definition}

\noindent
If we can reduce $u \squig u'$ via some reduction stage  (X) above, we say that $u$ is a \emph{general X\-redex}, or just an \emph{X-redex}, and we write $u \xsquig X u'$. A given term can be a general X-redex for multiple stages (X). For example, if $u \squig u'$, then the term $\coh (\Gamma : u \to_T u)[\sigma]$ is a C-redex in at least 2 ways, and also an E-redex, as follows:
\begin{align*}
\coh(\Gamma : u \to_T u)[\sigma] &\xsquig C \coh(\Gamma : u' \to_T u)[\sigma]
\\
\coh(\Gamma: u \to_T u)[\sigma] &\xsquig C \coh(\Gamma:u \to_T u')[\sigma]
\\
\coh(\Gamma: u \to_T u)[\sigma] &\xsquig E \i_{\dim \, T + 1} \sub{ \{ T,u \} \circ \sigma }
\end{align*}
A term could also have no reductions at all. So reduction is partially defined, and multivalued in general.

We now show that the equivalence relation generated by this reduction relation agrees with definitional equality constructed in Section~\ref{sec:typetheory}.

\begin{restatable}{proposition}{equalityreduction}
\label{equalityreduction}
For any context $\Theta$, and for any terms, types, and substitutions valid in $\Theta$, the equality relation $=$ agrees with $\squig_\rts$, where $\squig_\rts$ is the reflexive transitive symmetric closure of $\squig$ in the set of terms, types, or substitutions valid in~\(\Theta\).
\end{restatable}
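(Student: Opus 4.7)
The plan is to prove the two inclusions $\squig_\rts \subseteq {=}$ and ${=} \subseteq \squig_\rts$ separately. Since $=$ is already reflexive, transitive and symmetric (Figure~\ref{fig:structural-eq}), the first inclusion reduces to showing $\squig \subseteq {=}$ on valid entities, which I would prove by a mutual induction on the simultaneous definitions of $\squig$ on types, terms and substitutions. Each reduction clause corresponds to a definitional equality rule: the congruence clauses A, C for coherences, T1--T3 for arrow types, and S for substitutions are discharged by the structural rules of Figure~\ref{fig:structural-eq}, while the three generator clauses B, D, E are exactly the generators \emph{prune}, \emph{disc} and \emph{endo} of Figure~\ref{fig:eq-rules}.

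For the backward direction ${=} \subseteq \squig_\rts$, I would proceed by mutual induction on the derivation of the equality judgement, again over all three syntactic classes simultaneously. Reflexivity on variables, transitivity and symmetry are immediate from the definition of $\squig_\rts$. The structural rules for $\to_A$, for substitution extension, and for coherences built from equal components are discharged by composing congruence reductions of the appropriate kind, applied componentwise to the zigzags produced by the inductive hypotheses. The generators \emph{prune}, \emph{disc} and \emph{endo} are realized by the corresponding single-step reductions B, D and E when applicable. Throughout, Theorem~\ref{thm:conversion-thm} together with Lemma~\ref{lem:pruning-construct} keep every intermediate entity in the $\squig$-chain valid in $\Theta$, so that the closure genuinely takes place within the set of valid entities.

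The principal obstacle is the side condition ``$t$ is not an identity'' on the reductions B and E, which is absent from the generators \emph{prune} and \emph{endo}. When one of these generators is instantiated on an identity $\i_k\sub\sigma$, the direct single-step reduction is unavailable and an alternative $\squig_\rts$-path must be supplied. I would handle these cases by direct syntactic calculation using the definition of $\i_k$, Proposition~\ref{prop:catt-is-cat} and Lemma~\ref{lem:typeunfoldingsub}: for endo-on-identity, the right-hand side collapses syntactically back to the same identity (using $\{S^{k-1},d_k\}\circ\sigma \equiv \sigma$ via the disc-substitution identity $\sigma \equiv \{S^{k-1}\sub\sigma,d_k\sub\sigma\}$), so reflexivity suffices; for prune-on-identity, the pruned term is itself a non-identity endo-coherence to which E is applicable, and combining that E-step with Lemma~\ref{lem:typeunfoldingsub} bridges the gap. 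I expect reconciling these identity-case side conditions to be the most delicate portion of the argument.
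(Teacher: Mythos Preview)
Your overall plan matches the paper's: prove the two inclusions separately, with the forward direction by straightforward induction, and identify the ``not an identity'' side conditions on B and E as the crux of the backward direction. Your treatment of endo-on-identity is correct: since $\{S^{k-1},d_k\}\equiv\id_{\uD^k}$, the endo generator applied to $\i_k\sub\sigma$ returns $\i_k\sub\sigma$ syntactically, so reflexivity suffices.

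However, your treatment of prune-on-identity has a genuine gap. Writing $p\equiv\i_n\sub{\ldots,v_2,v_1,v_1',\i_{n-1}\sub{\ldots,u_2,u_1}}$ for the identity being pruned, the pruned term $q$ is indeed a non-identity endo-coherence and admits an E-step, yielding
\[q\xsquig E\i_n\sub{\ldots,v_2,v_1,v_1,\i_{n-1}\sub{\ldots,v_2,v_1}}.\]
But this is \emph{not} syntactically $p$: the arguments differ at $v_1$ versus $v_1'$, and at the inner $v_i$ versus $u_i$. Lemma~\ref{lem:typeunfoldingsub} only rearranges syntactic compositions and cannot close this gap. What is actually needed are further A-reductions realizing $v_1'\squig_\rts v_1$ and $u_i\squig_\rts v_i$. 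These equalities hold definitionally because validity of the original substitution forces $u_1=v_1=v_1'$ and $u_i=v_i$; but to convert those definitional equalities into $\squig_\rts$-paths you need the very statement you are proving, applied to the subterms of $p$.

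Your induction on the equality derivation does not supply this: a single application of the prune generator carries no inductive hypothesis about the subterms of its argument, and the derivations of $u_i=v_i$ are not sub-derivations of that rule instance. The paper repairs this by structuring the backward direction as an induction on \emph{subterms of $p$} (proving every term is ``conservative''), with case analysis on the form of $p=q$ as the inner layer; this makes the hypotheses $u_i\squig_\rts v_i$ available precisely where they are needed. An alternative fix would be to add a simultaneous induction on dimension, since the offending subterms have strictly smaller dimension than $p$.
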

\begin{proof}
  To prove \(\squig_\rts\) implies \(=\), it is sufficient to prove that \(s \squig t\) implies \(s = t\) for terms \(s\) and \(t\) valid in \(\Theta\), as \(=\) is an equivalence relation and the statement for types and substitutions reduces to the statement for terms. This statement follows from a simple induction, noting that subterms of valid terms are themselves valid.

We say that a term $p$ is \emph{conservative} if for all contexts \(\Theta\) and terms $q$ such that $p$ and $q$ are valid in \(\Theta\), we have that $p = q$ implies $p \squig_\rts q$. To prove the lemma, we must therefore show that all terms are conservative. Our proof operates by induction on subterms of $p$, and by case analysis on the equality $p=q$. Almost all such cases are immediate; here we explicitly handle the only non-trivial case.

Suppose $p=q$ is the following equality, obtained by pruning an identity term:
\begin{align*}
&\i_n \sub{\ldots, v_2, v^{}_1, v_1', \i_{n-1} \sub{\ldots, u_3, u_2, u_1}} \\
&= (\coh \uD^{n-1}:\i_{n-1} \to_{(d_{n-1} \to _{S^{n-2}} d_{n-1})} \i_{n-1})[\ldots, v_2, v_1]
\end{align*}
By induction on subterms of $p$ we conclude that the terms $u_i$, $v_1'$, $v_i$ are conservative. By validity we must have $u_1 = v_1 = v_1'$ and $u_i = v_i$, and hence we conclude $u_1 \squig_\rts v_1 \squig_\rts v_1'$ and $u_i \squig_\rts v_i$. We do not have $p \xsquig B q$, since $p$ is an identity term, which are explicitly proscribed as B-redexes. However, $q$ admits an E\-reduction, and then a further series of A-reductions obtained by conservativity of the subterms of $p$, as follows:
\begin{align*}
q &\xsquig E \i_n \sub{\ldots, v_3, v_2, v_1, v_1, \i_{n-1} \sub{\ldots, v_3, v_2, v_1 } }
\\
&\xsquig A _\rts \i_n \sub{\ldots, v_3, v_2, v_1^{}, v_1', \i_{n-1} \sub{\ldots, u_3, u_2, u_1 } }
\\&\equiv p
\end{align*}
Hence $q \squig_\rts p$ as required.
\end{proof}

\noindent
From this we get the following lemma as a corollary.

\begin{restatable}{corollary}{reductionvalidity}
\label{cor:reductionvalidity}
Reduction of types, terms and substitutions preserves judgement validity:
\begin{itemize}
\item if $\Gamma \vdash A$ is valid and $A \squig A'$, then $\Gamma \vdash A'$ is valid;
\item if $\Gamma \vdash t :A$ is valid and $t \squig t'$, then $\Gamma \vdash t' : A$ is valid;
\item if $\Gamma \vdash t : A$ is valid and $A \squig A'$, then $\Gamma \vdash t : A'$ is valid;
\item if $\Gamma \vdash \sigma : \Delta$ is valid and $\sigma \squig \sigma'$, then $\Gamma \vdash \sigma' : \Delta$ is valid.
\end{itemize}
\end{restatable}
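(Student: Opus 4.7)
The plan is to show by simultaneous induction on the derivation of the reduction step that whenever the source is valid, both the target is valid (of the same type, where applicable) and the source and target are related by the definitional equality $=$. The key observation is that the reduction relation $\squig$ was designed precisely so that each of its generators mirrors a corresponding generator of definitional equality, once we insist on validity of the source.

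For the base cases of $\squig$ on terms (the rules D, B, and E in Definition~\ref{def:redterms}), each reduction step is a direct instance of the corresponding generator \textsc{disc}, \textsc{prune}, or \textsc{endo} in Figure~\ref{fig:eq-rules}. The typing premise required by each of those generators is supplied exactly by the assumed validity of the source, so $\Gamma \vdash s = t$ is immediate. Validity of the target then follows from Theorem~\ref{thm:conversion-thm}; in the \textsc{prune} case, we additionally invoke Lemma~\ref{lem:pruning-construct} to guarantee that the pruned data $\pi_\alpha$ and $\sigma \sslash \alpha$ is itself well-typed. The structural cases for terms, types, and substitutions (A, C, T1--T3, and S) reduce, via the inductive hypothesis applied to the reducing subcomponent (which is valid because it sits inside a valid enclosing expression), to applications of the structural equality rules in Figure~\ref{fig:structural-eq}; a further appeal to Theorem~\ref{thm:conversion-thm} transfers validity to the target.

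The first, second, and fourth clauses of the corollary follow immediately from this joint statement. The third clause follows from the first combined with the conversion rule of Figure~\ref{fig:conversion}: once we know $\Gamma \vdash A'$, the equality $\Gamma \vdash A = A'$ produced by the induction, together with conversion, gives $\Gamma \vdash t : A'$.

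I do not foresee any serious obstacle. The one mild subtlety is that Proposition~\ref{equalityreduction} cannot be applied as a black box here, since its statement presupposes validity on both sides of $\squig_\rts$, which is exactly the conclusion that we are trying to establish. The remedy is to carry out the induction directly at the level of the individual reduction rules, reading off from each one its corresponding definitional equality derivation; this is a short case analysis that neatly sidesteps any apparent circularity.
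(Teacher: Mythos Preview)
Your approach is correct and substantively the same as the paper's, which simply cites Proposition~\ref{equalityreduction} and Theorem~\ref{thm:conversion-thm} in one line. Your direct induction over the reduction rules is exactly what unpacking the forward direction of Proposition~\ref{equalityreduction} amounts to. Your observation that the proposition cannot be invoked as a literal black box---because $\squig_\rts$ is by definition the closure \emph{within valid entities}, so asserting $s \squig_\rts t$ already presupposes validity of $t$---is a fair pedantic point that the paper glosses over; the resolution, as you correctly identify, is that the proof of the implication ``$s \squig t \Rightarrow s = t$'' in Proposition~\ref{equalityreduction} only ever uses validity of the source (the generating equality rules in Figure~\ref{fig:eq-rules} require a typing premise on the left-hand side only, and the structural rules in Figure~\ref{fig:structural-eq} require none), so the argument carries through and Theorem~\ref{thm:conversion-thm} then supplies validity of the target.
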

\begin{proof}
Immediate from Proposition~\ref{equalityreduction} and Theorem~\ref{thm:conversion-thm}.
\end{proof}

\subsection{Standard reduction}

We now define \textit{standard reduction}, denoted by $\bsquig$, a sub-relation of general reduction $\squig$. Standard reduction is a reduction strategy, in the following sense.

\begin{definition}
A \emph{reduction strategy} is a relation $\to$ on terms with the property that, if $a \to b$ and $a \to b'$, then $b \equiv b'$.
\end{definition}

Standard reduction works in a similar way to general reduction, but the reductions now have a preference order, so that higher-priority redexes, listed earlier in the following list, block lower-priority redexes listed later. Standard reduction is hence a reduction strategy by construction.

\begin{definition}[Standard reduction of types]
The standard reduction of a type $U \equiv (u \to_T v)$ is given by the first matching reduction in the following list, if any:
\begin{enumerate}
\item[(T0)] if $T \bsquig \widetilde T$, then:
\((u \to _{T\vphantom{\widetilde T}} v) \xbsquig {T0} (u \to _{\widetilde T} v)\)
\item[(T1)] if $u \bsquig \tilde u$, then:
\((u \to _T v) \xbsquig {T1} (\tilde u \to_T v)\)
\item[(T2)] if $v \bsquig \tilde v$, then:
\((u \to _T v) \xbsquig {T2} (v \to _T \tilde v)\)
\end{enumerate}
\end{definition}

\begin{definition}[Standard reduction of substitutions] Given a  substitution $\sigma \equiv \langle s_1, \ldots, s_n \rangle$, then if $s_i \bsquig \tilde s_i$ is the leftmost argument with a standard reduction, we have the following:
$$[ s_1, \ldots, s_i, \ldots, s_n] \xbsquig {S} [ s_1, \ldots, \tilde s_i, \ldots, s_n ]$$
\end{definition}

\begin{definition}[Standard reduction of terms]
\label{standardreductionterms}
A coherence term $t \equiv \coh (\Gamma:U)[\sigma]$ has a \textit{standard reduction} given by the first listed reduction which is defined, if any:
\begin{enumerate}[label=(\Alph*)]
\item if $\sigma \bsquig \tilde \sigma$, then:
\begin{align*}
\coh (\Gamma : U)[\sigma] &\xbsquig A \coh(\Gamma : U)[\tilde \sigma]
\intertext{\item if $t$ is not an identity, and $x \in \var(\Gamma)$ is the leftmost locally-maximal variable for which $x[\sigma]$ is an identity, then we define:}
\coh(\Gamma : U)[\sigma]
&\xbsquig B \coh(\Gamma \sslash x : U\sub{\pi_x})[\sigma \sslash x]
\intertext{\item[(C)] if $T \bsquig \tilde T$, then:}
\coh (\Gamma : T)[\sigma]
&\xbsquig C
\coh(\Gamma: \tilde T)[\sigma]
\intertext{\item[(D)] the disc removal relation:}
\coh(\uD^{n+1} : S^n)[\ldots, t]
&\xbsquig D
t
\intertext{\item[(E)] if $t$ is not an identity, the endo-coherence removal relation:}
\coh(\Gamma : u \to_A u)[\sigma] &\xbsquig E \i_{\dim \, A + 1} \sub{\{A,u\} \circ \sigma}
\end{align*}
\end{enumerate}
\end{definition}

\noindent
If we can reduce $s \bsquig t$ via some reduction label (X) above, we say that $u$ is a \emph{standard X\-redex}. It is an immediate consequence of the definition of standard reduction that it is a reduction strategy; that is, if a term, type or substitution has a standard reduction, it has \textit{exactly one} standard reduction. This is quite  unlike general reduction as defined as above. For example, suppose $u \bsquig \tilde u$, and consider the term $t \equiv \coh (\Gamma : u \to_A u)[\sigma]$. It is possible that $t$ is a standard A\-redex; failing that, it could be a standard B-redex; failing that, it will certainly be a standard C-redex. Although $t$ is an E-redex (that is, there exists $t'$ with $t \xsquig E t'$), it is {not} a \textit{standard} E-redex, since standard C-reductions are higher-priority than standard E-reductions.

Since standard reduction is unique when it exists, it is useful to introduce the following notation.
\begin{definition}
If $s$ has a standard reduction, we write it as $\tilde s$, and hence $s \bsquig \tilde s$. We call $\tilde s$ the \emph{standard reduct} of $s$.
\end{definition}

\noindent
These simple lemmas now follow.

\begin{restatable}
  {lemma}{standardtogeneral}\label{lem:standardtogeneral}If $s \bsquig \tilde s$, then $s \squig \tilde s$.\end{restatable}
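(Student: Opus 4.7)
The plan is a routine mutual induction on the three syntactic classes (types, substitutions, and terms), verifying that each clause defining standard reduction $\bsquig$ is subsumed by the corresponding clause of general reduction $\squig$. Since $\bsquig$ was defined to be a priority-ordered restriction of $\squig$, essentially all that needs to be checked is that the side conditions used by $\bsquig$ (e.g.\ ``$s_i$ is the leftmost argument with a standard reduction,'' or ``$x$ is the leftmost locally-maximal variable sent to an identity'') are stronger than, and therefore imply, the analogous conditions used by $\squig$ (``some argument reduces,'' ``some locally-maximal variable is sent to an identity'').

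Concretely, I would first handle types. The three standard rules (T0)--(T2) each invoke a standard reduction on a proper subterm, and we can apply the induction hypothesis to turn each into a general reduction of the same subterm; the resulting outer rewrite matches one of the general rules (T1)--(T3) exactly (modulo the harmless renumbering of priorities). For substitutions the single standard rule (S) picks out the leftmost reducing argument $s_i \bsquig \tilde s_i$; by the induction hypothesis $s_i \squig \tilde s_i$, and the same positional rewrite is one of the allowed general substitution reductions.

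For terms, I would go through the five standard rules in order. Rules (A) and (C) reduce to the general versions by direct appeal to the inductive hypothesis on $\sigma$ and on $T$ respectively. Rule (D) is literally identical in the two relations. Rules (B) and (E) carry the extra standard-side hypothesis ``$t$ is not an identity,'' which is not required by the corresponding general rules, so the implication holds trivially; in (B) one additionally uses that the ``leftmost'' locally-maximal variable is still \emph{a} locally-maximal variable sent to an identity. The only mild bookkeeping is making sure the induction is genuinely mutual across the three syntactic classes, so that the appeal to the induction hypothesis inside a coherence term (where $\sigma$ or $T$ reduces) is legitimate.

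There is no real obstacle; the lemma is essentially a definitional unpacking, and the proof amounts to the single sentence ``every clause of $\bsquig$ is a specialization of a clause of $\squig$, so the statement follows by a straightforward mutual induction on the derivation of $s \bsquig \tilde s$.'' If any subtlety arises, it is only in confirming that the renumbering of the type-reduction priorities (where general reduction lists (T1), (T2), (T3) with $T$-reduction last while standard reduction lists (T0) first) does not matter, since we are only asserting the existence of \emph{some} general reduction, not preserving priorities.
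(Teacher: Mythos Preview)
Your proposal is correct and matches the paper's approach, which dispatches the lemma in a single sentence: ``By definition, standard reduction is a subrelation of general reduction.'' Your detailed case analysis is a faithful unpacking of that sentence.

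One minor misreading: you state that in rules (B) and (E) the hypothesis ``$t$ is not an identity'' is present only on the standard side. In fact, the general reduction rules (B) and (E) carry the same hypothesis, so the clauses match exactly rather than the standard ones being strictly stronger. This does not affect the correctness of your argument.
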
%
\begin{proof}
By definition, standard reduction is a subrelation of general reduction.
\end{proof}

\vspace{-5pt}

\begin{restatable}
{lemma}{generaltostandard}
\label{lem:generaltostandard}
If $s \squig t$, then there is a unique $\tilde s$ with $s \bsquig \tilde s$.
\end{restatable}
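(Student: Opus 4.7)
The plan is to show that whenever \emph{some} general reduction is applicable, \emph{some} standard reduction is also applicable, because the clauses defining standard reduction at each syntactic layer (types, substitutions, terms) collectively cover every clause defining general reduction at the same layer. Uniqueness of $\tilde s$ is automatic: the definition of $\bsquig$ picks the first matching clause in a fixed priority order, so at most one standard reduct is produced. Hence the real content is an \emph{existence} statement, and this we prove by a simultaneous induction on the syntactic structure of types, substitutions, and terms (equivalently, on the derivation of $s \squig t$).

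The inductive cases are mostly routine. If $U \equiv (u \to_T v)$ admits a general reduction through clause (T1), (T2) or (T3), then one of $u,v,T$ admits a general reduction; by the induction hypothesis that one admits a standard reduction; so at least one of the clauses (T0),(T1),(T2) of standard type reduction applies and picks a unique $\tilde U$. The substitution case is analogous: if some $s_i$ admits a general reduction, then by induction some $s_i$ admits a standard reduction, so the standard reduction clause (S) selects the leftmost such position. For terms, if $t \equiv \coh(\Gamma:U)[\sigma]$ reduces via (A), induction gives $\sigma \bsquig \tilde\sigma$, so standard (A) applies; similarly (C) passes through the induction hypothesis on $U$. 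The clauses (B), (D), (E) are essentially identical in the general and standard systems — the only refinement is that standard (B) picks the leftmost locally-maximal argument that is an identity, while general (B) picks any — so whenever a general (B), (D), or (E) redex exists, the corresponding standard redex exists as well. Variables do not reduce, so the base case is vacuous.

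Putting this together, whenever $s \squig t$ the top-level rule used in that reduction has a standard analogue whose hypotheses are met, so \emph{some} standard reduction of $s$ exists. Combined with the fact that the priority order in Definition~\ref{standardreductionterms} (and the analogous definitions for types and substitutions) lets at most one clause fire, we conclude that $\tilde s$ exists and is unique.

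The only subtlety worth flagging is the interaction between clauses at the term level: a term $t$ may simultaneously be a (B), (C), (D) or (E) redex in the general sense, and one has to check that selecting the first applicable standard clause is well-defined. This is immediate from the way Definition~\ref{standardreductionterms} is phrased (``the first listed reduction which is defined''), so the apparent obstacle dissolves; no confluence-style argument is needed here. The genuine work on confluence is deferred to subsequent results and does not enter this lemma.
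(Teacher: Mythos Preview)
Your proposal is correct and matches the paper's approach: both argue by structural induction that whenever a general reduction applies, the (non-priority) precondition of the corresponding standard clause is met, and then the priority ordering guarantees that \emph{some} standard clause fires, while uniqueness is immediate from the ``first matching'' definition. The paper organises the term case as an explicit decision tree (``if $u$ is an A-redex \ldots; if not an A-redex but a B-redex \ldots''), whereas you case-split on the given general reduction and appeal to priority selection afterwards, but this is a cosmetic difference.
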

\begin{proof}
Since standard reduction is a reduction strategy, uniqueness is clear. What we must establish is existence. The intuition is straightforward: in essence, we define standard reduction by giving a priority order to the redexes for general reduction, and allowing only the highest-priority redex. The result is then immediate, because if $s$ has at least one reduction, then there must be a highest-priority such reduction.

We prove the result formally as follows, by simultaneous induction on the structure of terms, types, and substitutions. For the base cases, given by the type $\star$ or a variable term, there is no reduction, so the claim is vacuously true.

For a compound type $U\equiv (a\to_T b)$ is a type, then the statement follows immediately by induction on $a$, $b$ or $T$.

For a substitution $\sigma = \langle s_1,\ldots, s_n \rangle$, suppose we have some reduction $\sigma \squig \tau$ arising from some choice of index $i$ and some reduction $s_i \squig t$. Since $\sigma$ has a reducible argument, it must have a leftmost reducible argument, which we can write as $s_j$, with $j \leq i$. By induction on subterms $s_j \bsquig \tilde s_j$, and $[ \ldots, s_{j-1}, s_j, s_{j+1}, \ldots ] \bsquig [ \ldots, s_{j-1}, \tilde s_j, s_{j+1}, \ldots ]$ is the required standard reduction.

For a coherence term $s \equiv \coh (\Gamma : T)[\sigma]$ with a reduction $s \squig t$, we argue by case analysis as follows.
\begin{itemize}
\item
If $u$ is an A-redex, there must exist some  $\sigma'$ such that $\sigma \leadsto \sigma'$. By induction $\sigma \bsquig \tilde \sigma$, and hence $\coh (\Gamma:T)[\sigma] \xbsquig A \coh(\Gamma:T)[\tilde \sigma]$.

\item If $u$ is not an A-redex, but $u$ is a B-redex, then there must be some leftmost locally-maximal argument of $\Gamma$ with respect to which it is a standard B\-redex.

\item If $u$ is not an A- or B-redex, but it is a C-redex, then there must exist some $T'$ such that $T \squig T'$. Hence by induction $T \bsquig \tilde T$, and so $u \xbsquig C \coh (\Gamma : \tilde T)[\sigma]$.

\item If $u$ is not an A-, B- or C-redex, but it is a D-redex, then the D-reduction will be standard.

\item If $u$ is not an A-, B-, C- or D-redex, but it is an E-redex, then the E\-reduction will be standard.

\item If $u$ is not an A-, B-, C-, D- or E-redex, then $u$ cannot be reduced, contradicting the hypothesis of the theorem.
\end{itemize}
This completes the argument.
\end{proof}

\subsection{Termination of standard reduction}

Standard reduction gives us a notion of normal form.
\begin{definition}
We define a term, type or substitution to be in \emph{normal form} when it has no reduction, either by standard or general reduction (by Lemmas~\ref{lem:standardtogeneral} and~\ref{lem:generaltostandard}, these conditions are equivalent.)
\end{definition}

\noindent
In this section we show that standard reduction terminates after finite time on every term, type and substitution. This means that for every term $t$ we can obtain a term $N(t)$ in normal form, by repeatedly applying standard reduction until a normal form is reached.

To work towards our termination result, we consider what happens when we start with a given term and repeatedly perform standard reduction. What we find is a specific pattern of standard reductions, illustrated in Figure~\ref{fig:standardreduction}. We first perform some family of standard A-reductions. If these terminate, they will be followed by some family of standard B-reductions. If these terminate, they will be followed by some family of standard C\-reductions. If these terminate, and they are not yet sufficient to yield a normal form, they will be followed either by a single standard D-reduction, giving a term in normal form;  or by a single standard E-reduction and a family of standard A\-reductions, which if they terminate will yield a normal form. We prove this claim here.

\begin{figure}[t!]
\vspace{-.5cm}

\[
\scalebox{0.8}{\begin{tikzpicture}
\node (1) at (0,0) {$\bullet$};
\node (2) at (1,0) {$\bullet$};
\node (3) at (2,0) {$\bullet$};
\node (4) at (3,0) {$\bullet$};
\node (5) at (4,1) {$\bullet$};
\node (6) at (5,0) {$\bullet$};
\draw [standard] (1) to node [above] {A} node [below] {\rt} (2);
\draw [standard] (2) to node [above] {B} node [below] {\rt} (3);
\draw [standard] (3) to node [above] {C} node [below] {\rt} (4);
\draw [standard] (4) to [bend right] node [above] {D} node [below] {\r} (6);
\draw [standard] (4) to node [above left=-1pt] {E} node [below right=-2pt] {\r} (5);
\draw [standard] (5) to node [above right=-1pt] {A} node [below left=-2pt] {\rt} (6);
\end{tikzpicture}}
\]

\vspace{-8pt}
\caption{Standard reduction pathways to normal form}
\label{fig:standardreduction}
\end{figure}

\begin{restatable}{proposition}{progressive}
\label{prop:progressive}
The reflexive transitive closure of the standard reduction relation is obtained as the following composite:
$$\bsquig_\rt \quad= \quad
\xbsquig {A}_\rt
\,\xbsquig {B}_\rt
\,\xbsquig {C}_\rt
\Big(
\xbsquig {D}_\r
\cup
\,\xbsquig {E}_\r
\,\xbsquig {A}_\rt
\Big)
$$
\end{restatable}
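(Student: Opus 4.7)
The plan is to prove the two inclusions separately. The inclusion of the right-hand composite into $\bsquig_\rt$ is immediate, since each factor $\xbsquig{X}$ is a sub-relation of $\bsquig$, so any finite composite of such factors lies in $\bsquig_\rt$. The substantive content is the reverse inclusion; here I exploit the fact that standard reduction is a deterministic reduction strategy, so given $s$ the full standard reduction sequence $s \bsquig s_1 \bsquig s_2 \bsquig \cdots$ is uniquely determined. It therefore suffices to show that the sequence of labels along any such chain fits the claimed pattern $\xbsquig{A}_\rt \xbsquig{B}_\rt \xbsquig{C}_\rt (\xbsquig{D}_\r \cup \xbsquig{E}_\r \xbsquig{A}_\rt)$.

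This reduces to a local case analysis: for each label $X$, I determine which labels may appear immediately after an $X$-step. Four constraints need to be verified by direct inspection of Definition~\ref{standardreductionterms}. First, if $s \xbsquig{B} s'$, then $s'$ has no A-redex: since B was chosen in preference to A, the substitution $\sigma$ in $s \equiv \coh(\Gamma:U)[\sigma]$ is already A-normal, and the reduct's substitution $\sigma\sslash x$ is obtained by deleting two entries of $\sigma$, hence remains A-normal. Second, if $s \xbsquig{C} s'$, then $s'$ has neither A- nor B-redex: the substitution is unchanged, so the A-normal status and the absence of a locally maximal variable mapping to an identity are both preserved, and C-reduction does not turn a non-identity into an identity. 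Third, if $s \xbsquig{D} s'$, then $s'$ is in normal form: here $s \equiv \coh(\uD^{n+1}:S^n)[\ldots, t]$ has an A-normal substitution (since A had priority), so the D-reduct $s' \equiv t$ is a subterm of that substitution and is already in normal form.

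The fourth and most delicate constraint concerns the E-case: if $s \xbsquig{E} s'$, then $s'$ admits only A-reductions. Unfolding the definitions of $\i_{\dim A + 1}$, the disc substitution $\{A,u\}$, and substitution composition yields
\[ s' \equiv \coh\bigl(\uD^{\dim A + 1} : d_{\dim A + 1} \to_{S^{\dim A}} d_{\dim A + 1}\bigr)[\{A,u\} \circ \sigma]. \]
This term is an identity in the sense of Definition~\ref{def:identityterm}, blocking any standard B- or E-reduction; its type is built only from pasting-context variables and $\star$, blocking C; and its type has syntactically equal source and target $d_{\dim A + 1}$, so is not literally of the form $S^n$, blocking D. The only remaining possibility is an A-reduction inside $\{A,u\} \circ \sigma$, and since every such step leaves the outer head intact, the same argument applies to all subsequent steps, justifying the trailing $\xbsquig{A}_\rt$.

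Combining these constraints, the label of each step in a standard reduction chain cannot lie earlier in the priority ordering than its predecessor, except for the single exception that A-labels may reappear after an E-label. A straightforward induction on the length of the chain then yields the claimed decomposition. The main obstacle is the E-case analysis; the subtlety is that an endomorphism type of shape $d \to d$ is not literally syntactically equal to any $S^n$ despite having the same dimension, which is exactly what blocks a standard D-redex on $s'$ and is a crucial feature built into the syntactic formulation of the D-redex.
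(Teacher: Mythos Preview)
Your approach is essentially the paper's: both proofs proceed by showing that certain consecutive label combinations are impossible and then reading off the claimed decomposition. The paper enumerates twelve forbidden pairs and triples individually; you group them by the leading label. The content is the same.

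One small correction in your second case. The clause ``C-reduction does not turn a non-identity into an identity'' is false: take for instance
\[
s \equiv \coh\bigl(\uD^1 : d_1 \to_{S^0} \coh(\uD^1:S^0)[\id_{\uD^1}]\bigr)[\sigma],
\]
which C-reduces (via an inner D-step on the target) to $\i_1\sub\sigma$. Fortunately this clause is also unnecessary. What you actually need runs the other way: because $s$ is a standard C-redex its head type admits a reduction, so $s$ cannot itself be an identity (the type $d_k \to_{S^{k-1}} d_k$ is built from variables and is already normal). Hence the failure of $s$ to be a standard B-redex must come from the absence of any locally maximal variable mapping to an identity under $\sigma$, and that condition depends only on $\sigma$, which C-reduction leaves untouched. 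This is precisely the paper's argument for the $\xbsquig{C}\xbsquig{B}$ case, and once phrased this way your proof goes through cleanly.
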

\begin{proof}
We establish the claim by showing that the following composite reduction pairs and triples are impossible (that is, they are empty as relations).
\begin{itemize}
\item $p \xbsquig B q \xbsquig A r$. For $p$ to be a standard B--redex, it is required that it is not an A-redex, and so the arguments of $p$ must be in normal form. But the arguments of $q$ are a subset of the arguments of $p$, contradicting the claim that $q$ is a standard A-redex.
\item $p \xbsquig C q \xbsquig A r$. For $p$ to be a standard C-redex, it is required that it is not an A-redex, and so the arguments of $p$ must be in normal form. But C\-reductions do not change the arguments of a term, so the arguments of $q$ are the same as the arguments $p$, contradicting the claim that $q$ is a standard A-redex.
\item $p \xbsquig C q \xbsquig B r$. For $p$ to be a standard C-redex, it is required that it is not a B-redex, a condition that depends on the context and arguments of the term. But C-reductions do not change the context or arguments, so it is impossible that $q$ is a B-redex.
\item $p \xbsquig D q \bsquig r$. For a term to be a standard D-redex, it is required that it is not an A-redex, and so the arguments of $p$ must not be reducible. By the action of D-reduction, the term $q$ is one of the arguments of $p$, contradicting the claim that $q$ is reducible.
\item $p \xbsquig E q \xbsquig B r$. The standard E-reduct $q$ is an identity term, but identity terms are never B\-redexes, by definition.
\item $p \xbsquig E q \xbsquig C r$. The standard E-reduct $q$ is an identity term, but identity terms are never C\-redexes, since the cell part is in normal form.
\item $p \xbsquig E q \xbsquig D r$. The standard E-reduct $q$ is an identity term, but identity terms are never D\-redexes, as the head has the wrong syntactic form.
\item $p \xbsquig E q \xbsquig E r$. The standard E-reduct $q$ is an identity term, but identity terms are never E\-redexes, by definition.
\item $p \xbsquig E q \xbsquig A_\rt r \xbsquig B s$. The standard E-reduct $q$ is an identity term. A\-reductions do not change the head, and so $r$ will also be an identity term. But identity terms are never B-redexes, giving a contradiction.
\item $p \xbsquig E q \xbsquig A _\rt r \xbsquig C s$. The standard E-reduct $q$ is an identity term. A\-reductions do not change the head, and so $r$ will also be an identity term. But identity terms are never C-redexes, giving a contradiction.
\item $p \xbsquig E q \xbsquig A _\rt r \xbsquig D s$. The standard E-reduct $q$ is an identity term. A\-reductions do not change the head, and so $r$ will also be an identity term. But identity terms are never D-redexes, giving a contradiction.
\item $p \xbsquig E q \xbsquig A _\rt r \xbsquig E s$. The standard E-reduct $q$ is an identity term. A\-reductions do not change the head, and so $r$ will also be an identity term. But identity terms are never E-redexes, by definition.
\end{itemize}
The result is then established as follows, by imagining a standard reduction sequence for some given coherence term. Here we refer to composite relations by concatenation; so for example, $t \xbsquig B \xbsquig A t'$ just when there exists some $t''$ with $t \xbsquig B t'' \xbsquig A t'$.
\begin{itemize}
\item Standard A-reductions have the highest priority, so these will be performed first.
\item If the above step terminates, standard B-reductions have the second-highest priority, so we now perform these. Since $\xbsquig B \xbsquig A$ is empty, this will not trigger any additional standard A\-reductions.
\item If the above step terminates, standard C-reductions have the third-highest priority, so we now perform these. Since $\xbsquig C \xbsquig A$ and $\xbsquig C \xbsquig B$ are both empty, these standard C-reductions will not trigger further standard A- or B-reductions.
\item If the above step terminates, standard D-reductions have the fourth-highest priority. If we can perform a D-reduction, the result will be in normal form, since $\xbsquig D \bsquig$ is empty.
\item If we cannot perform a D-reduction, we consider applying a standard E\-reduction, as the standard reduction with fifth-highest priority. If the standard E-reduction cannot be applied, then the term is in normal form, as the standard E-reduction is the last reduction in the list.
\item If the standard E-reduction was successfully applied, it cannot be followed by a standard B-, C-, D-, or E- reduction, since $\xbsquig E \xbsquig B$, $\xbsquig E \xbsquig C$, $\xbsquig E \xbsquig D$ and $\xbsquig E \xbsquig E$ are all shown above to be empty. The only remaining possibility is for the standard E-reduction to be followed by some sequence of standard A-reductions. These A\-reductions cannot themselves be followed by a standard B-, C-, D- or E-reduction, since we show above that $\xbsquig E \xbsquig A _\rt \xbsquig B$, $\xbsquig E \xbsquig A _\rt \xbsquig C$, $\xbsquig E \xbsquig A _\rt \xbsquig D$ and $\xbsquig E \xbsquig A _\rt \xbsquig E$ are all empty.
\end{itemize}
This completes the proof.
\end{proof}

\noindent
With this in hand, we can show termination.
\begin{restatable}{proposition}{termination}
\label{termination}
Standard reduction is terminating on valid types, terms and substitutions.
\end{restatable}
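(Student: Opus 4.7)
The plan is to combine Proposition~\ref{prop:progressive} with a well-founded induction on the syntactic structure of valid terms, types, and substitutions. By Proposition~\ref{prop:progressive}, any standard reduction sequence starting from a given entity has the shape $A^*B^*C^*(D \mid E\,A^*)$, so termination amounts to showing each starred segment is finite. The base cases (variables, $\star$, the empty substitution) are immediate since no reduction applies.

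For the inductive step on a coherence $t \equiv \coh(\Gamma:T)[\sigma]$: the $A^*$ phase modifies only arguments of $\sigma$, each of which is a strict subterm of $t$, so by the inductive hypothesis together with a standard multiset well-ordering argument (a single argument is reduced per step) this phase is finite. The $B^*$ phase is finite because every B-reduction excises a peak of $\Gamma$, strictly decreasing the variable count of the pasting context, so at most $|\Gamma|$ such steps occur. The $C^*$ phase reduces the defining type $T$, itself a strict subterm of $t$, so by the inductive hypothesis applied to $T$ (unfolded structurally into its constituent terms and subtypes) this phase is finite. A $D$-reduction is a single step producing a subterm of $t$ which is already in normal form by the preceding $A^*B^*C^*$ prefix. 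Reductions on types and on substitutions reduce to the same analysis by structural induction.

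The main obstacle I expect is the final $EA^*$ branch. The E-step sends $\coh(\Gamma : u \to_T u)[\sigma]$ to $\i_n\sub{\{T,u\}\circ\sigma}$; the subsequent A-phase then acts on the arguments of $\{T,u\}\circ\sigma$. By Lemma~\ref{lem:typeunfoldingsub} these arguments are precisely the layers of $T\sub\sigma$ and $u\sub\sigma$, which are \emph{not} literal subterms of $t$, so a naive structural induction cannot be closed here. My plan to overcome this is to strengthen the simultaneous inductive hypothesis so that it asserts termination of standard reduction on $t'\sub\tau$ whenever $t'$ and $\tau$ are valid of strictly smaller complexity than $t$. This strengthening is justified by the fact that any standard reduction chain inside $t'\sub\tau$ can be simulated by reductions of $t'$ and $\tau$ individually—both finite by the outer induction—so no new infinite descent can be created by substitution.

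An alternative, more direct route would be to replace structural induction with a lexicographic measure $(|\Gamma|,\,|T|,\,\mu(\sigma))$ on the outermost coherence, where $|\Gamma|$ counts variables of the pasting context, $|T|$ is the syntactic size of the defining type, and $\mu(\sigma)$ is a multiset of inductively defined measures on its arguments. The $B$-phase strictly decreases the first component, the $C$-phase strictly decreases the second while preserving the first, and the $A$-phase strictly decreases the third while preserving the first two; the $D$ and $E$ cases terminate the outer coherence and invoke the IH on strictly simpler data. Either formalisation yields the conclusion that every standard reduction sequence is finite.
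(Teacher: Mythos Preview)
Your overall architecture is right (use Proposition~\ref{prop:progressive} and bound each phase), and you correctly single out the $E A^*$ tail as the crux. But neither of your two fixes closes the argument, and the missing ingredient is the one the paper actually uses: induction on \emph{dimension}.

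For your first approach, the simulation claim is false. Take $t' \equiv \coh(\Delta:A)[\ldots,x,\ldots]$ with $x$ locally maximal in $\Delta$, and let $\tau$ send $x$ to an identity while $\tau$ itself is in normal form. Then $t'\sub\tau$ is a B-redex, yet neither $t'$ nor $\tau$ has any reduction, so the chain in $t'\sub\tau$ cannot be simulated by chains in $t'$ and $\tau$. There is also a quieter gap earlier: after the $B^*$ phase the defining type is $T\sub{\pi_{x_1}}\cdots\sub{\pi_{x_k}}$, which is \emph{not} a subterm of the original $t$ (the $\pi$'s insert identities), so your structural IH does not apply to the $C^*$ phase either. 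For your lexicographic alternative, the $A^*B^*C^*$ prefix is fine, but ``invoke the IH on strictly simpler data'' after $E$ is exactly where the measure breaks: the arguments of $\{T,u\}\circ\sigma$ are terms like $u\sub\sigma$, and your $\mu$ is only ``inductively defined'' on structurally smaller pieces, which these are not.

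The paper resolves both problems by inducting simultaneously on dimension and on subterms. The terms appearing in the head type (even after applying the $\pi_x$'s) have dimension strictly below $\dim t$, so the $C^*$ phase terminates by the dimension hypothesis. Likewise, after an $E$-step the result is $\i_{\dim T+1}\sub{\{T,u\}\circ\sigma}$, and every argument of this substitution has dimension at most $\dim u\sub\sigma = \dim T + 1 < \dim t$, so the trailing $A^*$ phase terminates by the dimension hypothesis. If you add dimension as the outer well-founded order, your argument goes through essentially as in the paper.
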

\begin{proof}
Standard reduction on types and substitutions is given in terms of standard reduction of a finite family of terms, so we need only check that standard reduction of terms has no infinite sequences. The variable case is trivial, so we consider reduction of some coherence term $t \equiv \coh (\Gamma:T)[\sigma]$.

We proceed by simultaneous induction, on the dimension of $t$, and on subterms of~$t$.
Since a subterm of a valid term can never have a higher dimension, this is well-defined. The dimension is not defined for variables, but since variables are in normal form, this does not affect the argument. We also note that any subterm of a valid term is itself valid.

 Thanks to Proposition~\ref{prop:progressive}, we know that the standard reduction of a given term can be separated into finitely many distinct phases of standard A-, B-, C-, D- and E-reduction, with each phase involving standard reductions of a single fixed type. So we need only show that each standard reduction phase will terminate.
\begin{itemize}
\item \emph{Standard A-reduction.}
For the first phase of A-reductions, we can use an induction on subterms, since a substitution is of finite length, a given term will have only finitely many standard A\-reductions. For the second phase of A-reductions, we have that these reductions followed an E-reduction, and so all the terms in the substitution have a lower dimension than the original term, and so we can induct on dimension and apply the same reasoning as before.
\item \emph{Standard B-reduction.} Since the context has a finite number of variables, a given term will have only finitely many standard B-reductions.
\item \emph{Standard C-reduction.} Since the cell part of a valid term is formed from terms of strictly lower dimension, and is itself valid in the pasting context of the head coherence, it follows by induction on dimension that a term will have only finitely many standard C-reductions.
\end{itemize}
Standard D- and E-reductions are single-step operations, so no termination argument is necessary for those.
\end{proof}

\noindent
Having now established that standard reduction has no infinite sequences, it is clear every term $t$ has a unique normal form, \(N(t)\), to which it reduces.

\subsection{Technical results on reduction}

Here we collect further results on reduction, mostly of a technical nature, which will be used in the next subsection for our main proof.

\begin{restatable}{lemma}{reductionapplication}
\label{reductionapplication}
Term substitution is compatible with substitution reduction:
$$\sigma \squig \sigma' \quad \Rightarrow \quad u \sub \sigma \squig_{\rt} u \sub {\sigma'}$$
\end{restatable}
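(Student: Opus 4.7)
The plan is to proceed by structural induction on the term $u$. Fix the witnessing data: write $\sigma \equiv \langle s_1, \ldots, s_n\rangle$ and $\sigma' \equiv \langle s_1, \ldots, s_i', \ldots, s_n\rangle$, where $s_i \squig s_i'$ is the single argument reduction that produces the reduction $\sigma \squig \sigma'$. All intermediate steps stay within the valid syntactic entities by Corollary~\ref{cor:reductionvalidity}, so the closure $\squig_\rt$ can indeed be used on these objects.

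For the base case where $u$ is a variable $x$, the semantic substitution simply looks $x$ up in the substitution. If $\sigma$ sends $x$ to $s_j$, then $u\sub\sigma \equiv s_j$ and $u\sub{\sigma'} \equiv s_j'$ when $j = i$, or $u\sub\sigma \equiv u\sub{\sigma'} \equiv s_j$ when $j \neq i$. In the first case we have a single $\squig$-step; in the second, reflexivity. Either way, $u\sub\sigma \squig_\rt u\sub{\sigma'}$.

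For the inductive step, suppose $u \equiv \coh(\Delta : A)[\tau]$. By the definition of term substitution, $u\sub\sigma \equiv \coh(\Delta : A)[\tau \circ \sigma]$ and $u\sub{\sigma'} \equiv \coh(\Delta : A)[\tau \circ \sigma']$, so by iterated applications of the reduction rule~(A), it suffices to prove the auxiliary claim
\[
\tau \circ \sigma \;\squig_\rt\; \tau \circ \sigma'.
\]
We establish this auxiliary claim by a subsidiary induction on the structure of $\tau$. The empty case $\tau \equiv \langle\rangle$ is immediate, since both composites are $\langle\rangle$. In the inductive case $\tau \equiv \langle \tau'', t\rangle$, the definition of composition of substitutions gives
\begin{align*}
\tau \circ \sigma &\equiv \langle \tau'' \circ \sigma,\; t\sub\sigma\rangle, &
\tau \circ \sigma' &\equiv \langle \tau'' \circ \sigma',\; t\sub{\sigma'}\rangle.
\end{align*}
The inner induction hypothesis applied to $\tau''$ yields $\tau'' \circ \sigma \squig_\rt \tau'' \circ \sigma'$, and the outer induction hypothesis applied to the subterm $t$ of $u$ yields $t\sub\sigma \squig_\rt t\sub{\sigma'}$. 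These two reduction sequences can be realised as sequences of~(S)-reductions on the composite substitution, acting on the first $|\tau''|$ positions and the last position respectively, giving the required reduction.

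The only real point that needs attention is the well-foundedness of the outer induction: the outer IH is applied to arguments $t$ appearing in the coherence substitution $\tau$ of $u \equiv \coh(\Delta:A)[\tau]$, which must count as structurally smaller than $u$ itself. This is immediate from the raw-syntax grammar in Figure~\ref{fig:syntax}, where such terms occur as direct subterms in the $\Sub$-valued argument of the $\coh$ constructor; no measure beyond plain term size is needed. Beyond this bookkeeping, the argument is a straightforward unfolding of the definitions, with no use of the generating equalities themselves.
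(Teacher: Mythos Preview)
Your proof is correct and follows essentially the same approach as the paper: structural induction on $u$, with the variable case handled by a lookup and the coherence case reduced to showing $\tau \circ \sigma \squig_\rt \tau \circ \sigma'$ via the induction hypothesis on the components of $\tau$. The paper is slightly terser, simply unrolling $\rho \circ \sigma$ as $[\rho_1\sub\sigma,\ldots,\rho_n\sub\sigma]$ and applying A-reductions componentwise, leaving the appeal to the IH on each $\rho_j$ implicit; your explicit subsidiary induction on $\tau$ and the remark on well-foundedness just make that bookkeeping visible.
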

\begin{proof}
We induct on the structure of $u$.  If $u$ is a variable, then either $u\sub \sigma \equiv \sigma' \sub \tau$ or $u\sub \sigma \squig u\sub {\sigma'}$; in either case, we have $u\sub \sigma \squig_{\r} u \sub {\sigma'}$ as required.

Otherwise, we have $u \equiv \coh (\Gamma: U)[\rho]$, and we argue as follows:
\begin{align*}
u[\sigma] \equiv{}& \coh (\Gamma : U)[\rho \circ \sigma]
\\
\equiv{}& \coh(\Gamma : U)[ \rho_1 \sub \sigma, \rho_2\sub \sigma, \ldots, \rho_n\sub \sigma ]
\\
\xsquig A _\rt {}& \coh(\Gamma : U)[\rho_1\sub {\sigma'}, \rho_2\sub \sigma, \ldots, \rho_n\sub \sigma ]
\\
\xsquig A _\rt{} & \coh(\Gamma : U)[ \rho_1\sub {\sigma'}, \rho_2\sub {\sigma'}, \ldots, \rho_n\sub {\sigma} ]
\\
\xsquig A _\rt{} & \cdots
\\
\xsquig A _\rt{} & \coh(\Gamma : U)[ \rho_1\sub {\sigma'}, \rho_2\sub {\sigma'}, \ldots, \rho_n\sub {\sigma'} ]
\\
\equiv{}& \coh(\Gamma : U)[\rho \circ \sigma']
\\
\equiv{}& u \sub {\sigma'}
\end{align*}
Hence $u\sub{\sigma} \squig_\rt u\sub{\sigma'}$ as required.
\end{proof}

\begin{restatable}{lemma}{reducelefttermsub}
\label{reducelefttermsub}
Term substitution is compatible with term reduction:
$$u \squig u' \quad\Rightarrow\quad u\sub{\sigma} \squig u'\sub{\sigma}$$
\end{restatable}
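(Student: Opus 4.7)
The plan is to establish this by induction on the derivation of the reduction step, proving it simultaneously with the analogous compatibility statements for types and substitutions: namely $T \squig T'$ implies $T\sub{\sigma} \squig T'\sub{\sigma}$, and $\tau \squig \tau'$ implies $\tau \circ \sigma \squig \tau' \circ \sigma$. Since variables are in normal form, any reducing term must be of the form $u \equiv \coh(\Gamma : T)[\tau]$, and we perform a case analysis on whether the reduction is of type (A), (B), (C), (D), or (E) from Definition~\ref{def:redterms}; in each case we verify that $u\sub{\sigma} \equiv \coh(\Gamma : T)[\tau \circ \sigma]$ admits a reduction of the same type whose reduct is syntactically equal to $u'\sub{\sigma}$.

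Cases (A), (C), (D), and (E) are relatively routine. Case (A) reduces to the inductive hypothesis for substitutions: if $\tau \squig \tau'$ comes from a component reduction $s_i \squig s_i'$, then by induction $s_i\sub{\sigma} \squig s_i'\sub{\sigma}$, yielding $\tau \circ \sigma \squig \tau' \circ \sigma$ and hence $u\sub{\sigma} \xsquig{A} u'\sub{\sigma}$. Case (C) is similar, using the inductive hypothesis for type reduction. For case (D), the final argument of $\tau \circ \sigma$ is, by the definition of substitution composition, precisely $t\sub{\sigma}$, so $u\sub{\sigma} \xsquig{D} t\sub{\sigma} \equiv u'\sub{\sigma}$. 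For case (E), associativity of substitution composition from Proposition~\ref{prop:catt-is-cat} gives $\{T,s\} \circ (\tau \circ \sigma) \equiv (\{T,s\} \circ \tau) \circ \sigma$, so the E-reduct of $u\sub{\sigma}$ agrees with $u'\sub{\sigma}$ on the nose. In cases (B) and (E), the side-condition that $u$ is not an identity is clearly preserved under substitution, since the head coherence of $u\sub{\sigma}$ is the same as that of $u$.

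The main obstacle, and the only case requiring genuinely new content, is case (B); but it is handled cleanly by Lemma~\ref{lem:quotientsub}. If $u$ is a B-redex at the locally maximal variable $x \in \LM(\Gamma)$, with $x\sub{\tau} \equiv \i_k\sub{\mu}$, then by Proposition~\ref{prop:catt-is-cat} we compute $x\sub{\tau \circ \sigma} \equiv x\sub{\tau}\sub{\sigma} \equiv \i_k\sub{\mu \circ \sigma}$, which is again an identity; hence $u\sub{\sigma}$ is a B-redex at $x$, with reduct $\coh(\Gamma \sslash x : T\sub{\pi_x})[(\tau \circ \sigma) \sslash x]$. We need this to equal $u'\sub{\sigma} \equiv \coh(\Gamma \sslash x : T\sub{\pi_x})[(\tau \sslash x) \circ \sigma]$, and this follows directly from the equation $(\tau \circ \sigma) \sslash x \equiv (\tau \sslash x) \circ \sigma$ supplied by Lemma~\ref{lem:quotientsub}.
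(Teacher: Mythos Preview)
Your proposal is correct and follows essentially the same approach as the paper: a case analysis on the reduction type (A)--(E), with case (B) handled via Lemma~\ref{lem:quotientsub} and case (E) via associativity of substitution composition. The only cosmetic difference is that you frame the argument as a simultaneous induction over terms, types, and substitutions, whereas the paper phrases it as induction on subterms of $u$; your explicit mention of the ``not an identity'' side-condition being preserved under substitution is a nice touch the paper leaves implicit.
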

\begin{proof}
If $u$ is a variable, no reduction is possible, in contradiction with hypothesis. We therefore assume $u \equiv \coh (\Gamma : T)[\mu]$ is a coherence term, writing $\mu = [\mu_1, \ldots, \mu_n]$, and proceed by case analysis on the structure of the reduction $u \squig u'$, and by induction on subterms of $u$.

If $u \xsquig A u'$ via some $m_i \squig m_i'$, then we argue as follows:
\begin{align*}
u \sub \sigma &\equiv \coh(\Gamma:U)[\mu \circ \sigma]
\\
&\equiv \coh(\Gamma : U)[ \mu_1\sub{\sigma}, \ldots, \mu_i\sub{\sigma}, \ldots, \mu_k\sub{\sigma} ]
\\
& \xsquig A \coh(\Gamma : U)[\mu_1\sub{\sigma}, \ldots, \mu_i'\sub{\sigma}, \ldots, \mu_k\sub{\sigma} ]
\\
& \equiv u \sub {\sigma'}
\intertext{Alternatively, we suppose $u \xsquig B u'$ is the B-reduction $\coh(\Gamma : U)[\mu] \xsquig B \coh(\Gamma \sslash x : U\sub{\pi_x})[\mu \sslash x]$, eliminating some locally-maximal variable $x_i$ of $\Gamma$ for which $x_i \sub \mu \equiv \mu_i$ is an identity. Then $\mu_i\sub{\sigma}$ is also an identity, and hence using Lemma~\ref{lem:quotientsub} we have:}
u \sub \sigma &\equiv \coh (\Gamma : U) [\mu \circ \sigma]
\\
&\equiv\coh (\Gamma : U) [\mu \circ \sigma]
\\
&\xsquig B \coh(\Gamma \sslash x: U \sub {\pi_x} )[(\mu \circ \sigma) \sslash x]
\\
&\equiv \coh (\Gamma \sslash x:U \sub{\pi_x} )[(\mu \sslash x) \circ \sigma]
\\
&\equiv u' \sub \sigma
\intertext{If $u \xsquig C u'$ via some $T \xsquig C T'$, then we argue as follows:}
u \sub\sigma &\equiv \coh(\Gamma : T)[\mu \circ \sigma] \xsquig C \coh (\Gamma : T')[\mu \circ \sigma] \equiv u'\sub{\sigma}
\intertext{If $u \xsquig D u'$ as $\coh (\uD^n:S^{n-1})[\ldots , u'] \xsquig D u'$, then we argue as follows:}
u \sub \sigma &\equiv \coh(\uD^n : S^{n-1})[\ldots, u' \sub\sigma] \xsquig D u' \sub \sigma
\intertext{Finally, if $u \xsquig E u'$ as $\coh (\Gamma:u \to _U u)[\mu] \xsquig E \i_{\dim U+1} \sub{\{U,u\} \circ \mu}$, then we argue as follows, using Lemma~\ref{lem:typeunfoldingsub}:}
u \sub \sigma &\equiv \coh (\Gamma : u \to_U u)[\mu \circ \sigma]
\\
&\xsquig E \i_{\dim U  + 1} \sub{\{U,u\} \circ \mu \circ \sigma}
\\
&\equiv \i_{\dim U + 1} \sub{\{U,u\} \circ \mu} \sub{\sigma}
\\
&\equiv u' \sub \sigma
\end{align*}
This completes the proof.
\end{proof}

\begin{restatable}[Identities reduce to identities]{lemma}{preserveidentities}
\label{lem:preserveidentities}
If $u$ is an identity, and $u \squig u'$, then $u'$ is an identity.
\end{restatable}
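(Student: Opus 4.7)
The plan is to prove this by case analysis on the five reduction rules (A)--(E), showing that the only one applicable to an identity term is an A-reduction, and that A-reduction preserves the property of being an identity.

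First I would unfold the definition. An identity has the form $u \equiv \i_n\sub{\sigma}$ for some $n \in \N$ and substitution $\sigma$. Expanding $\i_n$ and using Proposition~\ref{prop:catt-is-cat} (in particular $\id_{\uD^n} \circ \sigma \equiv \sigma$), we have
\[
  u \equiv \coh(\uD^n : d_n \to_{S^{n-1}} d_n)[\sigma].
\]
Note two structural facts about $u$: its head type $d_n \to_{S^{n-1}} d_n$ contains only variables (no coherence subterms), and its source equals its target syntactically, both being the variable $d_n$. These two observations will rule out several of the reduction cases.

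Next I would walk through the five reduction rules of Definition~\ref{def:redterms}. For rule (A), a reduction $\sigma \squig \sigma'$ yields
\[
  u' \equiv \coh(\uD^n : d_n \to_{S^{n-1}} d_n)[\sigma'] \equiv \i_n\sub{\sigma'},
\]
which is again an identity. Rules (B) and (E) explicitly exclude identity terms in their side conditions, so neither is applicable to $u$. For rule (C), we would need a reduction of the type $d_n \to_{S^{n-1}} d_n$; but $S^{n-1}$ is composed entirely of variables by induction, so no subterm reduces, and hence the type is in normal form. For rule (D), the head of $u$ would have to be of the form $\coh(\uD^{m+1} : S^m)$; but the type $d_n \to_{S^{n-1}} d_n$ has coincident source and target, whereas $S^n \equiv d_n \to_{S^{n-1}} d_n'$ has distinct variables $d_n$ and $d_n'$ as source and target (they are different variables in the disc context by construction), so (D) does not apply.

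The main obstacle, if any, is the subtle observation about rule (D): one must verify syntactically that the type $d_n \to_{S^{n-1}} d_n$ appearing in $\i_n$ is not $\alpha$-equivalent to the sphere type $S^n$, since $S^n$ is by definition $d_n \to_{S^{n-1}} d_n'$ where $d_n$ and $d_n'$ are distinct elements of the variable set $V$. Once this is noted, combining the five cases gives the result: only case (A) can fire, and it produces another identity.
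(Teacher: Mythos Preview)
Your proposal is correct and follows essentially the same case analysis as the paper's own proof. You give more detail than the paper does, particularly for case (D), where the paper simply asserts that identity terms ``have the wrong form to be D-redexes,'' while you spell out the syntactic reason (the head type $d_n \to_{S^{n-1}} d_n$ has coincident source and target, hence is not $\alpha$-equivalent to any $S^m$).
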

\begin{proof}
We recognize an identity term by looking at the head. We prove the result by case analysis on the reduction $u \squig u'$. If $u \xsquig A u'$ the result is immediate, since A-reductions do not change the head. If $u \xsquig B u'$ we have a contradiction, since identity terms cannot be B-redexes by definition. If $u \xsquig C u'$ we again have a contradiction, since a C-reduction acts on the head of the term, but the head of an identity term is in normal form. If $u \xsquig D u'$ we again have a contradiction, since identity terms have the wrong form to be D-redexes. If $u \xsquig E u'$ we again have a contradiction, since identity terms cannot be E-redexes by definition.
\end{proof}

\begin{restatable}{lemma}{standardslash}
\label{standardslash}
If $\sigma \bsquig \tilde \sigma$, then $\sigma \sslash x \bsquig_\r \tilde \sigma \sslash x$.
\end{restatable}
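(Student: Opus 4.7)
The plan is to identify which entry of $\sigma$ is reduced by the standard reduction $\sigma \bsquig \tilde\sigma$, and then to perform a case analysis on whether this entry survives the pruning operation $\sslash x$.

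By the definition of standard reduction on substitutions, $\sigma \bsquig \tilde\sigma$ arises by selecting the leftmost entry $s_i$ of $\sigma$ admitting a standard reduction $s_i \bsquig \tilde s_i$ and replacing it, so $\tilde\sigma$ and $\sigma$ differ only at position $i$. Meanwhile $\sslash x$ is implemented by $\mathsf{remove}\, p_x$, which (inspecting its defining clauses) simply deletes two specific entries of the substitution at positions determined by the peak $p_x$, and which does not depend on the values stored at those positions.

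In the first case, suppose the $i$-th entry of $\sigma$ is one of the two entries deleted by $\mathsf{remove}\, p_x$. Since $\sigma$ and $\tilde\sigma$ agree outside position $i$, and both pruning operations delete entries at the same two positions, we have $\sigma \sslash x \equiv \tilde\sigma \sslash x$, and reflexivity of $\bsquig_\r$ gives the result. In the second case, the $i$-th entry survives pruning. Then $\tilde\sigma \sslash x$ is obtained from $\sigma \sslash x$ by replacing the surviving copy of $s_i$ with $\tilde s_i$, with all other entries unchanged. To conclude $\sigma \sslash x \bsquig \tilde\sigma \sslash x$ I must verify that this entry is the \emph{leftmost} standardly reducible entry of $\sigma \sslash x$. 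By the minimality of $i$, every entry of $\sigma$ lying strictly to the left of position $i$ is in standard normal form; the entries of $\sigma \sslash x$ lying strictly to the left of the surviving $s_i$ form a sub-sequence of these (because $\mathsf{remove}$ only deletes and never reorders), hence are also in normal form. Thus $s_i$ is the leftmost standardly reducible entry of $\sigma \sslash x$, yielding the required standard reduction.

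The only subtle point is the positional bookkeeping in the second case; if a fully rigorous presentation is wanted, the case analysis can be recast as an induction on the peak $p_x$ mirroring the recursive definition of $\mathsf{remove}$, with the base case $\Updownarrow_{\mathsf{pk}}$ encompassing the ``deleted position'' case above and the inductive cases $\Uparrow_{\mathsf{pk}}$ and $\Downarrow_{\mathsf{pk}}$ propagating the statement, in the latter case directly by induction hypothesis, and in the former by splitting once more on whether $s_i$ lies in the kept tail or in the recursively pruned head.
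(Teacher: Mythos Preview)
Your proof is correct and follows the same approach as the paper's: a two-case split on whether the leftmost reducible entry of $\sigma$ is among the two deleted positions, using the observation that $\sigma \sslash x$ is a sublist of $\sigma$ (in your words, that $\mathsf{remove}$ only deletes and never reorders). The paper's proof is simply a terser version of the same argument, without the additional inductive reformulation you sketch at the end.
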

\begin{proof}
This is immediate, since $\sigma \sslash x$ is a sublist of $\sigma$. If the first reducible argument of $\sigma$ is not in the sublist $\sigma \sslash x$, then $\sigma \sslash x \equiv \tilde \sigma \sslash x$. Otherwise, it will still be the first reducible argument of the sublist, and  $\sigma \sslash x \bsquig \tilde \sigma \sslash x$.
\end{proof}

\begin{restatable}{lemma}{prunecommute}
\label{lem:prunecommute}
Given a context $\Gamma$ and distinct locally-maximal variables $x,y$, and a substitution $\Delta \vdash \sigma : \Gamma$ with $x\sub{\sigma},y\sub{\sigma}$ both identities, the following contexts and substitutions are identical:
\begin{mathpar}
(\Gamma \sslash x) \sslash y \equiv (\Gamma \sslash y) \sslash x
\and
\pi_x \circ \pi_y \equiv \pi_y \circ \pi_x
\and
(\sigma \sslash x) \sslash y \equiv (\sigma \sslash y) \sslash x
\end{mathpar}
\end{restatable}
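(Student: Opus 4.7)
The plan is to lift the statement to the level of Dyck words and their peaks, since the operations $\Gamma \sslash \alpha$, $\pi_\alpha$, and $\sigma \sslash \alpha$ are simply wrappers around the primitive operations $\mathsf{excise}$, $\mathsf{project}$, and $\mathsf{remove}$ of Figure~\ref{fig:operations}. The distinct locally-maximal variables $x$ and $y$ correspond to distinct peaks $p_x, p_y \in \Peak\,\lfloor\Gamma\rfloor$, and the moral content is that distinct peaks touch disjoint regions of the Dyck word, so pruning at them commutes. I would prove this formally by induction on $\lfloor\Gamma\rfloor$, with the statement suitably generalized to an arbitrary pair of distinct peaks of an arbitrary Dyck word, and then recover the three identities of the lemma by unfolding the definitions (noting in particular that each of $p_x$ and $p_y$ remains a peak after excising the other).

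The inductive step case-splits on the outermost Dyck-word constructor. If $\lfloor\Gamma\rfloor = \star v$ there are no peaks and the claim is vacuous. If $\lfloor\Gamma\rfloor = \Uparrow d'\, v\, e$, both peaks must arise from the $\Uparrow_{\mathsf{pk}}$ constructor applied to peaks of $d'$, and all three identities reduce directly to the inductive hypothesis on $d'$. The same happens when $\lfloor\Gamma\rfloor = \Downarrow d'$ with both peaks of the form $\Downarrow_{\mathsf{pk}}$. The essential case, for which the ``disjointness'' reasoning has to be made explicit, is $\lfloor\Gamma\rfloor = \Downarrow(\Uparrow d''\, v\, e)$ with (by symmetry) one peak $p = \Updownarrow_{\mathsf{pk}} d''\, v\, e$ at the top and the other peak of the form $p' = \Downarrow_{\mathsf{pk}}(\Uparrow d''\, v\, e)\,(\Uparrow_{\mathsf{pk}} d''\, v\, e\, q)$ for some $q : \Peak\,d''$. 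Here both compositions can be computed directly: excising in either order produces $\mathsf{excise}\,q$ as the underlying Dyck word, and $\mathsf{remove}$ in either order yields the same substitution, since the two prunings simply delete pairs of entries at disjoint positions in the list.

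The main obstacle will be the $\mathsf{project}$ identity in this essential case. Unlike $\mathsf{excise}$ and $\mathsf{remove}$, which merely delete data, $\mathsf{project}$ actively inserts the new terms $\tm(d'')$ and $\i_{\dim\,\ty(d'')}\sub{\{\ty(d''),\tm(d'')\}}$ at the top peak, and to verify $\pi_x \circ \pi_y \equiv \pi_y \circ \pi_x$ we must show that substituting along the other projection leaves these inserted terms unchanged in the appropriate sense. This reduces to the auxiliary equalities $\ty(\mathsf{excise}\,q) \equiv \ty(d'')\sub{\mathsf{project}\,q}$ and $\tm(\mathsf{excise}\,q) \equiv \tm(d'')\sub{\mathsf{project}\,q}$; the first of these is precisely the identity already used (without proof) in Lemma~\ref{lem:pruning-construct}, and the second is established by exactly the same kind of parallel induction on peaks. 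Finally, one application of Lemma~\ref{lem:typeunfoldingsub} is needed to push the resulting substitution under $\{-,-\}$, after which the two sides match syntactically.
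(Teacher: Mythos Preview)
Your proposal is correct and is in fact considerably more detailed than the paper's own argument. The paper treats this lemma informally: it draws a single picture illustrating that excising two distinct peaks of a Dyck word are independent operations (the example $(\Gamma \sslash \Omega) \sslash \delta \equiv (\Gamma \sslash \delta) \sslash \Omega$), and then asserts that the remaining two identities ``follow similarly''. No induction is written out, and the auxiliary equalities $\ty(\mathsf{excise}\,q) \equiv \ty(d'')\sub{\mathsf{project}\,q}$ and $\tm(\mathsf{excise}\,q) \equiv \tm(d'')\sub{\mathsf{project}\,q}$ that you identify as the crux of the $\pi_x \circ \pi_y$ case are not mentioned explicitly (though the first of these is indeed the same fact invoked without proof in Lemma~\ref{lem:pruning-construct}).

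Your induction on the Dyck word, case-split on the outermost constructor, and isolation of the essential case $\Downarrow(\Uparrow d''\,v\,e)$ with one $\Updownarrow_{\mathsf{pk}}$ peak and one nested peak is exactly the structure a formal proof would need, and your observation that $\mathsf{project}$ is the only one of the three operations requiring nontrivial work (because it inserts data rather than merely deleting it) is correct. So: same underlying idea---distinct peaks touch disjoint regions---but your version is what one would actually write in a proof assistant, while the paper's version is a proof sketch appealing to geometric intuition.
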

\begin{proof}
The first statement is clear from example, in this case showing $(\Gamma \sslash \Omega) \sslash \delta \equiv (\Gamma \sslash \delta) \sslash \Omega$, since the excision operations are independent:
\[
\tikzset{point/.style={draw, circle, black, fill, inner sep=1.5pt, label={}}}
\begin{aligned}
\begin{tikzpicture}[xscale=.6, scale=.6, font=\scriptsize, every node/.style={inner sep=2pt}]
\draw (0,0) node [point] {} node [left=2pt] {$x$} to node [left] {$f$} node [right] {$y$} (1,1) node [point] {} to node [left] {$g$} node [right] {$\mu$} (2,2) node [point] {} to node [left] {$\nu$} node [right] {$\Omega$} (3,3) node [point] {} to (4,2) node [point] {} to (5,1) node [point] {} to node [left] {$h$} node [right] {$\delta$} (6,2) node [point] {} to (7,1) node [point] {} to (8,0) node [point] {};
\draw [blue] (1.5,2) to +(3,0);
\draw [red] (4.5,1) to +(3,0);

\begin{scope}[xshift=10cm, yshift=3cm]
\draw (0,0) node [point] {} node [left=2pt] {$x$} to node [left] {$f$} node [right] {$y$} (1,1) node [point] {} to node [left] {$g$} node [right] {$\mu$} (2,2) node [point] {} to (3,1) node [point] {} to node [left] {$h$} node [right] {$\delta$} (4,2) node [point] {} to (5,1) node [point] {} to (6,0) node [point] {};
\draw [red] (2.5,1) to +(3,0);
\end{scope}

\begin{scope}[xshift=10cm, yshift=-2cm]
\draw (0,0) node [point] {} node [left=2pt] {$x$} to node [left] {$f$} node [right] {$y$} (1,1) node [point] {} to node [left] {$g$} node [right] {$\mu$} (2,2) node [point] {} to node [left] {$\nu$} node [right] {$\Omega$} (3,3) node [point] {} to (4,2) node [point] {} to (5,1) node [point] {} to (6,0) node [point] {};
\draw [blue] (1.5,2) to +(3,0);
\end{scope}

\begin{scope}[xshift=18cm, yshift=0cm]
\draw (0,0) node [point] {} node [left=2pt] {$x$} to node [left] {$f$} node [right] {$y$} (1,1) node [point] {} to node [left] {$g$} node [right] {$\mu$} (2,2) node [point] {} to (3,1) node [point] {} to (4,0) node [point] {};
\end{scope}

\node [red, rotate=-30] at (9.5,0) {$\squig$};
\node [blue, rotate=30] at (8,2.5) {$\squig$};
\node [blue, rotate=30] at (16,0) {$\squig$};
\node [red, rotate=-30] at (18,2.5) {$\squig$};

\end{tikzpicture}
\end{aligned}
\]
The other claims follows similarly.
\end{proof}

\begin{restatable}{lemma}{discsubtrick}
\label{lem:discsubtrick}
For a valid substitution $\Delta \vdash \sigma : \uD^n$, we have for all $i \leq n$:
\begin{align*}
d_i \sub \sigma &= \src^{n-i}(d_n \sub \sigma)
&
d'_i \sub \sigma &= \tgt^{n-i}(d_n \sub \sigma)
\end{align*}
\end{restatable}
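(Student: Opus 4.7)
My plan is to induct on $k := n - i$, relying on a key lemma: whenever $\Gamma \vdash u : s \to_A t$ for some specific $s, A, t$, one has $\src(u) = s$ and $\tgt(u) = t$ definitionally. To see this, Proposition~\ref{uniquecohtype} gives $\ty(u) = s \to_A t$ definitionally. Two routine facts about the equality of types---that $\star$ is never equal to an arrow type, and that the arrow constructor is injective---then ensure $\ty(u)$ is syntactically of arrow form $p \to_Q r$, with $p = s$ definitionally. Both facts follow by induction on the derivation, as the structural rules of Figure~\ref{fig:structural-eq} introduce type equality only through $\star = \star$ or componentwise on arrows.

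For the base case $k = 0$, the first equation $d_n\sub\sigma \equiv \src^0(d_n\sub\sigma)$ holds syntactically, and the target equation is vacuous since $d_n'$ is not a variable of $\uD^n$. For the inductive step, suppose $\src^{k-1}(d_n\sub\sigma) = d_{i+1}\sub\sigma$ definitionally. By Theorem~\ref{thm:conversion-thm} and Proposition~\ref{uniquecohtype}, this term carries the type $S^i\sub\sigma \equiv d_i\sub\sigma \to_{S^{i-1}\sub\sigma} d_i'\sub\sigma$, so the key lemma gives $\src^k(d_n\sub\sigma) = d_i\sub\sigma$ definitionally. The target case is analogous: the subcase $k = 1$ follows directly from the key lemma applied to $d_n\sub\sigma$, and for $k \geq 2$ the inductive hypothesis $\tgt^{k-1}(d_n\sub\sigma) = d_{n-k+1}'\sub\sigma$, combined with the fact that $d_{n-k+1}'$ also carries the arrow type $S^{n-k}\sub\sigma \equiv d_{n-k}\sub\sigma \to_{S^{n-k-1}\sub\sigma} d_{n-k}'\sub\sigma$, gives $\tgt^k(d_n\sub\sigma) = d_{n-k}'\sub\sigma$.

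I expect the main obstacle to be articulating the disjointness of $\star$ from arrow types and the injectivity of the arrow constructor under definitional equality; these are not named lemmas in the excerpt but are necessary to extract the syntactic source and target of $\ty(u)$ in the key lemma. Both should follow by a straightforward induction on derivations of type equality, thanks to the limited structural form of the type-equality rules.
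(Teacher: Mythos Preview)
Your proposal is correct and follows essentially the same idea as the paper's proof, which is a one-sentence sketch: ``Extending the substitution construction rule along definitional equality we must have $\src(d_n \sub \sigma) = \src(d_n) \sub \sigma$, and similarly for $\tgt$. The result follows.'' You have simply unpacked what this sentence requires: the inferred type of $d_n\sub\sigma$ is definitionally (but not syntactically) $S^{n-1}\sub\sigma$, so one must pass through Proposition~\ref{uniquecohtype} and then use that type equality respects and reflects the arrow constructor to extract the source and target up to $=$. Your explicit identification of the two auxiliary facts (disjointness of $\star$ from arrow types, and injectivity of the arrow constructor under definitional equality) is exactly the content the paper suppresses, and your observation that both follow by inversion on the type-equality rules of Figure~\ref{fig:structural-eq} is correct, since those rules generate type equality purely structurally.
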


\begin{proof}
Extending the substitution construction rule along definitional equality we must have $\src(d_n \sub \sigma) = \src(d_n) \sub \sigma$, and similarly for $\tgt$. The result follows.
\end{proof}

\subsection{Standard reduction generates definitional equality}

We will show that the symmetric, transitive, reflexive closure of standard reduction generates definitional equality. Since we have already shown that standard reduction is a terminating reduction strategy, this gives an algorithm to determine whether two given terms are definitionally equal, by computing their standard normal forms and checking if they are syntactically equal (up to $\alpha$\-equivalence.)

\begin{restatable}{lemma}{closerts}
\label{close_rts}
If $a \bsquig_\rts b$,  $\exists \, c$ with $a \bsquig_\rt c$ and $b \bsquig_\rt c$.
\end{restatable}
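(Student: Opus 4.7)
The plan is to exploit the two key features of standard reduction established above: it is a \emph{reduction strategy} (i.e., a partial function, by definition) and it is \emph{terminating} (Proposition~\ref{termination}). Together these give every valid entity $a$ a well-defined normal form $N(a)$, obtained by iterating standard reduction until it no longer applies. I will take $c := N(a)$ and argue that $N(a) \equiv N(b)$, from which both $a \bsquig_\rt c$ and $b \bsquig_\rt c$ follow immediately.

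The central observation is that $\bsquig$ is deterministic: if $a \bsquig a'$ then the unique standard reduction sequence starting at $a$ begins with this step, and hence $N(a) \equiv N(a')$. From this I proceed by induction on the length of the zig-zag witnessing $a \bsquig_\rts b$. Writing $a \equiv a_0, a_1, \dots, a_n \equiv b$ with each consecutive pair related by $\bsquig$ in one direction or the other (or by $\equiv$), the base case $n = 0$ is trivial with $c = a$. For the inductive step, each link $a_i \bsquig a_{i+1}$ or $a_{i+1} \bsquig a_i$ yields $N(a_i) \equiv N(a_{i+1})$ by the observation above, so $N(a_0) \equiv N(a_n)$, i.e.\ $N(a) \equiv N(b)$.

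Taking $c := N(a)$, the standard reduction sequences $a \bsquig \tilde a \bsquig \tilde{\tilde a} \bsquig \cdots \bsquig N(a)$ and $b \bsquig \tilde b \bsquig \cdots \bsquig N(b) \equiv N(a)$ exist by termination, giving the required $a \bsquig_\rt c$ and $b \bsquig_\rt c$.

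The only real subtlety is the determinism claim $a \bsquig a' \implies N(a) \equiv N(a')$, and this is precisely what makes standard reduction (as opposed to general reduction $\squig$) useful here: the definition of a reduction strategy guarantees that $a$ cannot standard-reduce in two different ways, so the tail of the standard reduction sequence from $a$ is literally the standard reduction sequence from $a'$. Thus the lemma is essentially a formal consequence of the material already developed, with no new combinatorics required.
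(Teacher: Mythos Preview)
Your proof is correct and close in spirit to the paper's, which simply says ``This is immediate, since $\bsquig$ is a reduction strategy.'' The only difference is that you additionally invoke termination (Proposition~\ref{termination}) to pick $c := N(a)$, whereas the paper's implicit argument needs only determinism: given any zigzag $a \equiv a_0 \sim a_1 \sim \cdots \sim a_n \equiv b$, one shows by induction on $n$ that a common $\bsquig_\rt$-reduct exists, using at each step that a deterministic relation has $a \bsquig b$ and $a \bsquig_\rt c$ implying either $b \bsquig_\rt c$ or $c \equiv a \bsquig b$. Your route is perfectly valid here since termination has already been established, and it has the minor bonus of giving a canonical choice of $c$; the paper's route is slightly more economical in its hypotheses.
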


\begin{proof}
This is immediate, since $\bsquig$ is a reduction strategy.
\end{proof}

\noindent The next proposition tells us that standard reduction agrees with equality.

\begin{restatable}{proposition}{standardise}
\label{standardise}
For all terms $s$, such that \(s\) is valid in some context \(\Theta\), we have the following:
\begin{enumerate}
\item[(i)] If $s \squig t$, we can find terms $a, b$ which admit a  reduction $a \squig_\r b$ and standard reductions $s \bsquig_\t a$, $t \bsquig_\rts b$, illustrated as follows:
\[
\begin{tikzpicture}[xscale=1.2, yscale=1.06]
\node (s) at (0,0) {$s$};
\node (t) at (1,0) {$t$};
\node (a) at (0,-1) {$a$};
\node (b) at (1,-1) {$b$};
\draw [nonstandard] (s) to (t);
\draw [nonstandard] (a) to node [below] {\r} (b);
\draw [standard] (s) to node [left] {\t} (a);
\draw [standard] (t) to node [right] {\rts} (b);
\end{tikzpicture}
\]
\item[(ii)] If $\Theta \vdash s = t$, then $s \bsquig_\rts t$.
\end{enumerate}
\end{restatable}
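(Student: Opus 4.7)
The plan is to prove (i) by structural induction on $s$, simultaneously establishing the analogous statement for types and substitutions so that the case analysis on subterms goes through. Part (ii) will follow from (i) using termination of standard reduction.

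For (i) with $s$ a variable, the claim is vacuous. For $s \equiv \coh(\Gamma:U)[\sigma]$, I would case-split on the type (A--E) of the general reduction $s \squig t$, knowing via Lemma~\ref{lem:generaltostandard} that $s$ also admits some standard reduction whose type is dictated by the priority order of Proposition~\ref{prop:progressive}. The routine cases are where the general and standard reductions act on disjoint sub-parts (they commute and $a \equiv b$), or on the same sub-part (the induction hypothesis furnishes the diamond). Commutation of pruning at distinct locally maximal variables is supplied by Lemma~\ref{lem:prunecommute}, and compatibility of pruning with argument reduction by Lemma~\ref{standardslash}. The cross-priority cases, where the general reduction is, say, an E- or D-reduction but the standard reduction is an A-reduction on $\sigma$, are handled by performing the standard A-reduction first and then observing via Lemmas~\ref{reductionapplication}, \ref{reducelefttermsub}, and \ref{lem:discsubtrick} that the E- or D-reduction is still valid afterwards, landing on a term that $t$ also reaches via its own A-reductions on the resulting substitution.

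The main obstacle I anticipate is the interaction between B- and A-reductions, specifically when an A-reduction acts on the very identity argument that a B-reduction prunes: here Lemma~\ref{lem:preserveidentities} is essential, since it guarantees that reducing an identity produces another identity, so the B-redex structure is preserved across the A-reduction. A parallel subtlety arises for E-reductions, where a standard A-reduction must be pushed across the identity-substitution created by endo-coherence removal before the diamond can close; again Lemma~\ref{reductionapplication} (term substitution compatible with substitution reduction) is the key tool.

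For part (ii), Proposition~\ref{equalityreduction} reduces $\Theta \vdash s = t$ to a finite zigzag of general reductions between $s$ and $t$. By symmetry and transitivity of $\bsquig_\rts$, it suffices to show $s \squig t$ implies $s \bsquig_\rts t$. Applying part (i) yields $s \bsquig_\t a$ and $t \bsquig_\rts b$ with $a \squig_\r b$; if $a \equiv b$ we conclude by transitivity. Otherwise $a \squig b$ is a fresh general reduction step, but $s \bsquig_\t a$ is non-empty, so $a$ is strictly closer to its standard normal form than $s$. By Proposition~\ref{termination} this distance is well-founded, and recursing on it eventually produces $a', b'$ with $a' \equiv b'$, $a \bsquig_\rts a'$ and $b \bsquig_\rts b'$; chaining these with $s \bsquig_\t a$ and $t \bsquig_\rts b$ yields $s \bsquig_\rts t$.
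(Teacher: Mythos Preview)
Your case analysis and the lemmas you cite are essentially those of the paper, but your induction scheme has a genuine gap. You propose structural induction on $s$; the paper instead uses simultaneous induction on the \emph{dimension} of $s$ and on subterms, and the dimension component is essential.

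The problem surfaces in the E-cases. Suppose $s \equiv \coh(\Gamma:u\to_T u)[\sigma] \xsquig{E} \i_n\sub{\{T,u\}\circ\sigma} \equiv t$ while the standard reduction is $s \xbsquig{A} \coh(\Gamma:u\to_T u)[\tilde\sigma]$. After an E-step on the left you reach $\i_n\sub{\{T,u\}\circ\tilde\sigma}$, and you must show $t \bsquig_\rts \i_n\sub{\{T,u\}\circ\tilde\sigma}$. You invoke Lemma~\ref{reductionapplication}, but that lemma yields only \emph{general} reductions $\{T,u\}\circ\sigma \squig_\rt \{T,u\}\circ\tilde\sigma$. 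Converting these to $\bsquig_\rts$ would require (i) again on the entries of $\{T,u\}\circ\sigma$; those entries are of the form $w\sub\sigma$ for $w$ a subterm of the head type, and $w\sub\sigma$ is not a subterm of $s$, so structural induction does not reach it. The paper instead observes that every entry of $\{T,u\}\circ\sigma$ has dimension strictly below $\dim s$ and appeals to part (ii) at that lower dimension, which is available precisely because (i) and (ii) are being proved together with dimension as the outer induction parameter. The same manoeuvre is needed in the D-versus-B case and in the E-versus-B and E-versus-C cases.

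Your derivation of (ii) from (i) via termination is fine and matches the paper, but because (i) already \emph{uses} (ii) at smaller dimension, you cannot treat (ii) as a post-hoc corollary; it must be woven into the mutual induction.
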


\begin{proof}
Since by Proposition~\ref{termination} every term reaches normal form after a finite number of standard reductions, and since by Lemma~\ref{equalityreduction} reduction generates equality, it is clear that statement (i) implies statement (ii). We therefore focus here on the proof of (i).

We can neglect the case of $s$ being a variable, since variables do not reduce. It follows that $s$ is a coherence term, and since the dimension of a valid coherence term is well-defined, we will make use of that throughout. The proof of (i) is by simultaneous induction  on the dimension of $s$, and on subterms of $s$. Since no subterm of $s$ has a greater dimension than $s$ itself, this is consistent. We also note that if we have property (i) for all terms of dimension less than \(n\), then we can derive property (ii) for all terms of dimension less than \(n\), and so we may assume (ii) holds on terms of lower dimension by induction on dimension.

One possibility, which arises several times in the case analysis below, is that the reduction $s \squig t$ is itself standard (that is, $t \equiv \tilde s$.) We can handle this case once-and-for-all as follows:
\begin{equation}
\label{alreadystandard}
\begin{aligned}
\begin{tikzpicture}[xscale=1.2, yscale=1.06]
\node (s) at (0,0) {$s$};
\node (t) at (1,0) {$t$};
\node (a) at (0,-1) {$\tilde s$};
\node (b) at (1,-1) {$\tilde s$};
\draw [nonstandard] (s) to (t);
\draw [triple] (a) to node [below] {} (b);
\draw [standard] (s) to node [left] {} (a);
\draw [triple] (t) to node [right] {} (b);
\end{tikzpicture}
\end{aligned}
\end{equation}
We refer to this argument below where it is needed.

We now begin the main proof of property (i), by case analysis on the reduction $s \squig t$.

\proofstep \emph{First case $s \xsquig {A} t$.}
We suppose $s \equiv \coh (\Gamma : T)[\sigma]$, and that $s \xsquig A t$ by reducing an argument of $\sigma$ via $s_i^{} \squig s_i'$. Then $s$ must be a standard A-redex, because if some argument is not in normal form, there must exist a leftmost argument not in normal form; so we have $s \xbsquig A \tilde s$.

Suppose $s \xsquig A t$ and $s \xbsquig A \tilde s$ act by reducing the same argument of $s_i$ of $\sigma$, via $s_i \squig s_i'$ and $s_i \bsquig \tilde s_i$ respectively. If $s'_i \equiv \tilde s_i^{}$ then we are done by the argument above expression~\eqref{alreadystandard} above. Otherwise the result holds by induction on the subterm $s_i$, as follows:
$$
\begin{tikzpicture}[xscale=4.5, yscale=1.2]
\node (tl) at (0,0) {$\coh (\Gamma : T)[ \ldots, s_i, \ldots ]$};
\node (tr) at (1,0) {$\coh (\Gamma : T)[ \ldots, s_i', \ldots ]$};
\node (bl) at (0,-1) {$\coh (\Gamma : T)[ \ldots, \tilde s_i, \ldots ]$};
\node (br) at (1,-1) {$\coh (\Gamma : T)[ \ldots, q, \ldots ]$};
\draw [nonstandard] (tl) to node [above] {A} (tr);
\draw [standard] (tl) to node [left] {A} (bl);
\draw [standard] (tr) to node [left] {\rts} node [right] {A} (br);
\draw [nonstandard] (bl) to node [below] {A} (br);
\end{tikzpicture}
$$
Alternatively, suppose they act by reducing different arguments of $\sigma$. Then the redexes are independent, and we argue as follows:
$$
\begin{tikzpicture}[xscale=6, yscale=1.5,font=\small]
\node[align=left](tl) at (0,0) {$\coh(\Gamma : T)[ \ldots, s_i, \ldots, s_j, \ldots ]$};
\node[align=left](tr) at (1,0) {$\coh (\Gamma : T)[ \ldots, s_i^{}, \ldots, s_j', \ldots ]$};
\node[align=left](bl) at (0,-1) {$\coh (\Gamma : T)[ \ldots, \tilde s_i, \ldots, s_j, \ldots]$};
\node[align=left](br) at (1,-1) {$\coh (\Gamma : T)[ \ldots, \tilde s^{}_i, \ldots, s_j', \ldots]$};
\draw [nonstandard] (tl) to node [above] {A} (tr);
\draw [standard] (tl) to node [left] {A} (bl);
\draw [standard] (tr) to node [left] {} node [right] {A} (br);
\draw [nonstandard] (bl) to node [below] {A} (br);
\end{tikzpicture}
$$

\proofstep \emph{Second case $s \xsquig {B} t$.}
For this case, we suppose $s \xsquig{B} t$ as follows:
$$s \equiv \coh (\Gamma: T)[\sigma] \xsquig B \coh (\Gamma \sslash x: T \sub{\pi_x} )[\sigma \sslash x] \equiv t$$
We proceed by case analysis on the standard reduction $s \bsquig \tilde s$.

\smallskip
\noindent \emph{Standard A-reduction.} In this case, we argue as follows:
$$
\begin{tikzpicture}[xscale=4, yscale=1.2]
\node (tl) at (0,0) {$\coh (\Gamma: T)[\sigma]$};
\node (tr) at (1,0) {$\coh (\Gamma \sslash x: T \sub{\pi_x})[\sigma \sslash x]$};
\node (bl) at (0,-1) {$\coh (\Gamma : T)[\tilde \sigma]$};
\node (br) at (1,-1) {$\coh (\Gamma \sslash x: T \sub{\pi_x})[\tilde \sigma \sslash x]$};
\draw [nonstandard] (tl) to node [above] {B} (tr);
\draw [standard] (tl) to node [left] {A} node [right] {} (bl);
\draw [standard] (tr) to node [right] {A} (br);
\draw [nonstandard] (bl) to node [below] {B} (br);
\end{tikzpicture}
$$
Since the upper B-reduction is valid, we know that $x\sub{\sigma}$ is an identity; then by Lemma~\ref{lem:preserveidentities} we also have that $x\sub{\tilde \sigma}$ is an identity, and so the lower B-reduction is valid. Validity of the  standard A-reduction on the right of the square follows from Lemma~\ref{standardslash}.

\smallskip
\noindent \emph{Standard B-reduction.} We suppose  $s$ is a standard B-redex with respect to some locally-maximal variable $y$. If $x \equiv y$, then the reductions are the same, and this case is handled by the general argument above expression~\eqref{alreadystandard}. Otherwise, we argue as follows, using Lemma~\ref{lem:prunecommute}:
$$
\begin{tikzpicture}[xscale=4.5, yscale=1.4,font=\footnotesize]
\node (tl) at (0,0) {$\coh (\Gamma: T)[\sigma]$};
\node (tr) at (1,0) {$\coh (\Gamma \sslash x: T \sub{\pi_x})[\sigma \sslash x]$};
\node[align=left] (bl) at (0,-2) {$\coh (\Gamma \sslash y: T \sub{\pi_y})$\\\hspace{.7cm}$[\sigma \sslash y]$};
\node[align=left] (br) at (1,-2) {$\coh \big((\Gamma \sslash y) \sslash x : T \sub{\pi_y} \sub{\pi_x} \big)[(\sigma \sslash y) \sslash x]$};
\node[align=left] (rr) at (1,-1) {$\coh \big((\Gamma \sslash x) \sslash y : T \sub{\pi_x} \sub{\pi_y} \big)[(\sigma \sslash x) \sslash y]$};
\draw [nonstandard] (tl) to node [above] {B} (tr);
\draw [standard] (tl) to node [left] {B} node [right] {} (bl);
\draw [standard] (tr) to node [right] {B} (rr);
\draw [nonstandard] (bl) to node [below] {B} (br);
\draw [triple] (rr) to (br);
\end{tikzpicture}
$$

\proofstep \emph{Third case $s \xsquig {C} t$.}
Suppose $s \equiv \coh (\Gamma : T)[\sigma]$, and $s \xsquig{C} t$ acts via some type reduction $T \squig U$. Then we argue by case analysis on the reduction $s \bsquig \tilde s$.

\smallskip
\noindent \emph{Standard A-reduction.} In this case we have the following:
$$
\begin{tikzpicture}[xscale=4, yscale=1.2]
\node (tl) at (0,0) {$\coh (\Gamma: T)[\sigma]$};
\node (tr) at (1,0) {$\coh (\Gamma: U)[\sigma]$};
\node (bl) at (0,-1) {$\coh (\Gamma : T)[\tilde \sigma]$};
\node (br) at (1,-1) {$\coh (\Gamma: U)[\tilde \sigma]$};
\draw [nonstandard] (tl) to node [above] {C} (tr);
\draw [standard] (tl) to node [left] {A} (bl);
\draw [standard] (tr) to node [right] {A} (br);
\draw [nonstandard] (bl) to node [below] {C} (br);
\end{tikzpicture}
$$

\smallskip
\noindent \emph{Standard B-reduction.} If $s$ is a standard B-redex, we have the following, employing Lemma~\ref{reducelefttermsub}:
$$
\begin{tikzpicture}[xscale=4.5, yscale=1.2]
\node (tl) at (0,0) {$\coh (\Gamma: T)[\sigma]$};
\node (tr) at (1,0) {$\coh (\Gamma: U)[\sigma]$};
\node (bl) at (0,-1) {$\coh (\Gamma \sslash x : T \sub{\pi_x})[ \sigma \sslash x ]$};
\node (br) at (1,-1) {$\coh (\Gamma \sslash x: U \sub{\pi_x} )[\sigma \sslash x]$};
\draw [nonstandard] (tl) to node [above] {C} (tr);
\draw [standard] (tl) to node [left] {B} (bl);
\draw [standard] (tr) to node [right] {B} (br);
\draw [nonstandard] (bl) to node [below] {C} (br);
\end{tikzpicture}
$$

\smallskip
\noindent \emph{Standard C-reduction.} Suppose $s$ is a standard C-redex via a standard type reduction $T \bsquig \tilde T$. If $\tilde T \equiv U$, we are done by the argument above expression~\eqref{alreadystandard}. Otherwise, since $T \squig U$, by induction on subterms we know that $U \bsquig_\rts T$, and we argue as follows:
$$
\begin{tikzpicture}[xscale=4, yscale=2]
\node (tl) at (0,0) {$\coh (\Gamma: T)[\sigma]$};
\node (tr) at (1,0) {$\coh (\Gamma: U)[\sigma]$};
\node (bl) at (0,-1) {$\coh (\Gamma : \tilde T)[ \tau ]$};
\node (br) at (1,-1) {$\coh (\Gamma: \tilde T)[\tau]$};
\node (brr) at (1,-.5) {$\coh (\Gamma: T)[\tau]$};
\draw [nonstandard] (tl) to node [above] {C} (tr);
\draw [standard] (tl) to node [left] {C} node [right] {\t} (bl);
\draw [standard] (tr) to node [right] {C} node [left] {\rts} (brr);
\draw [standard] (brr) to node [right] {C} (br);
\draw [triple] (bl) to node [below] {} (br);
\end{tikzpicture}
$$

\proofstep \emph{Fourth case $s \stackrel {\mathrm{D}} \squig t$.}
We suppose $s \stackrel {\mathrm{D}} \squig t$ as follows:
$$s \equiv \coh (\uD^{n+1}: S^n)[ \ldots, t] \xsquig D t$$
Then we consider the standard reduction for $s$.

\smallskip
\noindent\emph{Standard A-reduction.}
Supposing $t \bsquig \tilde t$ is the leftmost reducible argument of $s$, then we have the following:
$$
\begin{tikzpicture}[xscale=4, yscale=1.2]
\node (tl) at (0,0) {$\coh (\uD^{n+1}: S^n)[ \ldots, t ]$};
\node (tr) at (1,0) {$t$};
\node (bl) at (0,-1) {$\coh (\uD^{n+1}: S^n)[ \ldots, \tilde t]$};
\node (br) at (1,-1) {$\tilde t$};
\draw [nonstandard] (tl) to node [above] {D} (tr);
\draw [standard] (tl) to node [left] {A} (bl);
\draw [standard] (tr) to node [left] {} (br);
\draw [nonstandard] (bl) to node [below] {D} (br);
\end{tikzpicture}
$$
Otherwise, let $p$ be the leftmost reducible argument of $s$. Then we argue as follows:
$$
\begin{tikzpicture}[xscale=4.5, yscale=1.2]
\node (tl) at (0,0) {$\coh (\uD^{n+1}: S^n)[ \ldots, p, \ldots, t]$};
\node (tr) at (1,0) {$t$};
\node (bl) at (0,-1) {$\coh (\uD^{n+1}: S^n)[ \ldots, \tilde p, \ldots, t]$};
\node (br) at (1,-1) {$t$};
\draw [nonstandard] (tl) to node [above] {D} (tr);
\draw [standard] (tl) to node [left] {A} (bl);
\draw [triple] (tr) to node [left] {} (br);
\draw [nonstandard] (bl) to node [below] {D} (br);
\end{tikzpicture}
$$

\smallskip
\noindent \emph{Standard B-reduction.}
We must have $t \equiv \i_n \sub{\ldots, q_2, q_1 }$, and hence $s \equiv \coh (\uD^{n+1}: S^n)[\ldots, p_2^{}, p_1^{}, p_1', \i_n \sub{\ldots, q_2, q_1 } ]$. Since $s$ is valid we deduce $q_1^{} = p_1^{} = p'_1$ and $q_i = p_i$. It follows by induction on dimension that $q_i \bsquig_\rts p_i$. We put this together as follows:
$$
\begin{tikzpicture}[xscale=5, yscale=1.2,font=\footnotesize]
\node[align=left] (tl) at (0,0) {$\coh (\uD^{n+1}: S^n)[\ldots, p_2^{}, p_1^{}, p_1', \i_n \sub{\ldots, q_2, q_1 } ]$};
\node (tr) at (1,0) {$\i_n \sub{\ldots, q_2, q_1 }$};
\node (bl) at (0,-1) {$\coh (\uD^{n}: d_n \to_{S^{n-1}} d_n)[ \ldots, p_2, p_1 ]$};
\node (br) at (1,-1) {$\i_n \sub{\ldots, p_2, p_1 }$};
\draw [nonstandard] (tl) to node [above] {D} (tr);
\draw [standard] (tl) to node [left] {B} (bl);
\draw [standard] (tr) to node [right] {A} node [left] {\rts} (br);
\draw [triple] (bl) to (br);
\end{tikzpicture}
$$

\smallskip
\noindent {\emph{Standard C-reduction}.} The term $s$ cannot be a  C-redex, since the type $S^n$ is in normal form, being constructed entirely from variables.

\smallskip
\noindent {\emph{Standard D-reduction}.}  In this case, the result follows from the argument around around expression~\eqref{alreadystandard} above.

\proofstep \emph{Fifth case $s \stackrel {\mathrm{E}} \squig t$.}
We suppose $s \stackrel {\mathrm{E}} \squig t$ as follows, for $n = \dim \, T + 1$:
$$s \equiv \coh (\Gamma: u \to_T u)[\sigma] \squig \i_{n} \sub{\{T,u\} \circ \sigma} \equiv t$$
We now consider the structure of the standard reduction $s \bsquig \tilde s$.

\smallskip
\noindent \emph{Standard A-reduction.} If $\sigma \bsquig \tilde \sigma$ then we get by Lemma~\ref{lem:eq-sub-r-inv} that \(\{T,u\}\circ\sigma = \{T,u\}\circ\tilde\sigma\). We conclude by induction on dimension that $\i_{n} \sub{\{T,u\} \circ \sigma} \xbsquig A _\rts \i_{n} \sub{\{T,u\} \circ \tilde\sigma}$. Altogether, we have the following as required:
$$
\begin{tikzpicture}[xscale=4, yscale=1.2]
\node (tl) at (0,0) {$\coh (\Gamma : u \to_T u)[ \sigma]$};
\node (tr) at (1,0) {$\i_{n} \sub{\{T,u\} \circ \sigma}$};
\node (bl) at (0,-1) {$\coh (\Gamma : u \to_T u)[\tilde \sigma]$};
\node (br) at (1,-1) {$\i_{n} \sub{\{T,u\} \circ \tilde \sigma}$};
\draw [nonstandard] (tl) to node [above] {E} (tr);
\draw [standard] (tl) to node [left] {A} (bl);
\draw [standard] (tr) to node [left] {\rts} node [right] {A} (br);
\draw [nonstandard] (bl) to node [below] {E} (br);
\end{tikzpicture}
$$

\smallskip
\noindent \emph{Standard B-reduction.} In this case we have the following:
$$
\begin{tikzpicture}[xscale=5, yscale=.7, font=\footnotesize]
\node (tl) at (0,0) {$\coh (\Gamma : u \to_T u)[ \sigma]$};
\node (tr) at (1,0) {$\i_{n} \sub{\{T,u\} \circ \sigma}$};
\node [scale=1] (bl) at (0,-2) {$\coh (\Gamma \sslash x: (u \!\to\!_T u) \sub{\pi_x})[\sigma \sslash x]$};
\node (br) at (1,-2) {$\i_n \sub{\{T\sub{\pi_x},u\sub{\pi_x}\} \!\circ\! \sigma \sslash x}$};
\draw [nonstandard] (tl.east) to node [above] {E} (tr.west);
\draw [standard] (tl) to node [left] {B} (bl);
\draw [standard] (tr) to node [left] {$\rts$} node [right] {A} (br);
\draw [nonstandard] (bl.east) to node [below] {E} (br.west);
\end{tikzpicture}
$$
We obtain the right-hand standard A-reduction as follows. From Lemma~\ref{lem:pruning-construct} we know $\sigma = \pi_x \circ (\sigma \sslash x)$, and hence:
\[\{T\sub{\pi_x},u\sub{\pi_x}\} \circ \sigma \sslash x \equiv \{T,u\} \circ \pi_x \circ \sigma \sslash x = \{T,u\} \circ \sigma\]
By induction on dimension we conclude $\{T,u\}\circ \sigma \bsquig_\rts \{T\sub{\pi_x},u\sub{\pi_x}\} \circ \sigma \sslash x$.

\smallskip
\noindent \emph{Standard C-reduction.}
Since $\{T,u\} = \{N(T),N(u)\}$ and so \(\{T,u\} \circ \sigma = \{N(T),N(u)\} \circ \sigma\), it follows by induction on dimension that $\{T,u\} \circ \sigma \bsquig_\rts \{N(T),N(u)\} \circ\sigma$, and hence that $\i_{n} \sub{\{T,u\} \circ \sigma} \xbsquig A  _\rts \i_{n} \sub{\{N(T),N(u)\} \circ \sigma}$. We then have the following:
$$
\begin{tikzpicture}[xscale=4.2, yscale=1.5,font=\small]
\node (tl) at (0,0) {$\coh (\Gamma : u \to_T u)[ \sigma]$};
\node (tr) at (1,0) {$\i_{n} \sub{\{T,u\} \circ \sigma}$};
\node (br) at (1,-1) {$\i_{n} \sub{\{N(T),N(u)\} \circ \sigma}$};
\node (bbbl) at (0,-1) {$\coh (\Gamma : N(u) \to_{N(T)} N(u))[\sigma]$};
\draw [nonstandard] (tl) to node [above] {E} (tr);
\draw [standard] (tr) to node [left] {\rts} node [right] {A} (br);
\draw [nonstandard] (bbbl) to node [below] {E} (br);
\draw [standard] (tl) to node [left] {C} node [right] {\t} (bbbl);
\end{tikzpicture}
$$

\smallskip
\noindent \emph{Standard D-reduction.} The term $s$ cannot be a D-redex, as the term $s$ has an incompatible structure.

\smallskip
\noindent \emph{Standard E-reduction.} In this case we are done, with the result following from the argument around around expression~\eqref{alreadystandard}.

\noindent
This completes the argument.
\end{proof}

Finally, we can use these results to show that the equality relation is decidable.

\begin{restatable}{theorem}{decidedefeq}
\label{cor:decidedefeq}
We have $s = t$ if and only if $N(s) \equiv N(t)$.
\end{restatable}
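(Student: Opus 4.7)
The plan is to chain three earlier results so that the theorem follows quickly. For the $(\Leftarrow)$ direction, assume $N(s) \equiv N(t)$. Since $s \bsquig_\rt N(s)$ and $t \bsquig_\rt N(t)$ by definition of the normal form, we obtain $s \bsquig_\rts t$ by transitivity through the common syntactic term. Lemma~\ref{lem:standardtogeneral} then promotes this to $s \squig_\rts t$, and Proposition~\ref{equalityreduction} yields $s = t$.

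For the harder $(\Rightarrow)$ direction, suppose $\Theta \vdash s = t$. Part~(ii) of Proposition~\ref{standardise} immediately gives $s \bsquig_\rts t$, translating definitional equality into a zig-zag of standard reductions. Lemma~\ref{close_rts} collapses this zig-zag to a common reduct: there is some term $c$ with $s \bsquig_\rt c$ and $t \bsquig_\rt c$. Now the key point is that standard reduction is a \emph{reduction strategy}, so the standard reduction sequence out of any term is uniquely determined at each step; combined with Proposition~\ref{termination}, this means each valid term has a unique sequence of standard reductions terminating in its normal form. Consequently, the reduction $s \bsquig_\rt c$ lies as an initial segment of the unique standard reduction path from $s$ to $N(s)$, so $c \bsquig_\rt N(s)$; by the same argument $c \bsquig_\rt N(t)$. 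Since the standard reduction sequence out of $c$ is itself unique and ends at a unique normal form, we conclude $N(s) \equiv N(t)$.

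The only subtle step is the appeal to determinism to conclude that $c \bsquig_\rt N(s)$ and $c \bsquig_\rt N(t)$ force $N(s) \equiv N(t)$; this works because a reduction strategy cannot branch, and a term in normal form admits no further standard reduction, so both reduction paths from $c$ are in fact one and the same, terminating at a single syntactically determined normal form. All the genuine work — confluence-style closure via Proposition~\ref{standardise}, termination via Proposition~\ref{termination}, and agreement of the generated equivalence relations via Proposition~\ref{equalityreduction} — has already been done, so the present theorem is a short corollary.
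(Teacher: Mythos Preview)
Your proof is correct and follows essentially the same route as the paper: both directions chain Proposition~\ref{standardise}, Lemma~\ref{close_rts}, and the determinism of standard reduction in exactly the way you describe. The paper's version is terser---it simply asserts $N(s)\equiv N(u)\equiv N(t)$ once a common reduct $u$ is found---whereas you spell out the determinism argument explicitly, but the content is the same.
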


\begin{proof}
  If \(N(s) \equiv N(t)\) then \(s \bsquig_\rts t\) and so \(s = t\). Conversely, if \(s = t\), then by Proposition~\ref{standardise}, we have that \(s \bsquig_\rts t\), and so by Lemma~\ref{close_rts} we have \(s \bsquig_\rt u\) and \(t \bsquig_\rt u\) and so \(N(s) \equiv N(u) \equiv N(t)\).
\end{proof}

\section{Examples and Implementation}

In this section, we investigate some examples, to see the type theory in action. As we can decide equality by comparing normal forms, we are able to provide an algorithm for type-checking \(\Catt_\su\) terms. We have implemented type theory in OCaml, and made it available here:
$$
\text{\url{http://github.com/ericfinster/catt.io/tree/v0.1}}
$$
The following examples demonstrate the use of this implementation, and the code for all of them can be found in the examples folder within the repository.

\begin{example}[Left unitor] \

\vspace{0pt}
\noindent
{\scriptsize\verb|examples/example_left_unitor.catt|}

\vspace{2pt}
\noindent
The left unitor on \(f : x \to y\) is defined by the following \(\Catt\) term, of type $\comp_2 \sub{\i_0\sub x, f} \to f$:
  \begin{align*}
    \coh((x : \star)(y : \star)(f : x \to_\star y) : \mathsf{comp_2}\sub{x,x,\i_0\sub{x},y,f} \to f)[x,y,f]
  \end{align*}
  In \(\Catt_\su\) this reduces to the identity on \(f\), demonstrating how higher coherence data can trivialise in the theory.
\end{example}

\begin{example}[Associator on a unit]\

\vspace{0pt}
\noindent
{\scriptsize\verb|examples/example_associator_unit.catt|}

\vspace{3pt}
\noindent
For \(f : x \to y\) and \(g : y \to z\), we have the following defin-itional equality in $\Catt_\su$:
$$(f \cdot \id_y) \cdot g = f \cdot (\id_y \cdot g)$$
We may therefore anticipate the associator $\smash{\alpha_{f,\id_y,g}}$ to trivialise, and indeed this is the case. We illustrate this with the following standard reduction sequence, where for brevity we leave some substitution arguments implicit:
{
    \begin{alignat*}{2}
      &\phantom{{}\equiv{}}&&\mathsf{assoc}\sub{f,\i_0\sub{y},g}\\
      &{}\equiv{}&& \coh((x : \star) (y : \star) (a : x \to y) (z : \star) (b : y \to z) (w : \star) (c : z \to w) :\\[-2pt]
      &&&\quad\mathsf{comp_2}\sub{\mathsf{comp_2}\sub{a,b},c} \to \mathsf{comp_2}\sub{a,\mathsf{comp_2}\sub{b,c}})\sub{f,\i_0\sub{y},g}\\
      &{}\xbsquig {B}{}&& \coh((x : \star) (y : \star) (a : x \to y) (w : \star) (c : y \to w):\\[-2pt]
      &&&\quad\mathsf{comp_2}\sub{\mathsf{comp_2}\sub{a,\i_0\sub{y}}, c} \to \mathsf{comp_2} \sub{a, \mathsf{comp_2}\sub{\i_0\sub{y}, c}})\sub{f,g}\\
      &{}\xbsquig {C}_\t{}&& \coh((x : \star) (y : \star) (a : x \to y) (w : \star) (c : y \to w):\\[-2pt]
      &&&\quad\mathsf{comp_2}\sub{\mathsf{comp_1}\sub{a}, c} \to \mathsf{comp_2} \sub{a, \mathsf{comp_1}\sub{c}})\sub{f,g}\\
      &{}\xbsquig {C}_\t{}&& \coh((x : \star) (y : \star) (a : x \to y) (w : \star) (c : y \to w):\\[-2pt]
      &&&\quad\mathsf{comp_2}\sub{a, c} \to \mathsf{comp_2} \sub{a,c})\sub{f,g}\\
      &{}\xbsquig {E}&&\i_1\sub{\{(x \to z), \mathsf{comp_2}\sub{f,g}\}}
    \end{alignat*}}%
This reduction sequence uses all three generators \textsc{prune}, \textsc{disc} and \textsc{endo}, with disc removal being used for the second  C-reduction.
\end{example}

\begin{example}[Eckmann-Hilton move]
\label{ex:eh}
\

\vspace{0pt}
\noindent
{\scriptsize\verb|examples/example_eh_catt.catt|}

\vspace{-3pt}
\noindent
{\scriptsize\verb|examples/example_eh_cattsu.catt|}

\vspace{2pt}
\noindent
The \textit{Eckmann-Hilton move} of higher category theory is an algebraic phenomenon that arises in the following context:
$$(x : \star) (a,b : \i_0\sub{x} \to \i_0\sub{x})$$
The claim is that one can construct a composite 3-cell of the following type:
$$\textsf{EH}_3\sub{a,b} : \comp_2 \sub{a,b} \to \comp_2 \sub{b,a}$$
The geometrical intuition is that $a,b$ ``braid'' around each other in the plane.

This Eckmann-Hilton 3-cell has been previously construc-ted in $\Catt$, and we link to its formalization above. It is a geometrically complex proof, involving substantial use of the weak unit structure, requiring in total 1574 coherence constructors in the proof term.

Given the substantial use of the unit structure, we would anticipate the proof would simplify considerably in $\Catt_\su$ when normalized. This is indeed what we observe, with the $\Catt_\su$ normal form involving only 27 coherence constructors in the proof term, a 50-fold reduction in complexity.

Links to both the $\Catt$ and $\Catt_\su$ formalizations are provided at the top of this example. The $\Catt_\su$ normal form is so short that we are able to provide it explicitly as Appendix~\ref{app:eh}.
\end{example}

\begin{example}[Syllepsis]\label{ex:syl}\

\vspace{0pt}
\noindent
{\scriptsize\verb|examples/example_syllepsis.catt|}

\vspace{2pt}
\noindent
The  Eckmann-Hilton move ``braids'' the generators $a,b$ past each other. After some consideration it becomes clear that there should be two homotopically distinct ways to achieve this, which could be called the ``braid'' and ``inverse braid'', and we would not expect these proofs to be equivalent.

However, if we increase the level of degeneracy in the types, these moves in fact become equivalent. To show this, we must instead use the following context:
$$(x : \star) (\alpha , \beta : \i_1\sub{\i_0\sub{x}} \to \i_1\sub{\i_0\sub{x}})$$
Here $\alpha,\beta$ are 3\-dimensional generators. We can adapt our definition of the \textsf{EH} cell to this higher dimension, yielding the following 4-cell:
\begin{align*}
\mathsf{EH}_4 \sub{\alpha, \beta} &: \comp_3 \sub{\alpha, \beta} \to \comp_3 \sub{\beta, \alpha}
\end{align*}
The syllepsis is then a 5-cell  of the following type:
$$\mathsf{SY} \sub{\alpha, \beta} : \mathsf{EH}_4 \sub{\alpha, \beta} \to \mathsf{EH}_4^{-1} \sub{\beta, \alpha}$$
The syllepsis 5\-cell has not been constructed in $\Catt$. However, we have successfully formalized it in $\Catt_\su$ due to the reduction in proof complexity in that theory. The formalization is linked above.

We believe this to be the first formalization of the syllepsis as a pure path-theoretic object. Following dissemination of our results, others have now also formalized the syllepsis in other theories~\cite{Syllepsis}.
\end{example}

\section{Disc trivialization}

In the following section, we prove the following structure theorem: in a disc context~$\uD^n$, up to definitional equality, every term is either a variable, or an iterated identity on a variable. So if we restrict to terms that use \textit{all} variables of $\uD^n$ (that is, terms which do not factor through a smaller context), then there is exactly \textit{one} definitional equivalence class of term in each dimension $k \geq n$, the normal form of which is the locally-maximal variable $d_n,$ or an iterated identity on this. In this sense, $\Catt_\su$ trivializes disc contexts.

This provides an interesting point of comparison with work of Batanin, Cisinski and Weber \cite{Batanin2013}, discussed further in Section~\ref{sec:relatedwork}, where strictly unital $\infty$\-categories are defined as algebras over a certain  operad, defined to be the universal one which trivializes the operations over discs, and which satisfies a unit compatibility property. By showing our theory also trivializes  operations over discs, we claim that in this respect, our theory is at least as strict.

\paragraph{Structure Theorem} Our structure theorem is stated as follows:

\begin{restatable}[Disc trivialization]{theorem}{disctrivialization}
\label{thm:disctrivialization}
Suppose $t$ is valid in $\uD^n$. Then $t$ is definitionally equal to a variable, or to the iterated canonical identity on a variable.
\end{restatable}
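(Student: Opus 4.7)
The plan is to apply Theorem~\ref{cor:decidedefeq} to reduce the claim to showing that the normal form $N(t)$ is syntactically either a variable or an iterated canonical identity on a variable, and then proceed by strong induction on the syntactic size of $N(t)$. If $N(t)$ is a variable the conclusion is immediate; otherwise $N(t) \equiv \coh(\Gamma:T)[\sigma]$ is a coherence term in normal form, and the inductive hypothesis applies to every strictly smaller subterm, in particular to every argument of $\sigma$ and to the source, target and base type appearing in $T$.

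If $N(t)$ is an identity, then $\Gamma \equiv \uD^m$ and $T \equiv d_m \to_{S^{m-1}} d_m$, and since $\sigma$ is a substitution out of a disc it factors as $\sigma \equiv \{A,u\}$ for some type $A$ and term $u$ valid in $\uD^n$. By the inductive hypothesis, $u$ is already a variable or an iterated canonical identity on a variable, and hence $N(t) \equiv \i_m\sub{\{A,u\}}$ is itself an iterated canonical identity on a variable, of one greater depth.

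The remaining case is that $N(t)$ is not an identity, and I plan to derive a contradiction. Since $N(t)$ is not a B-redex, no locally-maximal argument of $\sigma$ can itself be an identity, and so by the inductive hypothesis every such argument is a variable of $\uD^n$. The key combinatorial step is that this restricts $\Gamma$ to be a disc context: if $\Gamma$ had two adjacent peaks $\alpha,\beta$ in its Dyck-word representation, the pasting structure would force $\tgt^{i}(\alpha\sub\sigma) \equiv \src^{j}(\beta\sub\sigma)$ for some $i,j\geq 1$ reflecting the common valley cell between them. But a direct calculation from the definition of $\uD^n$ shows that, for any variable $v \in \{d_k,d_k'\}$ and any $i\geq 1$, $\tgt^{i}(v)$ is the \emph{primed} variable $d_{k-i}'$ while $\src^{i}(v)$ is the \emph{unprimed} variable $d_{k-i}$, so the required equality is impossible; hence $\Gamma \equiv \uD^m$ for some $m$.

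Once $\Gamma \equiv \uD^m$, the normal-form type $T \equiv s \to_A t'$ satisfies $s \not\equiv t'$ (else $N(t)$ would be an E-redex), and its source and target must satisfy one of the two support conditions of Figure~\ref{fig:terms}. The endo option $\supp(s) = \supp(t') = \FV(\uD^m)$ is ruled out, because the inductive hypothesis forces $s$ and $t'$ to be iterated canonical identities on the unique maximal variable $d_m$ and the matching of dimensions $\dim s = \dim t'$ then identifies them syntactically. In the non-endo case, the inductive hypothesis together with the type-matching requirement $\ty(s) \equiv A \equiv \ty(t')$ forces $s \equiv d_{m-1}$ and $t' \equiv d_{m-1}'$ (any iterated-identity alternatives of positive depth are excluded because the types of iterated identities on $d_{m-1}$ and on $d_{m-1}'$ are never syntactically equal), and an analogous analysis of $A$ yields $A \equiv S^{m-2}$, so $T \equiv S^{m-1}$. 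But then $N(t)$ is a D-redex, contradicting normality. The principal obstacle is the combinatorial step eliminating pasting contexts with more than one peak, which rests on the source/target asymmetry built into the disc contexts $\uD^n$.
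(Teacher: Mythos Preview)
Your proposal is correct and follows essentially the same approach as the paper: reduce to normal form, handle the identity case by induction, and in the non-identity case derive a contradiction by first showing the head pasting context must be a disc (via the source/target asymmetry of $\uD^n$) and then exhibiting a D- or E-redex. The paper organises this slightly differently, isolating the combinatorial step as two lemmas (first showing $\sigma$ is variable-to-variable on \emph{all} variables, then that this forces $\Gamma\equiv\uD^k$), and using a dimension/subterm induction rather than syntactic size, but the content is the same. One small point you elide in the identity case: to conclude that $\i_m\sub{\{A,u\}}$ is a \emph{canonical} identity you need $A\equiv\ty(u)$ syntactically, which follows because both are in normal form and definitionally equal, but this deserves a sentence.
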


To prove this, we need to establish some technical results about pasting contexts. We say that a substitution $\Delta \vdash \sigma : \Gamma$ is a \emph{variable-to-variable substitution} if for all $x \in \var(\Gamma)$, we have that $x \sub \sigma$ is again a variable.

\begin{restatable}{lemma}{variablestovariables}
\label{lem:variablestovariables}
Let $\uD^n\vdash \sigma: \Gamma $ be a valid substitution which is in normal form, and which sends locally-maximal variables of $\Gamma$ to variables of $\uD^n$. Then $\sigma$ is a variable-to-variable substitution.
\end{restatable}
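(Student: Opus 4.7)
\textit{Plan.} The strategy is to reduce the problem, which concerns all variables of $\Gamma$, to the locally-maximal case that we already know from the hypothesis, by showing that every variable of $\Gamma$ is rigidly determined by the globular structure of the surrounding pasting context. Concretely, I will first establish the combinatorial fact that any variable $y \in \Gamma$ either is locally-maximal or equals $\src^{j}(\alpha)$ or $\tgt^{j}(\alpha)$ for some $\alpha \in \LM(\Gamma)$ and some $j \geq 1$; this is a reverse induction on $\dim\, y$, using that every top-dimensional variable of a pasting context is automatically a peak, and that any non-top-dimensional variable, by inspection of the $\Uparrow$ rule for Dyck words, is the source or target of some variable of one dimension higher. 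The degenerate case $\Gamma = (x : \star)$ is handled directly by noting that every valid $0$-dimensional term in $\uD^n$ is a variable.

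Next, for each locally-maximal $\alpha \in \LM(\Gamma)$ of dimension $k$ with inferred type $A$, since substitution preserves dimension, the variable $\alpha\sub\sigma$ of $\uD^n$ is forced to be one of $d_k$ or $d_k'$. By Lemma~\ref{lem:identities} we have $\Gamma \vdash \{A, \alpha\} : \uD^k$, and Lemma~\ref{lem:typeunfoldingsub} gives $\{A, \alpha\} \circ \sigma \equiv \{A\sub\sigma, \alpha\sub\sigma\}$, a valid substitution $\uD^n \vdash \{A\sub\sigma, \alpha\sub\sigma\} : \uD^k$ whose image on the top variable $d_k$ is syntactically $\alpha\sub\sigma$. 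Applying Lemma~\ref{lem:discsubtrick} to both $\{A, \alpha\}$ and to $\{A, \alpha\} \circ \sigma$, together with Proposition~\ref{prop:catt-is-cat}, then produces the definitional equalities $\src^{k-i}(\alpha)\sub\sigma = \src^{k-i}(\alpha\sub\sigma)$ and $\tgt^{k-i}(\alpha)\sub\sigma = \tgt^{k-i}(\alpha\sub\sigma)$ for each $i < k$.

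A direct computation in $\uD^n$ shows that the right-hand sides $\src^{k-i}(\alpha\sub\sigma)$ and $\tgt^{k-i}(\alpha\sub\sigma)$ are syntactically the variables $d_i$ and $d_i'$ respectively. On the left, since $\src^{k-i}(\alpha)$ and $\tgt^{k-i}(\alpha)$ are themselves variables of $\Gamma$, their $\sigma$-images are literal entries of the substitution $\sigma$, hence in normal form by hypothesis; variables are also in normal form. Normal forms of a definitional equivalence class are unique by Theorem~\ref{cor:decidedefeq}, so the above definitional equalities upgrade to syntactic equalities, and in particular $\src^{k-i}(\alpha)\sub\sigma$ and $\tgt^{k-i}(\alpha)\sub\sigma$ are variables of $\uD^n$. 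Combined with the combinatorial fact, this shows $y\sub\sigma$ is a variable for every $y \in \Gamma$.

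The main obstacle I anticipate is the combinatorial step connecting an arbitrary variable to some locally-maximal one through iterated sources and targets; once this is in hand, the remainder is a direct computation via Lemmas~\ref{lem:typeunfoldingsub} and~\ref{lem:discsubtrick} together with uniqueness of normal forms. A minor subtlety is ensuring the $\src^{k-i}$ computation matches cleanly whether $\alpha\sub\sigma$ is $d_k$ or $d_k'$, which follows because the two have identical types $S^{k-1}$ in $\uD^n$.
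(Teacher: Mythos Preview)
Your proposal is correct and follows essentially the same approach as the paper's proof: reach every variable of $\Gamma$ from a locally-maximal one via iterated sources/targets, use that substitution commutes with $\src^k$/$\tgt^k$ up to definitional equality, observe that iterated boundaries of a variable in $\uD^n$ are again variables, and upgrade to syntactic equality using that $\sigma$ is in normal form. Your write-up is more explicit than the paper's (invoking Lemmas~\ref{lem:typeunfoldingsub} and~\ref{lem:discsubtrick} and Theorem~\ref{cor:decidedefeq} by name), and in fact slightly more careful: the paper's proof only says $v = \src^k(w)$, whereas you correctly allow $\tgt^j(\alpha)$ as well, which is needed for variables like the final target in a pasting context.
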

\begin{proof}
Let $v$ be a variable of $\Gamma$. Then there is some locally-maximal variable $w$ of $\Gamma$ such that $v = \src^k(w)$ for some $k \in \N$. It follows from the formation rules for substitution that
$\src^k(w \sub \sigma) = \src^k(w) \sub \sigma \equiv v \sub \sigma$. Since we are given that $w \sub \sigma$ is a variable, it follows that $v \sub \sigma$ is a variable up to definitional equivalence. But since $\sigma$ is normalized, $v \sub \sigma$ must be precisely a variable.
\end{proof}

For $k<n$, there are two variable-to-variable substitutions \mbox{$\uD^n \vdash \partial^{\pm}_{k,n}: \uD^k$}, which map the $k$\-disc context into the appropriate source or target context of~$\uD^n$. We also have $\uD^n \vdash \id_{\uD^n} : \uD^n$, the identity substitution. We call these \emph{subdisc inclusions.} We now show that every valid variable-to-variable substitution \mbox{$\uD^n \vdash \sigma : \Gamma$} is of this form.
\begin{restatable}{lemma}{pastingfact}
\label{lem:pastingfact}
Let $\uD^n \vdash \sigma:\Gamma$ be a valid variable-to-variable substitution. Then we have $\Gamma \equiv \uD^k$ for some $k \leq n$, and $\sigma$ is a subdisc inclusion.
\end{restatable}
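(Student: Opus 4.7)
The plan is to show that $\Gamma$ must have at most one locally-maximal variable, from which it follows by inspection of the Dyck word that $\Gamma \equiv \uD^k$ for some $k$, and then to identify $\sigma$ explicitly by type-matching. The key observation driving the argument is that in $\uD^n$, every positive-dimensional variable $v$ of dimension $k$ satisfies $\src^r(v) \equiv d_{k-r}$ and $\tgt^r(v) \equiv d'_{k-r}$ for all $1 \leq r \leq k$, regardless of whether $v \equiv d_k$ or $v \equiv d'_k$. That is, the iterated source and target of a positive-dimensional variable in $\uD^n$ collapse onto the canonical chain, independently of which side of any higher cell we started from.

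Assuming for contradiction that $\Gamma$ has at least two locally-maximal variables, I would pick two consecutive peaks $\alpha$ and $\beta$ in the Dyck word of $\Gamma$. Between them, the Dyck word descends by some $r \geq 1$ down-moves to a valley variable $\gamma$ at dimension $j < \dim \alpha, \dim \beta$, and then ascends by some $s \geq 1$ up-moves to reach $\beta$. A short inspection of the pasting-context derivation rules shows that $\gamma \equiv \tgt^r(\alpha)$ and $\gamma \equiv \src^s(\beta)$. Since $\sigma$ is variable-to-variable, $\alpha\sub\sigma$ and $\beta\sub\sigma$ are both positive-dimensional variables of $\uD^n$, and applying the key observation together with the fact that $\src$ and $\tgt$ commute with substitution then yields $\gamma\sub\sigma \equiv d'_j$ from the $\alpha$-side and $\gamma\sub\sigma \equiv d_j$ from the $\beta$-side. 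Since $d_j$ and $d'_j$ are distinct variables of $\uD^n$ (and variables are in normal form, hence syntactically determined), this is the desired contradiction.

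I expect the main obstacle to be the combinatorial accounting of the valley variable $\gamma$ between the two peaks; once $\gamma$ has been identified as both $\tgt^r(\alpha)$ and $\src^s(\beta)$, the contradiction is immediate from the key observation. Having concluded that $\Gamma$ has at most one peak, a direct structural inspection of Dyck words shows that such a pasting context has the form $\uD^k$ for some $k$, and $k \leq n$ follows since the top cell of $\Gamma$ must map to a variable of $\uD^n$ of matching dimension. Finally, to identify $\sigma$ as a subdisc inclusion, the image $d_k\sub\sigma$ must be a variable of dimension $k$, so it is either $d_k$ or (if $k < n$) $d'_k$; either choice then determines the images of all remaining variables of $\uD^k$ by iterating $\src$ and $\tgt$ via the key observation, recovering exactly one of $\partial^-_{k,n}$, $\partial^+_{k,n}$, or $\id_{\uD^n}$ (the identity arising precisely when $k = n$).
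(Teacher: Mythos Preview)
Your argument is correct and follows essentially the same route as the paper. Both proofs locate a valley variable in $\Gamma$ that is simultaneously a target and a source, and both derive a contradiction from the fact that in $\uD^n$ targets are always $d'_j$ while sources are always $d_j$. The paper packages this via the language of globular-set morphisms and works with the immediate source/target at the valley, whereas you iterate $\src$ and $\tgt$ from the peaks down to the valley; the content is the same, and your identification of $\sigma$ as a subdisc inclusion is likewise the same as the paper's.
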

\begin{proof}
\def\g{\mathsf{g}}
The variables of a pasting context $\Gamma$ form a globular set $\g(\Gamma)$ in an obvious way, and the substitution well-typedness condition means that a variable-to-variable substitution $\uD^n \vdash \sigma : \Gamma$ induces a function of globular sets $\sigma : \g(\Gamma) \to \g(\uD^n)$. Suppose for a contradiction $\Gamma$ is \textit{not} a disc: then it must contain some sub--Dyck word $(\Uparrow (\Downarrow (\Uparrow (\cdots )\,y\,f)\cdots ) z\,g \cdots)$, and we have $\tgt(f) = \src(g)$. Then also $\tgt(\sigma(f)) = \src(\sigma(g))$; but now we have a contradiction, since the globular set of a disc does not have any pair of elements related in this way.

We conclude that $\Gamma \equiv \uD^k$ for some $k \leq n$. It remains to show that $\sigma$ is a subdisc inclusion. For this, suppose $k=n$. Then since $\sigma$ preserves variable dimension, we must have $d_n \sub \sigma = d_n$, and this extends uniquely to the other variables,  since $\sigma$ is a function of globular sets, and we conclude $\sigma \equiv \id_{\uD^n}$. Otherwise, suppose $k < n$. Then we can choose $d_k \sub \sigma = d_k$ or $d_k \sub \sigma = d'_k$, and once again, both extend uniquely, yielding $\sigma \equiv \partial _{k,n}^-$ and $\sigma \equiv \partial _{k,n}^+$ respectively.
\end{proof}

Given a valid term $t$ in some context $\Gamma$, its \emph{canonical identity} is $\i(t) := \i_{\dim \, \ty(t)} \sub{\{\ty(t),t\}}$. Canonical identities can be distinguished from ordinary identities $\i_n \sub \sigma$ because we do not need to give the dimension subscript, as it can be inferred from the term and the supplied context; because we use round brackets; and because we supply a term as an argument, rather than a substitution. A term is an \emph{iterated canonical identity} if it is of the form $\i^k(t)$, by applying this construction $k$ times for $k > 0$. We now show that if a term is definitionally equal to an ordinary identity $\i_n \sub \sigma$, it is definitionally equal to a canonical identity.

\begin{restatable}{lemma}{canonicalid}
\label{lem:canonicalid}
If $t$ is a valid term of $\Gamma$ with $t = \i_n \sub {\ldots , p}$, then $t = \i(p)$.
\end{restatable}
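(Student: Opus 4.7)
The plan is: by transitivity of $=$, it suffices to show $\i_n\sub{\ldots, p} = \i(p)$. We will rewrite the substitution inside $\i_n\sub{\ldots, p}$ into the canonical form $\{\ty(p), p\}$ used in the definition of $\i(p)$, and then match the two terms.

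First, write $\sigma \equiv \langle \ldots, p \rangle$. Since $t$ is valid in $\Gamma$ and $t = \i_n\sub\sigma$, Theorem~\ref{thm:conversion-thm} ensures $\i_n\sub\sigma$ is valid in $\Gamma$, so $\Gamma \vdash \sigma : \uD^n$. The disc-substitution identity recorded in Section~\ref{sec:typetheory} then forces $\sigma \equiv \{S^{n-1}\sub\sigma, d_n\sub\sigma\} \equiv \{S^{n-1}\sub\sigma, p\}$ syntactically. The typing of $\sigma$ also yields $\Gamma \vdash p : S^{n-1}\sub\sigma$, and Proposition~\ref{uniquecohtype} upgrades this to the definitional equality $\ty(p) = S^{n-1}\sub\sigma$.

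The crux is now to replace $S^{n-1}\sub\sigma$ by the definitionally equal $\ty(p)$ inside the bracket construction. For this I will establish a small structural lemma: whenever $A = A'$ and $u = u'$, one has $\{A, u\} = \{A', u'\}$. The proof is by induction on the syntactic shape of $A$, relying on the fact that type equality preserves outer form (both sides are $\star$, or both are arrow types with pairwise equal components, as visible from the rules of Figure~\ref{fig:structural-eq}) together with the structural congruence rules for substitutions. Applied to $\ty(p) = S^{n-1}\sub\sigma$ and the reflexivity $p = p$, the lemma yields $\{\ty(p), p\} = \{S^{n-1}\sub\sigma, p\} \equiv \sigma$, which lifts through structural congruence of coherences to $\i_n\sub\sigma = \i_n\sub{\{\ty(p), p\}}$.

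To conclude, note that $\dim \ty(p) = \dim(S^{n-1}\sub\sigma) = n - 1$, so that the substitution $\{\ty(p), p\}$ lives over $\uD^n$ and the appropriate identity coherence at that dimension is $\i_n$; that is, $\i(p) \equiv \i_n\sub{\{\ty(p), p\}}$ by unfolding the definition of the canonical identity. Chaining then gives $t = \i_n\sub\sigma = \i_n\sub{\{\ty(p), p\}} = \i(p)$. I expect the only non-routine ingredient to be the small structural lemma above, and in particular its base case which rests on the outer-form preservation of type equality; every other step is direct unfolding, structural congruence, and a single appeal to Proposition~\ref{uniquecohtype}.
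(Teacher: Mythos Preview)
Your proof is correct and follows essentially the same route as the paper. The paper works coordinate-by-coordinate: writing $\sigma = [\ldots, p_2, p_2', p_1, p_1', p]$, it uses the substitution typing conditions to obtain $p_k = \src^k(p)$ and $p_k' = \tgt^k(p)$, then applies structural congruence to replace each component, and finally observes that the resulting substitution is \emph{syntactically} $\{\ty(p),p\}$ (since unfolding $\{\ty(p),p\}$ along the definition of $\src,\tgt$ yields exactly this list). You instead package the same content via the identity $\sigma \equiv \{S^{n-1}\sub\sigma, p\}$ together with a congruence lemma for $\{-,-\}$ applied along $\ty(p) = S^{n-1}\sub\sigma$. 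The congruence lemma and its justification via outer-form preservation of type equality are sound; the only difference is that the paper reaches a syntactic identity in the final step where you reach a definitional one, which is immaterial for the conclusion.
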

\begin{proof}
We define $\sigma := [\ldots , p_2^{}, p_2', p_1^{}, p'_1, p]$. Because $\Gamma \vdash \sigma : \uD^n$ is a valid substitution, it must satisfy the substitution typing conditions up to definitional equality, so we conclude for each $0 < k \leq n$ the following:
\begin{align*}
p_k \equiv d_{n-k} \sub \sigma \equiv \src^k(d_n) \sub \sigma = \src^k(d_n \sub \sigma) \equiv \src^k(p)
\\
p'_k \equiv d'_{n-k} \sub \sigma \equiv \tgt^k(d_n) \sub \sigma = \tgt^k(d_n \sub \sigma) \equiv \tgt ^k (p)
\end{align*}
We now reason as follows:
\begin{align*}
t &\equiv \i_n \sub{\ldots,  p_2, p_2', p_1, p_1', p }
\\
& = \i_n \sub{\ldots, \src^2(p), \tgt^2(p), \src(p), \tgt(p),p }
\\
& \equiv \i(p)
\end{align*}
This completes the proof.
\end{proof}

We are now able to prove Theorem~\ref{thm:disctrivialization}.

\begin{proof}[Proof of Theorem~\ref{thm:disctrivialization}]
If $t$ is a variable, we are done. Otherwise, $t$ is a coherence term, and we have that \mbox{$t\equiv \coh(\Gamma:U)[\sigma]$}. By Corollary~\ref{cor:decidedefeq}, we may assume without loss of generality that $t$ is in normal form.

If $t$ is an identity, then by Lemma~\ref{lem:canonicalid} we know $t = \i(u)$. By induction on dimension, $u$ is therefore either a variable or a iterated  identity on a variable, and we are done.

It remains to consider the case that $t\equiv \coh(\Gamma:U)[\sigma]$ is not an identity. We will see that this leads to a contradiction. Since $t$ is in normal form, we know that $t$ is not an A-, B-, C-, D- or E-redex, and we use these facts freely below.

First, note that $\uD^n \vdash \sigma:\Gamma$ maps locally maximal variables of $\Gamma$ to non-identity terms of $\uD^n$ (or else $t$ would be a B-redex), and these terms are in normal form (or else $t$ would be an A\-redex). Hence, by induction on subterms, we may assume that $\sigma$ maps locally-maximal variables to variables. By Lemma~\ref{lem:variablestovariables}, it follows that $\sigma$ is a variable-to-variable substitution, and then from Lemma~\ref{lem:pastingfact} we conclude that $\Gamma$ is a disc context $\uD^k$ with $k\leq n$, and $\sigma$ is a subdisc inclusion. We therefore conclude that $t \equiv \coh (\uD^k : u \to_T v)[\sigma]$.

Suppose $\dim \, t = k$. Then $u,v$ must each use all the variables of the respective boundary context, so by induction on subterms, the only possibility is  $u = d_{k-1}$ and $v = d'_{k-1}$. Since $t$ is not a C-redex, we conclude that $u$ is in normal form (hence $u \equiv d_{k-1}$), $v$ is in normal form (hence $v \equiv d'_{k-1}$), and $T$\ is in normal form (hence $T \equiv S^{k-2}$), and so $t \equiv \coh(\uD^k : d_{k-1} \to_{S^{k-2}} d'_{k-1})[\sigma]$. But then $t$ would be a D-redex, which is a contradiction.

So we must have $\dim \, t > k$.  Then $u,v$ must each use all the variables of $\uD^k$, so by induction on subterms, the only possibility is $u = v = \i^{n-k}(d_k)$. But this would mean that $t$ is an E-redex, again giving a contradiction.
\end{proof}

\section{Rehydration}
\label{sec:rehydration}

In this section, we show that for every $\Cattsupd$ term $t$, we can
produce a $\Catt^\pd$ term $R(N(t))$, its \emph{rehydrated normal
  form}, with the property that $t = R(N(t))$. To see why this is
useful, consider that while $N(t)$ is in normal form, its sources and
targets will not necessarily be. The rehydrated normal form ``fixes
up'' the boundaries recursively, putting them all into rehydrated
normal form. This means that if $u,v$ are composable in $\Cattsupd$,
the terms $R(N(u)),R(N(v))$ are again composable on-the-nose in
$\Catt^\pd$, and indeed are themselves valid in $\Catt^\pd$.

Our strategy is to introduce the following operations on any valid term $t$, simultaneously by mutual recursion:
\begin{itemize}
\item the \textit{rehydration} $R(t)$, which rehydrates all subterms, and then pads the resulting term;
\item the \textit{padding} $P(t)$, which composes a term at its boundaries, ensuring all of its sources and targets are in rehydrated normal form;
\item the \textit{normalizer} $\phi(t)$, a coherence term which provides an explicit equivalence  between $t$ and its rehydrated normal form $R(N(t))$.
\end{itemize}

We begin with the definition of rehydration.
\begin{definition}
For a valid $\Cattsupd$ term $\Gamma \vdash t : A$, its \emph{rehydration} $R(t)$ is defined as follows:
\begin{itemize}
\item $R(x) := x$
\item $R(\coh ( \Theta : U)[\sigma] \big) := P \big( \coh(\Theta:R(U))[R(\sigma)] \big)$
\end{itemize}
On valid types and substitution, rehydration is defined by applying term rehydration to all subterms and subtypes.
\end{definition}

\noindent

We next give the definition of padding, which composes a term with normalizers to put all its boundaries into rehydrated normal form. Since these boundaries have strictly smaller dimension than original term, the mutual recursion  is well-founded. We use the notation $\mathsf{comp_{n,k}}$ for the coherence introduced on page~\pageref{page:comp}, and for simplicity list only the locally-maximal arguments of the substitution.

\begin{definition}
\label{def:padding}
For a valid $\Cattsupd$ term $\Gamma \vdash t : A$, its \emph{padding} $P(t)$ is defined by $P(t) := P_{\dim \, A + 1}(t)$, and then:
\begin{itemize}
\item $P_0(t) := t$
\item $\begin{aligned}[t]
    P_{k+1}(t) := \mathsf{comp}_{\dim \, A + 1, k} \big(&\phi(\src_k(P_k(t))),\\
    &P_{k}(t),\\
    &\phi^{-1}(\tgt_k(P_k(t))) \big)
\end{aligned}
$
\end{itemize}
\end{definition}

\noindent
The constructors $P_k$ each ``fix up'' the corresponding source and targets of their arguments, so $P_k(t)$ is guaranteed to have its $j$-boundary in rehydrated normal form for all $j < k$.

Finally we give the definition of the normalizer data.
\begin{definition}
\label{def:normalizer}
For a valid $\Cattsupd$ term $\Gamma \vdash t : A$,  its \emph{normalizer} $\phi(t)$ and \emph{inverse normalizer} $\phi^{-1}(t)$ are defined as:
\begin{align*}
\phi(t) &:= \coh \big( \Gamma: R(N(t)) \to_{\ty(t)} t \big) [\id_\Gamma]
\\
\phi^{-1}(t) &:= \coh \big( \Gamma: t \to_{\ty(t)} R(N(t))\big) [\id_\Gamma]
\end{align*}
\end{definition}

We now arrive at the main result of this section.

\begin{restatable}{theorem}{rntcatt}
  \label{thm:rntcatt}
  Let $\Gamma$ be a pasting context and suppose we have
  $\Gamma \vdash t : A$ for some $\Cattsupd$ term $t$ such that
  $\supp(t)=\FV(\Gamma)$.  Then the following properties hold:
  \begin{enumerate}
  \item $R(N(t))$ is a valid term in $\Catt$;
  \item $R(N(t)) = t$ in $\Catt_\su$.
  \end{enumerate}
\end{restatable}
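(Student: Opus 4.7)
The plan is to establish (1) and (2) simultaneously by strong induction on $\dim t$, carrying along an auxiliary claim that $\supp(R(N(t))) = \supp(t)$ (so that support conditions on coherence terms are preserved under rehydration). Since the mutually recursive definitions of $R$, $P_k$ and $\phi$ invoke themselves only on data of strictly smaller dimension — the arguments of $\sigma$, the subterms of $U$, and the iterated boundaries $\src_k$, $\tgt_k$ — this recursion is well-founded.

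For the base case, $t$ is a variable; then $N(t) \equiv R(t) \equiv t$, and all three claims hold trivially.

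For the inductive step, write $N(t) \equiv \coh(\Theta : U)[\sigma]$. The arguments of $\sigma$ and the subterms of $U$ all have dimension strictly less than $\dim t$, so the inductive hypothesis gives that $R(U)$ and $R(\sigma)$ are valid in $\Catt$ and equal in $\Catt_\su$ to $U$ and $\sigma$ respectively, with supports preserved. Using Lemma~\ref{lem:support} together with Theorem~\ref{thm:conversion-thm}, this implies that the support condition demanded by the coherence rule of Figure~\ref{fig:terms} for the inner coherence $\coh(\Theta : R(U))[R(\sigma)]$ in $\Catt$ matches that satisfied by $N(t)$, so this inner coherence is a valid $\Catt$ term. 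Each padding stage $P_{k+1}$ then composes $P_k(t)$ with normalizers $\phi(\src_k(P_k(t)))$ and $\phi^{-1}(\tgt_k(P_k(t)))$ via $\mathsf{comp}_{\dim t, k}$; since these boundary terms have dimension $k < \dim t$, the inductive hypothesis makes each normalizer a valid $\Catt$ coherence (in the appropriate boundary pasting subcontext of $\Gamma$) whose source $R(N(s))$ and target $s$ coincide in $\Catt_\su$, so it is an endo-coherence there. Claim (1) then follows by a straightforward structural check, and claim (2) follows by applying \textsc{endo} to each normalizer to reduce it to an identity, then \textsc{prune} and \textsc{disc} to collapse the resulting composites with identities back to $N(t)$; since $N(t) = t$ by Proposition~\ref{standardise}, we conclude $R(N(t)) = t$ in $\Catt_\su$. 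The auxiliary support claim then drops out of Lemma~\ref{lem:support}.

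The main obstacle is verifying, at each padding stage, that $\src_k(P_k(t))$ has full support within its natural $k$-source pasting subcontext $\partial^-_k(\Gamma)$ (and analogously for targets), so that the normalizer can be formed under the full-support clause of the coherence rule rather than the source/target clause — which would otherwise fail because $R(N(s))$ and $s$ must share the same support condition. This reduces to tracking how the boundary operations $\src_k, \tgt_k$ and the inductive construction of $P_k$ interact with $\supp$, starting from the hypothesis $\supp(t) = \FV(\Gamma)$ and arguing by invariant-preservation that each padding step maintains fullness on all relevant boundary pasting subcontexts.
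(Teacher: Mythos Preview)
Your induction scheme is not well-founded as stated. You claim that ``the arguments of $\sigma$ \dots\ all have dimension strictly less than $\dim t$'', but this is false for composition-type coherences (the first clause of Figure~\ref{fig:terms}): in $\mathsf{comp}_2\sub{x,y,f,z,g}$, for instance, the term has dimension $1$ and so do the locally-maximal arguments $f,g$. More generally, whenever $\dim\Theta = \dim t$, the arguments of $\sigma$ corresponding to locally-maximal variables of $\Theta$ have the same dimension as $t$. The paper handles this by inducting simultaneously on dimension \emph{and} on the subterm relation; the subterm induction is what carries the hypothesis through the arguments of~$\sigma$, while the dimension induction is reserved for the boundary terms appearing in the padding and normalizers.

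There is a second, related gap. You assert that ``$R(\sigma)$ [is] valid in $\Catt$'' because the support condition matches, but validity of a substitution in $\Catt$ also requires that adjacent arguments agree \emph{syntactically} on their shared boundaries: for $u,v\in\sigma$ with $\src_k(u) = \tgt_l(v)$ in $\Catt_\su$, one must show $\src_k(R(u)) \equiv \tgt_l(R(v))$ on the nose. This is the whole point of the padding construction, and it is not automatic. The paper establishes it via the key computation $\src_k(R(u)) \equiv R(N(\src_k(u)))$, which in turn relies on property~(2) at strictly smaller dimension---so the two properties genuinely have to be proved together. Your ``straightforward structural check'' elides exactly this step, and the ``main obstacle'' you identify (full support for the normalizer contexts) is comparatively minor next to it.
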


\noindent
We will prove (1) and (2) by simultaneous induction on the structure
of terms as well as their dimension.

\begin{proof}[Proof of Property 1]
  It will be sufficient to prove that $R(t)$ is valid in $\Catt$ for
  any $t$ which is in normal form, since the statement clearly follows
  from this assertion.  We therfore proceed by induction on the
  structure of $t$ assuming that it (and therefore all of its subterms)
  are in normal form.

  If $t$ is a variable then $R(t) = t$ for which the statement is
  clearly true. Otherwise $t = \coh(\Theta:U)[\sigma]$ and we have by
  definition
  \[ R(t) = P(\coh(\Theta:R(U))[R(\sigma)]) \]
  We will first show that the subterm
  \[t' := \coh(\Theta:R(U))[R(\sigma)]\] is itself valid in $\Catt$
  and then argue that applying the padding operation $P$ preserves
  this validity. We first note that since a straightforward induction
  on the definition gives that $\supp(R(t)) = \supp(t)$, it follows
  that $R(U)$ will itself satisfy the free variable condition for
  $\coh$ term formation and that, by the induction hypothesis, $R(U)$
  is a valid \Catt type.

  Now, the terms comprising the substituion $R(\sigma)$ are
  themselves also valid by induction.  This is not, in itself, enough
  to ensure that $R(\sigma)$ is a valid \Catt substitution. The
  additional conditions which must be verified are all of the form
  \[ \src_k(R(u)) \equiv \tgt_l(R(v)) \] for inferred sources and
  targets of some terms $u, v \in \sigma$.  But in any such case, we
  must have $ \src_k(u) = \tgt_l(v) $ in $\Catt_\su$ by the assumption
  that the original term $t$ was valid. It follows that
  \[ R(N(\src_k(u))) \equiv R(N(\tgt_l(v))) \] using that $u$ and $v$
  are in normal form.

  We now claim that we have the equations
  \begin{align*}
    \src_k(R(u)) &\equiv R(N(\src_k(u)))  \\
    \tgt_l(R(u)) &\equiv R(N(\tgt_k(v)))
  \end{align*}
  which, combined with the observation above will complete the claim
  that $R(\sigma)$ is a valid $\Catt$ substitution.

  To see this, we calculate:
  \begin{align*}
    \src(R(u)) &\equiv \src \Big(
                 \begin{aligned}[t]
                   \comp_{\dim \, A +1, \dim A} \big(&\phi(\src(P_{\dim A}(u))),\\
                   &P_{\dim A}(N(t)),\\
                   &\phi^{-1}(\tgt(P_{\dim A}(u))) \big) \Big)
\end{aligned}
    \\
              &\equiv \src (\phi (\src(P_{\dim A}(u)))) \\
               &\equiv \src
                 \begin{aligned}[t]
(\coh (\partial^-(\Gamma) {}:{} &R(N(\src(P_{\dim A}(u))))\\ \to{} &\src(P_{\dim A}(u)) )[\id_\Gamma])
\end{aligned}
    \\
               &\equiv R(N(\src(P_{\dim A}(u)))) \\
              &\equiv R(N(\src(N(P_{\dim A}(u))))) \\
                &\equiv R(N(\src(u)))
  \end{align*}
  This second to last step follows since
$ N(\src(N(t))) \equiv N(\src(t))$ as a consequence of definitional equality preserving typing. The last step follows by an application
  of property (2) of the theorem, which is valid in this case because $\src(u)$
  must be of strictly smaller dimension.  The result for $\src_k$
  follows by a simple induction on $k$.  The case for $\tgt_k$ and $v$
  is similar.

  We have now shown that $t'$ is indeed a valid $\Catt$ term and we
  note for below that it also follows that all of its inferred sources
  and targets are valid $\Catt$ terms as well.

  We now argue that $P(t')$ is valid in $\Catt$.  We prove this by
  induction on the parameter $k$ of the padding construction:
  \begin{align*}
    P_{k+1}(t') \equiv \mathsf{comp_{\dim \, A + 1, k}} \big(&\phi(\src_k(P_k(t'))),\\
                                                             &P_{k}(t'),\\
                                                             &\phi^{-1}(\tgt_k(P_k(t'))) \big)
  \end{align*}
  The composition operations $\mathsf{comp_{d,k}}$ are certainly valid
  \Catt terms, as elementary syntactic constructions. By induction,
  the subterm $P_k(t')$ is valid in \Catt. The normalizers in this
  expression are being computed for terms
  $\src_k(P_k(t'))$ and $\tgt_k(P_k(t'))$.  While these terms are not
  themselves in rehydrated normal form, their inferred types
  $\ty(\src_k(P_k(t')))$, $\ty(\tgt_k(P_k(t')))$ \textit{will} be in
  rehydrated normal form, thanks to the remark following
  Definition~\ref{def:padding}.

  To complete the proof, we must therefore show that if $u$ is some
  valid term with $\ty(u)$ in rehydrated normal form, then $\phi(u)$
  and $\phi^{-1}(u)$ are valid in $\Catt$. We consider the definition
  of $\phi(u)$:
  $$\phi(u) := \coh \big( \Gamma: R(N(u)) \to_{\ty(u)} u \big) [\id_\Gamma]$$
  Since in all cases of interest $\dim \, u < \dim \, t$, we know by
  induction on dimension that $R(N(u))$ is valid in \Catt. We know $u$
  arises as an inferred source or target of $t'$, and hence is valid
  in \Catt. We also know $\ty(u)$ is already in rehydrated normal
  form, and hence $\ty(u) \equiv \ty(R(N(u)))$.  So $\phi(u)$ is valid
  in \Catt, as is~$\phi^{-1}(u)$.
\end{proof}

\begin{proof}[Proof of Property 2]
  For a variable, this is immediate, and we therefore focus on
  coherence terms.  We may assume that the result holds on all terms
  $u$ of strictly smaller dimension than~$t$. This allows us to show
  that such terms have normalizers which themselves normalize to
  identities:
  \begin{align*}
    \phi(u) &\equiv \coh (\Gamma : u \to_{\ty(u)} R(N(u)))[\id_\Gamma] \\
            &\xsquig C _{\rt} \coh(\Gamma: N(u) \to_{N(\ty(u))} N(u))[\id_\Gamma] \\
            & \xsquig E \i_{\dim \, t} \sub{\{N(\ty(u)), N(u)\}}
  \end{align*}
  Note that we use the equation $N(R(N(u))) \equiv N(u)$ in the first step,
  this being a consequence of the conclusion of the theorem applied to
  $u$, which is of smaller dimension than $t$.

  Now, since paddings are constructed from the normalizers of terms of
  strictly smaller dimension (namely the sources and targets of $t$),
  we may use the previous result to show that paddings normalize to to
  the term being padded.  This we prove by induction on the parameter
  $k$ in the definition of the padding composite $P_k$:
  \begin{align*}
    P_{k+1}(t) &\equiv \mathsf{comp_{n, k}}(\phi(\src_k(t)), P_k(t), \phi(\tgt_k(t))) \\
               &\xsquig A _\t \mathsf{comp_{n,k}}(\i_k \sub{\tau }, t, \i_k \sub{\tau }) \\
               &\xsquig B_\t \coh(D^n:d_n \to_{S^{n-1}}d_n)[t] \\*
               &\xsquig D t
  \end{align*}
  where $n:=\dim \, t$.

  To finish the claim, we argue by induction on the structure of
  $t$. As in the proof of (1), we may as well suppose that $t$ is in
  normal form.  Moreover, we may also use the fact that $R(t)$ is
  valid in $\Catt$ and therfore in $\Catt_\su$ to promote the
  reductions described above to definitional equalities. Then
  (ignoring the trivial case of variables) we have:
  \begin{align*}
    R(t) &\equiv R(\coh ( \Theta : U)[\sigma] \big) \\
         &\equiv P \big( \coh(\Theta:R(U))[R(\sigma)] \big) \\
         &=  \coh(\Theta:R(U))[R(\sigma)] \\
         &= \coh(\Theta:U)[\sigma]
  \end{align*}
  This completes the proof.
\end{proof}

\noindent
With Theorem~\ref{thm:rntcatt} in hand, we are able to conclude with the result promised
in Section~\ref{sec:models}.

\begin{restatable}{corollary}{strictunit}
The functor $K^*:\cCat^{\mathrm{su}}_{\infty} \to \cCat_{\infty}$ is fully faithful.
\end{restatable}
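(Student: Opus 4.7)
The plan is to split the argument into faithfulness and fullness. Faithfulness follows immediately from the observation that $K$ is the identity on objects: both $K^*X$ and $X$ share the same underlying family of sets indexed by pasting contexts, and the components of $K^*\eta$ coincide with those of $\eta$. Two natural transformations satisfying $K^*\eta_1 = K^*\eta_2$ therefore agree componentwise, and are equal.

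For fullness, given a natural transformation $\tilde\eta : K^*X \to K^*Y$, I will propose the components $\eta_\Gamma := \tilde\eta_\Gamma$ at every pasting context $\Gamma$ and verify naturality of $\eta$ with respect to every $\Cattsupd$\-morphism $\Gamma \vdash \sigma : \Delta$. Whenever $\sigma$ is $\Cattsupd$\-equivalent to some $\Catt$\-substitution $\sigma'$, the required naturality square for $\eta$ at $\sigma$ coincides with the naturality square of $\tilde\eta$ at $\sigma'$, which holds by assumption. Fullness of $K^*$ therefore reduces to \emph{fullness of $K$ on morphisms}: every $\Cattsupd$\-substitution $\Gamma \vdash \sigma : \Delta$ between pasting contexts is $\Cattsupd$\-equivalent to a $\Catt$\-substitution.

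To establish this, I will apply Theorem~\ref{thm:rntcatt} componentwise: writing $\sigma = \langle t_1, \ldots, t_n \rangle$, set $\sigma' := \langle R(N(t_1)), \ldots, R(N(t_n)) \rangle$. Theorem~\ref{thm:rntcatt}(2) immediately gives $R(N(t_i)) = t_i$ in $\Cattsupd$, and hence $\sigma' = \sigma$ in $\Cattsupd$. Theorem~\ref{thm:rntcatt}(1) yields $\Catt$\-validity of each $R(N(t_i))$ in isolation; to promote this to $\Catt$\-validity of the substitution $\sigma'$ as a whole, the syntactic type-matching conditions between consecutive components must be checked. These follow from the identity $\src_k(R(N(u))) \equiv R(N(\src_k(u)))$ (and dually for $\tgt$) already established in the course of proving Theorem~\ref{thm:rntcatt}(1), together with the $\Cattsupd$\-equalities between boundary terms inherited from validity of $\sigma$, and Corollary~\ref{cor:decidedefeq}.

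The main obstacle is that Theorem~\ref{thm:rntcatt} requires its input term to have full support in an ambient pasting context, whereas an individual component $t_i$ of the substitution need not have this property. I expect to address this by restricting $t_i$ to the sub-pasting-context cut out by $\supp(t_i)$, where it has full support by construction, applying Theorem~\ref{thm:rntcatt} there, and weakening the resulting $\Catt$\-term back into $\Gamma$. This relies on the standard structural fact that the support of any valid term in a pasting context is the variable set of a sub-pasting-context, a routine but technical property of the pasting calculus that I expect to be the main syntactic content of the proof.
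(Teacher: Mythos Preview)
Your overall strategy is the same as the paper's: both arguments reduce fully-faithfulness of $K^*$ to the fact that $K$ is bijective on objects and full, and both extract fullness of $K$ from rehydration. The paper packages the first reduction into the elementary lemma that precomposition with an essentially-surjective full functor is fully faithful; you unpack that lemma by hand, which is fine and amounts to the same thing.

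Where you diverge is in how you obtain fullness of $K$. The paper does not apply Theorem~\ref{thm:rntcatt} componentwise. Instead it observes that the \emph{proof} of Theorem~\ref{thm:rntcatt}(1) already establishes, in the course of showing that $R(t')$ is valid for $t' \equiv \coh(\Theta:U)[\sigma]$, that $R(N(\sigma))$ is a valid $\Catt$ substitution, with the type-matching conditions verified via the identity $\src_k(R(u)) \equiv R(N(\src_k(u)))$ you mention. So the substitution-level statement is available directly, and the support hypothesis on individual components never arises.

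Your proposed workaround---restrict each $t_i$ to the sub-pasting-context $\Gamma_i$ cut out by its support, rehydrate there, then weaken back---is not obviously sound as stated. Rehydration depends on the ambient pasting context through the normalizers $\phi$ in the padding construction, so $R_{\Gamma_i}(N(t_i))$ and $R_{\Gamma_j}(N(t_i))$ need not be syntactically equal when $\Gamma_i \subsetneq \Gamma_j$. But exactly this kind of equality is what you need for the type-matching step: if $\src(t_j) = t_i$ in $\Cattsupd$, you must check $\src(R_{\Gamma_j}(N(t_j))) \equiv R_{\Gamma_i}(N(t_i))$ in $\Catt$, and the left-hand side unfolds to $R_{\Gamma_j}(N(t_i))$, not $R_{\Gamma_i}(N(t_i))$. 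You would need an additional compatibility lemma for rehydration across nested pasting contexts, on top of the combinatorial fact that supports are sub-pasting-contexts. None of this is needed if you simply cite the substitution-level conclusion already inside the proof of Theorem~\ref{thm:rntcatt}.
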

\begin{proof}
  It is a well-known elementary observation (cf~\cite[p.47]{BS}) that
  if a functor $F: \cat{A} \to \cat{B}$ is essentially surjective on
  objects and full, then for any category $\cat{C}$, the induced
  precomposition functor
  $ - \circ F: \Hom(\cat{B}, \cat{C}) \to \Hom(\cat{A}, \cat{C})$
  between functor categories is fully faithful.

  Now, we have already seen that the functor $K$ is the identity on
  objects.  Moreover, the argument in the proof of
  Theorem~\ref{thm:rntcatt}, shows that for any $\sigma$ which is a
  valid substitution in $\Catt_\su$, $R(N(\sigma))$ is a valid
  substitution in $\Catt$ and moreover that $R(N(\sigma)) = \sigma$.
  This shows that the functor $K$ is full.

  As $\cCat^{\mathrm{su}}_{\infty}$ and $\cCat_{\infty}$ are
  full subcategories of the presheaf categories
  $\Hom((\Cattsupd)^\op, \cat{Set})$ and
  $\Hom((\Catt^\pd)^\op, \cat{Set})$, respectively, and since
  precomposition with the full functor $K: \Catt^\pd \to \Cattsupd$
  preserves these full subcategories, the result follows.
\end{proof}

\noindent
In other words, it is a \emph{property} of a given $\infty$-category
to be strictly unital; if there exists a lift of a given
$\infty$-category $\cat{C} \in \cCat_{\infty}$ to a strictly unital
one $\cat{C}' \in \cCat^{\mathrm{su}}_{\infty}$, it is necessarily unique up to isomorphism, and hence a weak $\infty$\-category can admit at most one strictly
unital structure.

\newpage
\bibliographystyle{plainurl}
\bibliography{references_arxiv}

\begin{thebibliography}{10}

\bibitem{Ara}
Dimitri Ara.
\newblock Sur les $\infty$-groupoides de {G}rothendieck et une variante
  $\infty$-cat\'egorique.
\newblock 2010.
\newblock Ph.D. dissertation, Universit\'e Paris Diderot.

\bibitem{baez2010physics}
John Baez and Mike Stay.
\newblock Physics, topology, logic and computation: a rosetta stone.
\newblock In {\em New Structures for Physics}, pages 95--172. Springer, 2010.
\newblock \href {http://arxiv.org/abs/0903.0340} {\path{arXiv:0903.0340}},
  \href {http://dx.doi.org/10.1007/978-3-642-12821-9_2}
  {\path{doi:10.1007/978-3-642-12821-9_2}}.

\bibitem{BS}
John~C. Baez and Michael Shulman.
\newblock Lectures on {$n$}-categories and cohomology.
\newblock In {\em Towards Higher Categories}, volume 152 of {\em IMA Vol. Math.
  Appl.}, pages 1--68. Springer, New York, 2010.
\newblock \href {http://arxiv.org/abs/math/0608420}
  {\path{arXiv:math/0608420}}, \href
  {http://dx.doi.org/10.1007/978-1-4419-1524-5_1}
  {\path{doi:10.1007/978-1-4419-1524-5_1}}.

\bibitem{Batanin98}
Michael Batanin.
\newblock Monoidal globular categories as a natural environment for the theory
  of weak {$n$}-categories.
\newblock {\em Adv. Math.}, 136(1):39--103, 1998.
\newblock \href {http://dx.doi.org/10.1006/aima.1998.1724}
  {\path{doi:10.1006/aima.1998.1724}}.

\bibitem{Batanin2013}
Michael Batanin, Denis-Charles Cisinski, and Mark Weber.
\newblock Multitensor lifting and strictly unital higher category theory.
\newblock {\em Theory and Applications of Categories}, 28:804--856, 2013.
\newblock \href {http://arxiv.org/abs/1209.2776} {\path{arXiv:1209.2776}}.

\bibitem{Benjamin}
Thibaut Benjamin.
\newblock Type theory for weak $\omega$-categories.
\newblock 2020.
\newblock Ph.D. dissertation.

\bibitem{boardman2006homotopy}
John~Michael Boardman and Rainer~M Vogt.
\newblock {\em Homotopy invariant algebraic structures on topological spaces},
  volume 347.
\newblock Springer, 1973.
\newblock \href {http://dx.doi.org/10.1007/bfb0068547}
  {\path{doi:10.1007/bfb0068547}}.

\bibitem{brunerie2016homotopy}
Guillaume Brunerie.
\newblock On the homotopy groups of spheres in homotopy type theory.
\newblock 2016.
\newblock \href {http://arxiv.org/abs/1606.05916} {\path{arXiv:1606.05916}}.

\bibitem{cohen2016cubical}
Cyril Cohen, Thierry Coquand, Simon Huber, and Anders M{\"o}rtberg.
\newblock {Cubical Type Theory: A Constructive Interpretation of the Univalence
  Axiom}.
\newblock In Tarmo Uustalu, editor, {\em 21st International Conference on Types
  for Proofs and Programs (TYPES 2015)}, volume~69 of {\em Leibniz
  International Proceedings in Informatics (LIPIcs)}, pages 5:1--5:34,
  Dagstuhl, Germany, 2018. Schloss Dagstuhl--Leibniz-Zentrum fuer Informatik.
\newblock Available from:
  \url{http://drops.dagstuhl.de/opus/volltexte/2018/8475}, \href
  {http://arxiv.org/abs/1611.02108} {\path{arXiv:1611.02108}}, \href
  {http://dx.doi.org/10.4230/LIPIcs.TYPES.2015.5}
  {\path{doi:10.4230/LIPIcs.TYPES.2015.5}}.

\bibitem{Finster2017}
Eric Finster and Samuel Mimram.
\newblock A type-theoretical definition of weak $\omega$\-categories.
\newblock In {\em Proceedings of LICS 2017}, 2017.
\newblock \href {http://arxiv.org/abs/1706.02866} {\path{arXiv:1706.02866}},
  \href {http://dx.doi.org/10.1109/lics.2017.8005124}
  {\path{doi:10.1109/lics.2017.8005124}}.

\bibitem{GPS}
R.~Gordon, A.~J. Power, and Ross Street.
\newblock Coherence for tricategories.
\newblock {\em Mem. Amer. Math. Soc.}, 117(558):vi+81, 1995.
\newblock \href {http://dx.doi.org/10.1090/memo/0558}
  {\path{doi:10.1090/memo/0558}}.

\bibitem{hofmann1998groupoid}
Martin Hofmann and Thomas Streicher.
\newblock The groupoid interpretation of type theory.
\newblock {\em Twenty-five years of constructive type theory (Venice, 1995)},
  36:83--111, 1998.
\newblock \href {http://dx.doi.org/10.1093/oso/9780198501275.003.0008}
  {\path{doi:10.1093/oso/9780198501275.003.0008}}.

\bibitem{kozen2012automata}
Dexter~C Kozen.
\newblock {\em Automata and computability}.
\newblock Springer Science \& Business Media, 1997.
\newblock \href {http://arxiv.org/abs/10.1007/978-1-4612-1844-9}
  {\path{arXiv:10.1007/978-1-4612-1844-9}}.

\bibitem{leinster}
Tom Leinster.
\newblock {\em Higher operads, higher categories}, volume 298 of {\em London
  Mathematical Society Lecture Note Series}.
\newblock Cambridge University Press, Cambridge, 2004.
\newblock \href {http://arxiv.org/abs/math/0305049}
  {\path{arXiv:math/0305049}}, \href
  {http://dx.doi.org/10.1017/CBO9780511525896}
  {\path{doi:10.1017/CBO9780511525896}}.

\bibitem{lurie2009higher}
Jacob Lurie.
\newblock {\em Higher topos theory}.
\newblock Princeton University Press, 2009.
\newblock \href {http://arxiv.org/abs/math/0608040}
  {\path{arXiv:math/0608040}}, \href {http://dx.doi.org/10.1515/9781400830558}
  {\path{doi:10.1515/9781400830558}}.

\bibitem{mac2013categories}
Saunders Mac~Lane.
\newblock {\em Categories for the working mathematician}, volume~5.
\newblock Springer Science \& Business Media, 1971.
\newblock \href {http://dx.doi.org/10.1007/978-1-4612-9839-7}
  {\path{doi:10.1007/978-1-4612-9839-7}}.

\bibitem{Maltsiniotis}
Georges Maltsiniotis.
\newblock {G}rothendieck $\infty$-groupoids, and still another definition of
  $\infty$-categories.
\newblock 2010.
\newblock \href {http://arxiv.org/abs/1009.2331} {\path{arXiv:1009.2331}}.

\bibitem{Syllepsis}
Kristina Sojakova and Alex Kavvos.
\newblock Syllepsis in homotopy type theory.
\newblock In {\em Proceedings of LICS 2022}, 2022.
\newblock \href {http://dx.doi.org/10.1145/3531130.3533347}
  {\path{doi:10.1145/3531130.3533347}}.

\bibitem{hottbook}
The {Univalent Foundations Program}.
\newblock {\em Homotopy Type Theory: Univalent Foundations of Mathematics}.
\newblock \url{https://homotopytypetheory.org/book}, Institute for Advanced
  Study, 2013.

\bibitem{whitehead2012elements}
George~W Whitehead.
\newblock {\em Elements of homotopy theory}, volume~61.
\newblock Springer Science \& Business Media, 2012.
\newblock \href {http://dx.doi.org/10.1007/978-1-4612-6318-0}
  {\path{doi:10.1007/978-1-4612-6318-0}}.

\end{thebibliography}

\newpage
\appendix

\section{The Eckmann-Hilton term}
\label{app:eh}

Here we give the normalized  Eckmann-Hilton term in $\Catt_\su$, as discussed in Example~\ref{ex:eh}. This is available at the following link:

\vspace{3pt}

\noindent
{\scriptsize\url{http://github.com/ericfinster/catt.io/tree/v0.1}}

\vspace{-3pt}
\noindent
{\scriptsize\verb|examples/example_4_3_cattsu.catt|}

\vspace{5pt}
\noindent
For readability, we omit implicit arguments in the following.

{\scriptsize
\begin{align*}
  &\coh((x : \star) (y : \star) (f : x \shortrightarrow y) (g : x \shortrightarrow y) (\alpha : f \shortrightarrow g)\\
  &\hspace{5.4em}(z : \star) (h : y \shortrightarrow z) (i : y \shortrightarrow z) (\beta : h \shortrightarrow i) :\\
  &\quad\coh((x : \star) (y : \star) (f : x \shortrightarrow y) (g : x \shortrightarrow y) (\alpha : f \shortrightarrow g)\\
  &\hspace{9.6em}(h : x \shortrightarrow y) (\beta : g \shortrightarrow h) : f \shortrightarrow h)[\\
  &\quad\quad\coh((x : \star) (y : \star) (f : x \shortrightarrow y) (g : x \shortrightarrow y) (\alpha : f \shortrightarrow g) (z : \star) (h : y \shortrightarrow z) :\\
  &\quad\quad\quad\coh((x : \star) (y : \star) (f : x \shortrightarrow y) (z : \star) (g : y \shortrightarrow z) : x \shortrightarrow z)[f,h]\\
  &\quad\quad\quad\shortrightarrow\\
  &\quad\quad\quad\coh((x : \star) (y : \star) (f : x \shortrightarrow y) (z : \star) (g : y \shortrightarrow z) : x \shortrightarrow z)[g,h])[\alpha,h],\\
  &\quad\quad\coh((x : \star) (y : \star) (f : x \shortrightarrow y) (z : \star) (g : y \shortrightarrow z) (h : y \shortrightarrow z) (\alpha : g \shortrightarrow h):\\
  &\quad\quad\quad\coh((x : \star) (y : \star) (f : x \shortrightarrow y) (z : \star) (g : y \shortrightarrow z) : x \shortrightarrow z)[f,g]\\
  &\quad\quad\quad\shortrightarrow\\
  &\quad\quad\quad\coh((x : \star) (y : \star) (f : x \shortrightarrow y) (z : \star) (g : y \shortrightarrow z) : x \shortrightarrow z)[f,h])[g,\beta]\\
  &\quad]\\
  &\quad\shortrightarrow\\
  &\quad\coh((x : \star) (y : \star) (f : x \shortrightarrow y) (g : x \shortrightarrow y) (\alpha : f \shortrightarrow g)\\
  &\hspace{9.6em}(h : x \shortrightarrow y) (\beta : g \shortrightarrow h) : f \shortrightarrow h)[\\
  &\quad\quad\coh((x : \star) (y : \star) (f : x \shortrightarrow y) (z : \star) (g : y \shortrightarrow z) (h : y \shortrightarrow z) (\alpha : g \shortrightarrow h) :\\
  &\quad\quad\quad\coh((x : \star) (y : \star) (f : x \shortrightarrow y) (z : \star) (g : y \shortrightarrow z) : x \shortrightarrow z)[f,g]\\
  &\quad\quad\quad\shortrightarrow\\
  &\quad\quad\quad\coh((x : \star) (y : \star) (f : x \shortrightarrow y) (z : \star) (g : y \shortrightarrow z) : x \shortrightarrow z)[f,h])[f,\beta],\\
  &\quad\quad\coh((x : \star) (y : \star) (f : x \shortrightarrow y) (g : x \shortrightarrow y) (\alpha : f \shortrightarrow g) (z : \star) (h : y \shortrightarrow z) :\\
  &\quad\quad\quad\coh((x : \star) (y : \star) (f : x \shortrightarrow y) (z : \star) (g : y \shortrightarrow z) : x \shortrightarrow z)[f,h]\\
  &\quad\quad\quad\shortrightarrow\\
  &\quad\quad\quad\coh((x : \star) (y : \star) (f : x \shortrightarrow y) (z : \star) (g : y \shortrightarrow z) : x \shortrightarrow z)[g,h])[\alpha,i]\\
  &\quad])[a,b]
\end{align*}
}

\end{document}